\documentclass[11pt]{amsart}

\usepackage[utf8]{inputenc}
\usepackage[T1]{fontenc}
\usepackage[french,english]{babel}

\usepackage{lmodern}
\usepackage{newtxtext}

\usepackage[plainpages=false,colorlinks,linkcolor=bleuFonce,citecolor=rougeFonce,urlcolor=vertFonce,breaklinks]{hyperref}

\usepackage{graphicx}	

\usepackage{tikz}
\usetikzlibrary{patterns}

\usepackage[normalem]{ulem}

\usepackage{color}
\definecolor{vertFonce}	{rgb}{0,0.5,0}
\definecolor{numLignes}	{rgb}{0.17,0.57,0.7}	
\definecolor{gris}		{rgb}{0.5,0.5,0.5}
\definecolor{grisFonce}	{rgb}{0.2,0.2,0.2}
\definecolor{orange}	{rgb}{1,0.65,0.31}		
\definecolor{orangeFonce}{rgb}{1,0.4,0}
\definecolor{bleuFonce}	{rgb}{0,0,0.4}
\definecolor{rougeFonce}{rgb}{0.3,0,0}
\definecolor{rougeWord}	{rgb}{0.5,0,0}
\definecolor{vertClair}	{rgb}{0.8,1,0.8}
\definecolor{rougeClair}{rgb}{1,0.5,0.5}
\definecolor{violet}	{rgb}{0.5,0,0.5}

\usepackage{pict2e}
\usepackage{multido}

\usepackage{amsfonts,amssymb,amsthm,amsmath} 
\usepackage{amsaddr}	
\usepackage{dsfont}		
\usepackage{mathrsfs}
\usepackage{yfonts}
\usepackage{cancel}
 
\theoremstyle{plain}
\newtheorem{theorem}{Theorem}[section]
\newtheorem{lem}[theorem]{Lemma}
\newtheorem{cor}[theorem]{Corollary}
\newtheorem{prop}[theorem]{Proposition}

\theoremstyle{definition}

\newtheorem{remark}[theorem]{Remark}

\newcommand		{\N}		{\mathbb N}			
\newcommand		{\RR}		{\mathbb R}			
\newcommand		{\R}		{\RR}

\newcommand		{\Rd}		{\R^d}
\newcommand		{\Rdd}		{\R^{2d}}

\renewcommand	{\SS}		{\mathds S}			
\renewcommand	{\L}		{\mathcal L}		
\newcommand		{\cW}		{\mathcal W}		

\newcommand		{\cL}		{\mathcal L}		

\newcommand		\sfH		{\mathsf H}			

\newcommand		\sfL		{\mathsf L}			

\newcommand		{\lt}			{\left}				%
\newcommand		{\rt}			{\right}			%
\renewcommand	{\(}			{\lt(}
\renewcommand	{\)}			{\rt)}
\newcommand		{\floor}[1]		{\lt\lfloor{#1}\rt\rfloor}
\newcommand		{\ceil}[1]		{\lt\lceil{#1}\rt\rceil}
\newcommand		{\bangle}[1]	{\lt\langle #1\rt\rangle}
\newcommand		{\weight}[1]	{\bangle{#1}}	

\newcommand		{\com}[1]		{\lt[{#1}\rt]}		

\newcommand		{\n}[1]			{\lt\lvert #1 \rt\rvert}

\newcommand		{\snrm}[1]		{\lVert #1 \rVert}			
\newcommand		{\nrm}[1]		{\lt\lVert #1\rt\rVert}
\newcommand		{\bnrm}[1]		{\big\lVert #1\big\rVert}	
\newcommand		{\Nrm}[2]		{\nrm{#1}_{#2}}
\newcommand		{\sNrm}[2]		{\snrm{#1}_{#2}}
\newcommand		{\bNrm}[2]		{\bnrm{#1}_{#2}}

\newcommand		{\indic}	{\mathds{1}}		

\renewcommand		{\d}		{\mathop{}\!\mathrm{d}}		

\newcommand			{\grad}		{\nabla}

\newcommand			{\Dx}		{\nabla_x} 
\newcommand			{\Dv}		{\nabla_\xi}
\newcommand			{\conj}[1]	{\overline{#1}}		
\newcommand			{\Id}		{\mathrm{Id}}		

\DeclareMathOperator{\cF}		{\mathcal{F}}		
\DeclareMathOperator{\curl}		{curl}				
\DeclareMathOperator{\tr}		{Tr}				

\newcommand		{\Tr}[1]		{\tr\!\( #1 \)}		

\newcommand		{\intd}			{\int_{\Rd}}
\newcommand		{\intdd}		{\int_{\Rdd}}

\newcommand		{\jj}			{\mathrm{j}}	

\newcommand		{\cC}			{\mathcal{C}}

\usepackage{braket}

\newcommand		{\hd}		{h^d}

\newcommand		{\op}		{{\boldsymbol{\rho}}}

\newcommand		{\opv}		{{\boldsymbol{v}}}

\newcommand		{\opgam}	{{\boldsymbol{\gamma}}}	

\newcommand		{\opp}		{\boldsymbol{p}}
\newcommand		{\opz}		{\mathbf{z}}

\newcommand		{\Dh}		{\boldsymbol{\nabla}}	
\newcommand		{\Dhx}[1]	{\Dh_{\!x} #1}			
\newcommand		{\Dhv}[1]	{\Dh_{\!\xi} #1}		
\newcommand		{\Dhz}[1]	{\Dh_{\!z} #1}	
\newcommand		{\Dhxj}[1]	{\Dh_{\!x_\jj} #1}		
\newcommand		{\Dhvj}[1]	{\Dh_{\!\xi_\jj} #1}	

\newcommand		{\opvmag}	{\opv} 	

\title[\textsc{Semiclassical Regularity at Low Temperature}]
{\Large Commutator Estimates for Low-Temperature Fermi Gases}

\author[\textsc{J. Chong}]{\vspace{-10pt}\textsc{Jacky J. Chong}}
\address[J. Chong]{\small\vspace{-10pt}Department of Mathematics and Statistics, Beijing Institute of Technology, Beijing, China}
\email{jwchong@bit.edu.cn}

\author[\textsc{L. Lafleche}]{\vspace{-20pt}\textsc{Laurent Lafleche}}
\address[L. Lafleche]{\small\vspace{-10pt}Unit\'e de Math\'ematiques pures et appliqu\'ees, \\\ \'Ecole Normale Supérieure de Lyon, Lyon, France}
\email{laurent.lafleche@ens-lyon.fr}

\author[\textsc{J. Lee}]{\vspace{-20pt}\textsc{Jinyeop Lee}}
\address[J. Lee]{\small\vspace{-10pt}Department of Applied Mathematics,\\Kyung Hee University, Yongin-si, Gyeonggi-do, South Korea}
\email{jinyeop.lee@khu.ac.kr}

\author[\textsc{C. Saffirio}]{\vspace{-20pt}\textsc{Chiara Saffirio}}
\address[C. Saffirio]{\small\vspace{-10pt} Mathematisches Institut\\ Universit\"{a}t Freiburg, 79104 Freiburg im Breisgau, Germany\\ and\\  Departement Mathematik und Informatik\\Universit\"{a}t Basel, Spiegelgasse 1, 4051 Basel, Switzerland
}
\email{chiara.saffirio@unibas.ch}

\subjclass[2020]{81S30 $\cdot$ 47B47 $\cdot$ 82B10 (35J10, 81Q10, 81Q20)}
\keywords{Fermi--Dirac distribution, semiclassical limit, thermal equilibria, magnetic field, Schatten norms}

\begin{document}

\begin{abstract}
    We investigate the semiclassical regularity of thermal equilibria in the presence of a harmonic potential at low temperature; that is, we obtain the asymptotic behavior of the Schatten norms of commutators of the one-body operators associated with these equilibria and the position and momentum operators. We also obtain upper bounds in the magnetic field case for the Fock--Darwin Hamiltonian. Our estimates, in particular, allow us to observe several regimes depending on the joint behavior of the Planck constant, the temperature, and the strength of the magnetic field.
\end{abstract}

\begingroup
\def\uppercasenonmath#1{} 
\let\MakeUppercase\relax 
\maketitle
\thispagestyle{empty} 
\endgroup

\bigskip

\renewcommand{\contentsname}{\centerline{Table of Contents}}
\setcounter{tocdepth}{2}	
\tableofcontents


\section{Introduction}

    A system of non-interacting fermions in thermal equilibrium is fully characterized by the temperature $T$, the chemical potential $\mu$, and the single-particle Hamiltonian $\sfH$. The Fermi--Dirac distribution
    \begin{equation*}
        F_{\beta,\mu}(E) = (e^{\beta(E - \mu)} + 1)^{-1}, \qquad \beta = 1/(k_B T) \, ,
    \end{equation*}
    where we set the Boltzmann constant $k_B = 1$, determines the occupation of each single-particle eigenstate of $\sfH$. Equivalently, the associated one-particle density operator is given by
    \begin{equation*}
        \opgam_{\beta, \mu} = F_{\beta,\mu}(\sfH) \, .
    \end{equation*}
    The zero-temperature limit of the Fermi--Dirac distribution is the indicator function
    \begin{equation*}
        F_{\infty,\mu}(E) := \indic_{E\leq\mu} \, .
    \end{equation*}
    In this case, the associated one-particle density operator $\opgam_{\infty,\mu}$ is the projection onto the eigenspace associated with the non-positive eigenvalues of $\sfH-\mu$.
    
    The aim of this paper is to investigate the semiclassical regularity of the operator $\opgam_{\beta,\mu}$, the size of the Schatten norms of its commutators with the position and momentum operators, for different regimes of the parameters $\beta>0$, $\mu$, and the Planck constant
    \begin{equation*}
        h = 2\pi\,\hbar \,.
    \end{equation*}
    To obtain sharp asymptotics in terms of all these parameters, we focus on the specific example of the harmonic oscillator
    \begin{equation*}
        \sfH = -\hbar^2\Delta + \n{x}^2.
    \end{equation*}
    
    The Fermi--Dirac distribution also induces a classical function on phase space $\Rdd = \Rd_x\times \Rd_\xi$ defined by
    \begin{equation*}
        f_{\beta, \mu}(x, \xi) = F_{\beta,\mu}(H(x, \xi)) \, ,
    \end{equation*}
    where
    \begin{equation*}
        H(x,\xi) = \n{\xi}^2 + \n{x}^2
    \end{equation*}
    denotes the classical Hamiltonian. Together, these objects determine all thermodynamic and statistical properties of the gas.
    
    We also study the case where there is a magnetic field. In this case, one restricts to $d=3$ and replaces $\n{\xi}^2$ by $\n{\xi-A}^2$ for the kinetic energy part of the classical Hamiltonian, where $A = A(x) \in \R^3$ is the vector potential; that is, the magnetic field is given by
    \begin{equation*}
        B = \curl A \, .
    \end{equation*}
	We will restrict ourselves to the case of a constant, but possibly strong, magnetic field of the form $B=2\,b\,e_3$, with direction $e_3=(0,0,1)$ and strength $2\,b$. The corresponding quantum Hamiltonian is given by
    \begin{equation*}
        \sfH_A =\n{i\hbar\grad+A}^2+\n{x}^2 .
    \end{equation*}

\subsection{Notation}

	The distribution $f_{\beta,\mu}$ is the classical physics analogue of the quantum state $\opgam_{\beta, \mu}$. It is well-known from the Weyl law (see e.g.~\cite{berezin_wick_1971, shubin_pseudodifferential_2001}, or~\cite{robert_comportement_1981} for the optimal rate of convergence) that
	\begin{equation*}
		\hd \Tr{\opgam_{\beta,\mu}} \to \intdd f_{\beta,\mu} \quad \text{ when } \hbar\to 0
	\end{equation*}
	and this holds for very general Hamiltonian operators. More generally, to link the density operators of quantum mechanics and the phase-space distributions of classical statistical mechanics, one can introduce the Wigner transform, which associates to any Hilbert--Schmidt operator $\op$ acting on $L^2(\Rd)$ the function on the phase space
	\begin{equation*}
		f_{\op}(x,\xi) := \intd e^{-i\,y\cdot\xi/\hbar} \,\op(x+\tfrac{y}{2},x-\tfrac{y}{2})\d y \, .
	\end{equation*}
	The phase-space Weyl law then tells more precisely that
	\begin{equation*}
		f_{\opgam_{\beta,\mu}} \to f_{\beta,\mu} \quad \text{ when } \hbar\to 0 \, .
	\end{equation*}
	
	In this paper, we want to understand the quantum analogue of the regularity of phase space distributions, which can be measured in terms of Sobolev norms, i.e. in terms of Lebesgue norms of gradients. Observing that $\intdd f_{\op} = \hd\Tr{\op}$ and
	\begin{equation}\label{eq:L2_norm}
		\Nrm{f_{\op}}{L^2(\Rdd)} = \lt(\hd\Tr{\n{\op}^2}\rt)^\frac{1}{2} ,
	\end{equation}
	where the absolute value of an operator $A$ is defined by $\n{A} = \sqrt{A^*A}$, it is convenient to introduce scaled Schatten norms 
	\begin{equation}\label{eq:def_norm}
		\Nrm{\op}{\L^p} = \lt(\hd\Tr{\n{\op}^p}\rt)^\frac{1}{p},
	\end{equation}
	as the quantum analog of the classical phase space Lebesgue norms. On the other hand, the quantum analog of gradients in phase space can be defined as
	\begin{equation}\label{eq:quantum_gradients}
		\Dhx \op := \com{\nabla,\op} \quad \text{ and } \quad
		\Dhv \op := \com{\tfrac{x}{i\hbar},\op} ,
	\end{equation}
    in accordance with the correspondence principle of quantum mechanics, which suggests to replace Poisson brackets by commutators of operators.
	The quantum gradients defined in~\eqref{eq:quantum_gradients} are compatible with the Wigner transform in the sense that $f_{\Dhx\op} = \Dx f_\op$ and $f_{\Dhv\op} = \Dv f_\op$. In the classical case, we denote by $z = (x,\xi)$ the phase-space variable. Then $\nabla_z f = (\Dx f,\Dv f)$. Paralleling the classical case, we consider the vector-valued operator
	\begin{equation*}
		\Dhz \opgam_{\beta, \mu} := \bigl(\Dhx \opgam_{\beta, \mu}, \Dhv\opgam_{\beta, \mu}\bigr) \, .
	\end{equation*}
	It satisfies $\lvert\Dhz \opgam_{\beta, \mu}\rvert^2 = \lvert\Dhx \opgam_{\beta, \mu}\rvert^2 + \lvert \Dhv \opgam_{\beta, \mu} \rvert^2$, and measures how far the state is from commuting with the position and momentum observables. With these definitions, in analogy with the classical Sobolev spaces, we can then introduce the homogeneous quantum Sobolev norms
	\begin{equation}\label{eq:quantum_Sobolev}
		\Nrm{\op}{\dot{\cW}^{1,p}} := \Nrm{\Dhz\op}{\L^p} .
	\end{equation}
	They satisfy (see e.g.~\cite[Proposition~A.2]{lafleche_quantum_2024})
	\begin{equation}\label{eq:Sobolev_equiv_norm}
	\begin{split}
		\Nrm{\op}{\dot{\cW}^{1,p}}^p &\approx_{d,p} \sum_{\jj=1}^d \left( \sNrm{\Dhvj{\op}}{\L^p}^p + \Nrm{\Dhxj{\op}}{\L^p}^p \right),
		\\
		\Nrm{\op}{\dot{\cW}^{1,\infty}} &\approx_{d,p} \sup_{\jj\in \{1,\dots,d\}} \bigl\{ \sNrm{\Dhvj{\op}}{\L^\infty} , \sNrm{\Dhxj{\op}}{\L^\infty}\bigr\} \,,
	\end{split}
	\end{equation}
	where we adopt the notation $A \approx_\lambda B$ if there exist constants $c_\lambda, C_\lambda>0$, dependent on some parameter $\lambda$, such that  $c_\lambda \, B \leq A \leq C_\lambda \, B$. They also satisfy the quantum analog of the Sobolev inequalities~\cite{lafleche_quantum_2024}.
	
	The boundedness of the norms~\eqref{eq:quantum_Sobolev} independently of $\hbar$ implies that the state represented by the operator $\op$ has a semiclassical structure in the following sense. Assume there exists $C>0$, independent of $\hbar$, such that $\Nrm{\Dhv\op}{\L^2}\leq C$. By Definition~\eqref{eq:quantum_gradients}, we get $\Nrm{\com{x,\op}}{\L^2}\leq C\,\hbar$, thus suggesting that, for $\hbar$ small, the operator $\op$ almost commutes with the position operator $x$.
	
	Generally, for $p\in[1,\infty)$, we refer to the understanding of the size of $\Nrm{\Dhz\op}{\L^p}$ in terms of $\hbar$ as the {\it semiclassical regularity} of the state $\op$, and say that $\op$ is semiclassically regular if there exists a constant $C>0$ independent of $\hbar$ such that
	\begin{equation}\label{eq:semiclassics}
		\Nrm{\Dhz\op}{\L^p} \leq C \, .
	\end{equation}
	Since the Wigner transform is an isometry from $\L^2$ to $L^2(\Rdd)$, it follows that $\Nrm{\nabla_z f_{\op}}{L^2} = \Nrm{\Dhz\op}{\L^2}$, hence a relation between the semiclassical regularity of $\op$ and the regularity properties of its Wigner transform $f_{\op}$. While equilibrium states at positive temperature with regular potentials are semiclassically regular (see e.g.~\cite{chong_semiclassical_2023}), this is not the case at zero temperature, i.e. $\beta=\infty$, in which case Inequality~\eqref{eq:semiclassics} can only hold for $p=1$ and the Wigner transform can only be in $H^s(\Rdd)$ for $s<1/2$ (see~\cite{lafleche_optimal_2024}). More precisely, for states of the form $\opgam = \indic_{-\hbar^2\Delta+V\leq 0}$ with $V$ slightly locally regular, confining and with a non-trivial negative part, the optimal semiclassical regularity estimates are of the form~\cite{fournais_optimal_2020, cardenas_commutator_2025}
	\begin{equation*}
		\Nrm{\Dhz\opgam}{\L^p} \leq C \, \hbar^{-1/p'}  \quad \text{ with } \quad \frac{1}{p'} = 1 - \frac{1}{p}
	\end{equation*}
    the Hölder conjugate of $p$. The purpose of this work is to show that suitable asymptotic regimes involving $\hbar$ and $\beta$ may nevertheless give rise to nontrivial and interesting semiclassical regularity behavior.
	
\subsection{Main results}

    Our main result for the harmonic oscillator is the following. 
	\begin{theorem}\label{thm:main}
		Let $F(r) = (1+e^{\beta(r-\mu)})^{-1}$ and the corresponding one-particle density operator $\opgam_{\beta, \mu} := F(\n{\opz}^2)$, with $\opz = (x,\opp)$. Fix $\mu\in \R$ and $p\in [1, \infty)$.
		If $\hbar \in (0, 1]$ and $\beta\hbar \le 1$, then
		\begin{multline}
			\bNrm{ \Dhz \opgam_{\beta, \mu} }{\cL^p}
			\approx_{d, p} \beta^{\frac{1}{2}-\frac{d}{p}} \, e^{\beta (\mu-d\hbar)}\, \indic_{\{\mu < d\hbar\}} \\
			+\beta^{\frac{1}{2}-\frac{d}{p}}\lt((\beta\lt(\mu-d\hbar+\hbar\rt))^{\frac{1}{2}+\frac{d-1}{p}}  +1\rt) \indic_{\{\mu \ge d\hbar\}}\,.
		\end{multline}
		If $\hbar \in (0, 1]$ and $\beta\hbar\ge 1$, then
		\begin{multline}
			\bNrm{ \Dhz \opgam_{\beta, \mu} }{\cL^p}
			\approx_{d, p} \hbar^{\frac{d}{p}-\frac{1}{2}} \, e^{\beta(\mu-(d+2)\hbar)} \, \indic_{\{\mu < d\hbar\}}
			\\
			+ \lt(\mu-d\hbar+\hbar\rt)^{\frac{1}{2}+\frac{d-1}{p}} \hbar^{-1/p'} \,\indic_{\{\mu \ge d\hbar\}} \, .
		\end{multline}
	\end{theorem}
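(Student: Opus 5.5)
The approach is to diagonalize everything in the Hermite basis, reduce the Schatten norm to an explicit lattice sum, and then estimate that sum in each regime. Write $\sfH = \n{\opz}^2 = \hbar(2N + d)$, where $N=\sum_\jj a_\jj^*a_\jj$, and use the ladder operators $a_\jj = (2\hbar)^{-1/2}(x_\jj + i\opp_\jj)$. Since $x_\jj = \sqrt{\hbar/2}\,(a_\jj+a_\jj^*)$ and $\opp_\jj = -i\sqrt{\hbar/2}\,(a_\jj - a_\jj^*)$, the triangle inequality in $\cL^p$ and the equivalence \eqref{eq:Sobolev_equiv_norm} give
\begin{equation*}
    \bNrm{\Dhz\opgam_{\beta,\mu}}{\cL^p} \approx_{d,p} \hbar^{-1/2} \sum_{\jj=1}^d \Bigl(\Nrm{\com{a_\jj,\opgam_{\beta,\mu}}}{\cL^p} + \Nrm{\com{a_\jj^*,\opgam_{\beta,\mu}}}{\cL^p}\Bigr).
\end{equation*}
The lower bound in this equivalence uses $a_\jj = (2\hbar)^{-1/2}(x_\jj+i\opp_\jj)$ in the other direction. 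Since $\opgam_{\beta,\mu}$ is self-adjoint, we have $\com{a_\jj^*,\opgam} = -\com{a_\jj,\opgam}^*$, so only $\com{a_\jj,\opgam}$ needs to be estimated.

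The key identity is $a_\jj \sfH = (\sfH+2\hbar)\,a_\jj$, which gives $a_\jj F(\sfH) = F(\sfH+2\hbar)\,a_\jj$. Hence
\begin{equation*}
    \com{a_\jj,\opgam} = D(\sfH)\,a_\jj, \qquad D(E) := F(E+2\hbar)-F(E).
\end{equation*}
Therefore $\n{\com{a_\jj,\opgam}^*}^2 = D(\sfH)\,a_\jj a_\jj^* D(\sfH)$ is diagonal in the Hermite basis $(\psi_\alpha)_{\alpha\in\N^d}$, with eigenvalues $(\alpha_\jj+1)\,D(E_\alpha)^2$, where $E_\alpha = \hbar(2\n{\alpha}+d)$. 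Summing over $\n{\alpha}=n$ and using $\sum_{\n{\alpha}=n}(\alpha_\jj+1)^{p/2}\approx_{d,p} (n+1)^{d-1+p/2}$, we obtain the exact reduction
\begin{equation*}
    \bNrm{\Dhz\opgam_{\beta,\mu}}{\cL^p}^p \approx_{d,p} \hbar^{d-p/2}\sum_{n\ge 0} (n+1)^{d-1+\frac p2}\, \n{F(\hbar(2n+d+2))-F(\hbar(2n+d))}^p .
\end{equation*}

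The remaining work, which is the main obstacle, is a uniform two-sided asymptotic analysis of this sum; I would set $s=\beta(E-\mu)$ and $m=\mu-d\hbar$.

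\emph{Case $\beta\hbar\le1$.} By the mean value theorem and the fact that $F'$ varies by at most a bounded factor over intervals of length $2\hbar\le 2/\beta$, each difference is $\approx \beta\hbar\, e^{s}/(1+e^{s})^2$. The sum then compares to an integral in $E=\hbar(2n+d)$. This gives $\hbar^{d-p/2}\hbar^{-d}(\beta\hbar)^p\hbar^{p/2}\int E_+^{\,d-1+p/2}\, e^{-p\n{s}}\,\d E$, up to the $n+1$ correction near $E\approx d\hbar$. When $m\ge 0$ the mass sits in a window of width $1/\beta$ around $\mu$. That window yields $\beta^{p/2-d}\bigl(\beta(m+\hbar)\bigr)^{p/2+d-1}$, and the region near the bottom of the spectrum contributes the term $+1$. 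When $m<0$ only the tail $e^{p\beta(\mu-d\hbar)}$ survives, dominated by $n\lesssim 1/(\beta\hbar)$. In that tail the weight $\sum_{n\lesssim 1/(\beta\hbar)} (n+1)^{d-1+p/2}e^{-2p\beta\hbar n}$ gives exactly the factor $\beta^{p/2-d}$ after the powers of $\hbar$ cancel.

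\emph{Case $\beta\hbar\ge1$.} The sequence $F(E_n)$ is then close to a step function. If $m\ge 0$, the differences are of order one only for $n\approx n_\mu=(\mu-d\hbar)/(2\hbar)$, where the step is taken, and they decay geometrically away from it. The sum is therefore $\approx \hbar^{d-p/2}(n_\mu+1)^{d-1+p/2}\approx (m+\hbar)^{d-1+p/2}\hbar^{1-p}$, which is the claim after taking the $p$-th root. If $m<0$, the $n=0$ term dominates, with $F(\hbar(d+2))-F(\hbar d)\approx e^{\beta(\mu-(d+2)\hbar)}$ since $\beta\hbar\ge1$. This gives $\hbar^{d/p-1/2}e^{\beta(\mu-(d+2)\hbar)}$.

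The delicate points are the matching of constants across the transitions $m\approx 0$ and $\beta\hbar\approx 1$, and obtaining lower bounds rather than just upper bounds. For the lower bounds, I would keep only a single well-chosen block of $n$ and use monotonicity of $F$.
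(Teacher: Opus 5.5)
Your reduction is the paper's (Lemma~\ref{lem:harmonic_S^p_p}). The intertwining $a_{\jj}F(\sfH)=F(\sfH+2\hbar)\,a_{\jj}$ diagonalizes the modulus of the commutator in the Hermite basis, and your lattice sum is \eqref{def:Schatten_commutation_with_eta} up to constants and an index shift. Your treatment of $\beta\hbar\le 1$ (both signs of $\mu-d\hbar$) and of $\beta\hbar\ge1$, $\mu\ge d\hbar$ is correct and matches what Lemmas~\ref{lem:S_{p,c}} and~\ref{lem:S_summation_large_eta} give. There is one minor slip: in the integral comparison the prefactor is $\hbar^{d-p/2}\hbar^{-d-p/2}(\beta\hbar)^p=\beta^p$, not $\hbar^{d-p/2}\hbar^{-d}(\beta\hbar)^p\hbar^{p/2}$. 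Your final exponents are nevertheless right.

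The step that fails is the sub-case $\beta\hbar\ge1$, $\mu<d\hbar$, where you claim $\n{F(\hbar(d+2))-F(\hbar d)}\approx e^{\beta(\mu-(d+2)\hbar)}$. Since $d\hbar-\mu>0$, one has $\tfrac12 e^{\beta(\mu-d\hbar)}\le F(\hbar d)\le e^{\beta(\mu-d\hbar)}$. On the other hand, $F(\hbar(d+2))\le e^{-2\beta\hbar}e^{\beta(\mu-d\hbar)}\le e^{-2}e^{\beta(\mu-d\hbar)}$. So the difference is $\approx e^{\beta(\mu-d\hbar)}$: it is set by the occupation of the ground state, not of the first excited level.

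Inserting this into your own reduction, the $n=0$ term alone gives $\bNrm{\Dhz\opgam_{\beta,\mu}}{\cL^p}\gtrsim_{d,p}\hbar^{\frac dp-\frac12}e^{\beta(\mu-d\hbar)}$. The bound $\n{D(\hbar(2n+d))}\le F(\hbar(2n+d))\le e^{\beta(\mu-d\hbar)-2\beta\hbar n}$ gives the matching upper bound. This exceeds the right-hand side of the statement by the factor $e^{2\beta\hbar}$, which is unbounded in this regime. So the step cannot be patched: the upper bound claimed in this branch is false.

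The same conclusion follows from continuity in $\mu$. At $\mu=d\hbar$ the branch $\mu\ge d\hbar$ equals $\hbar^{\frac dp-\frac12}$, whereas the stated branch for $\mu<d\hbar$ tends to $\hbar^{\frac dp-\frac12}e^{-2\beta\hbar}$ as $\mu\uparrow d\hbar$.

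The paper's proof has the same slip, in Lemma~\ref{lem:S_summation_large_eta} for $\nu<0$. Replacing $(1+e^{\eta n-\nu})(1+e^{\eta(n-1)-\nu})$ by $e^{\eta n-\nu}e^{\eta(n-1)-\nu}$ gives $S_{p,c}(\eta,\nu)\approx_p \eta^c e^{p\eta}\operatorname{Li}_{1-c}(e^{-p\eta})\,e^{p\nu}\approx_{p,c}\eta^c e^{p\nu}$. The factor $e^{p\eta}$ was dropped, which is harmless for $\eta\le M$ in Lemma~\ref{lem:S_{p,c}} but fatal for $\eta\ge1$.

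With the corrected asymptotic, your argument proves the estimate with $e^{\beta(\mu-d\hbar)}$ in place of $e^{\beta(\mu-(d+2)\hbar)}$ in that branch. This corrected form also matches the $\beta\hbar\le1$ formula at $\beta\hbar=1$.
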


	\begin{remark}
		In particular, if $\mu=1$, $\hbar\leq 1$ and the temperature is low, i.e. $T = 1/\beta\to 0$, then there are two cases. If $\beta\hbar \geq 1$, i.e. $T \leq \hbar$, and then
		\begin{equation*}
			\bNrm{ \Dhz \opgam_{\beta, \mu} }{\cL^p}
			\approx_{d, p} \hbar^{-1/p'}
		\end{equation*}
		which is the same behavior as in the zero temperature case investigated in~\cite{fournais_optimal_2020, benedikter_effective_2022, lafleche_optimal_2024, cardenas_commutator_2025}. This is equivalent to
		\begin{equation*}
			\hd \Tr{\lvert[x,\opgam_{\beta,\mu}]\rvert^p + \lvert[\hbar\nabla,\opgam_{\beta,\mu}]\rvert^p} \approx_{d, p} \hbar \, .
		\end{equation*}
		When $\beta\hbar\leq 1$, i.e. $T \geq \hbar$, one obtains instead
		\begin{equation*}
			\bNrm{ \Dhz \opgam_{\beta, \mu} }{\cL^p}
			\approx_{d, p} \weight{\beta}^{1/p'} ,
		\end{equation*}
        where $\weight{\beta} := \sqrt{1+\beta^2}$. That is, in this case, the size of the quantum gradients is comparable to that of the classical ones. This is equivalent to
		\begin{equation*}
			\hd \Tr{\lvert[x,\opgam_{\beta,\mu}]\rvert^p + \lvert[\hbar\nabla,\opgam_{\beta,\mu}]\rvert^p} \approx_{d, p} \weight{\beta}^{p-1} \hbar^p .
		\end{equation*}
		In particular, if $\beta\hbar\to 0$, that is if the temperature does not decay too fast to $0$ in comparison to $\hbar$, then the size of these commutators is smaller than the zero temperature case, although $\opgam_{\beta,\mu}$ converges to a zero temperature state.
	\end{remark}

    \begin{remark}
        We could treat the general case of fermions with spin by introducing a spin degeneracy factor. That is, we write
        \begin{equation*}
            \opgam_{\beta, \mu} = g_{\textnormal{s}} \, F_{\beta,\mu}(\sfH) \,,
        \end{equation*}
        where $g_{\textnormal{s}} = 2$ for spin-$\frac{1}{2}$ particles (e.g., electrons). Since this factor does not affect the essential analysis, we set $g_{\textnormal{s}} = 1$ in what follows.
    \end{remark}

    Our main result for the harmonic oscillator with magnetic field reads as follows. Here we use the notation $A\lesssim B$ to indicate that there exists a universal constant $C>0$ independent of the parameters such that $A \leq B$. We will also later write $A \lesssim_\lambda B$ if the constant depends on the parameter $\lambda$.
    \begin{theorem}\label{thm:main-magnetic}
        Let $F(r) = (1+e^{\beta(r-\mu)})^{-1}$ with $\beta,\mu> 0$ and the corresponding 3 dimensional one-particle density operator $\opgam_{\beta, \mu} := F(\sfH_A)$, with $\sfH_A=\n{i\hbar\grad+A}^2+\n{x}^2$ and the vector potential $A = b \lt(-x_2, x_1, 0\rt)$ for some $b\in\R$. Fix $\mu \geq \lt(2\weight{b}+1\rt)\hbar$, then, for any $p\in [1,\infty]$, we have the estimates
		\begin{align*}
            \bNrm{\Dhx\opgam_{\beta,\mu}}{\L^p} + \weight{b} \bNrm{\Dhv\opgam_{\beta,\mu}}{\L^p} &\lesssim \weight{b} \weight{\beta\mu}^{\frac{1}{2}+\frac{2}{p}} \beta^{\frac{1}{2}-\frac{3}{p}} && \text{ if } \beta\hbar\weight{b} \leq 1
            \\
            &\lesssim M_p(\beta\hbar,b) \, \mu^{\frac{1}{2}+\frac{2}{p}} \, \hbar^{-1/p'} && \text{ if } \beta\hbar\weight{b} \geq 1
		\end{align*}
        with
        \begin{align*}
            M_p(\beta\hbar,b) &= \weight{b}^\frac{1}{p} + (\beta\hbar)^{1-\frac{2}{p}} \weight{b}^\frac{1}{p'}  && \text{ if } \weight{b}^{-1} \leq \beta\hbar \leq 1
            \\
             &= \beta\hbar + \weight{b}^\frac{1}{p} + (\beta\hbar\weight{b})^\frac{1}{p'} && \text{ if } 1 \leq \beta\hbar \leq \weight{b}
            \\
             &= \lt(\beta\hbar\rt)^\frac{1}{p'} \weight{b}^{\max(\frac{1}{p},\frac{1}{p'})} && \text{ if } \weight{b} \leq \beta\hbar \, .
		\end{align*}
    \end{theorem}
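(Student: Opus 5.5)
The plan is to diagonalize the Fock--Darwin Hamiltonian explicitly and compute the commutators in an adapted basis of ladder operators. With $A = b(-x_2,x_1,0)$, expand $\sfH_A$:
\[
\sfH_A = -\hbar^2\Delta + (1+b^2)(x_1^2+x_2^2) + x_3^2 - 2b\,L_3 ,
\qquad L_3 = x_1\opp_2 - x_2\opp_1 ,
\]
with $\opp = -i\hbar\nabla$ up to sign conventions. The $x_3$ direction is then a standard one-dimensional oscillator with frequency $\hbar$ per quantum (energy levels $\hbar(2n_3+1)$). The transverse part is a two-dimensional isotropic oscillator of frequency $\weight{b}$ shifted by $-2bL_3$. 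Introducing circular ladder operators $a_\pm$ in the $(x_1,x_2)$ plane, it becomes
\[
\hbar\bigl(2(\weight{b}+b)\,a_+^*a_+ + 2(\weight{b}-b)\,a_-^*a_- + 2\weight{b}\bigr),
\]
so the spectrum is
\[
E(n_+,n_-,n_3) = \hbar\bigl(\omega_+ n_+ + \omega_- n_- + 2n_3 + 2\weight{b} + 1\bigr),
\qquad \omega_\pm = 2(\weight{b}\pm b).
\]
For large $|b|$ we have $\omega_+\sim 4\weight{b}$, while $\omega_-\sim 1/\weight{b}$ is small: this is the Landau-level-like frequency versus the slow drift frequency. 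The hypothesis $\mu\ge(2\weight{b}+1)\hbar$ says exactly that the ground state energy lies below $\mu$.

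Next, express $x_\jj$ and $\opp_\jj$ (equivalently $\Dhxj$, $\Dhvj$) as linear combinations of $a_\pm, a_\pm^*, a_3, a_3^*$. The coefficients are explicit: for $\Dhv$ (i.e.\ $x/\hbar$) they are of order $\weight{b}^{-1/2}\hbar^{-1/2}$, and for $\Dhx$ of order $\weight{b}^{1/2}\hbar^{-1/2}$. This explains the weighting $\Dhx + \weight{b}\Dhv$ in the statement. For a function $F(\sfH_A)$ one has
\[
[a_\pm, F(\sfH_A)] = \bigl(F(\sfH_A+\hbar\omega_\pm) - F(\sfH_A)\bigr)\,a_\pm ,
\]
and similarly for $a_3$ with shift $2\hbar$. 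Each commutator is therefore a finite-difference operator times a ladder operator, acting diagonally up to a shift on the eigenbasis $|n_+,n_-,n_3\rangle$. By the triangle inequality in $\L^p$, it suffices to bound each term
\[
\hbar^{-1/2}c\,\bigl(F(\sfH_A+\hbar\omega)-F(\sfH_A)\bigr)a .
\]
Its $\L^p$ norm to the $p$-th power equals
\[
\hd\,\hbar^{-p/2}c^p\sum_n (\hbar n)^{p/2}\,\n{F(E_n+\hbar\omega)-F(E_n)}^p .
\]

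The main work, and the expected obstacle, is estimating these lattice sums in all regimes. For a step $\hbar\omega$ we have
\[
\n{F(E+\hbar\omega)-F(E)} \lesssim \min\bigl(1,\ \beta\hbar\omega\,\sup_{[E,E+\hbar\omega]}\n{F'}/\beta\bigr),
\]
which is localized in a window of width $\max(1/\beta,\hbar\omega)$ around $\mu$, with exponential tails.

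\begin{itemize}
\item If $\beta\hbar\weight{b}\le1$, all steps are smaller than the thermal width. The sums can then be compared with integrals: classical phase-space volume $\hbar^{-3}\times$ a shell of thickness $1/\beta$ at energy $\mu$, with $\beta^{p}$-size differences. This gives the power $\weight{\beta\mu}^{\frac12+\frac2p}\beta^{\frac12-\frac3p}$; the $\weight{\beta\mu}$ accounts for $\mu\lesssim1/\beta$.
\item If $\beta\hbar\weight{b}\ge1$, the $\omega_+$ direction is quantized. Differences of $F$ in that direction are of order one on only $O(1)$ levels in $n_+$, giving the factors $\weight{b}^{1/p}$ and $(\beta\hbar\weight{b})^{1/p'}$. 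The $n_3$ direction (step $2\hbar$) and $n_-$ direction (step $\hbar/\weight{b}$) are quantized or continuous according to whether $\beta\hbar\gtrless1$ and $\beta\hbar\gtrless\weight{b}$. This produces the three cases of $M_p$.
\end{itemize}

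In each case, I would count the number of lattice points $(n_+,n_-,n_3)$ with $E_n$ within the relevant window of $\mu$. That count is $\sim \mu^{2}\times(\text{window})/(\hbar^3\omega_+\omega_-\cdot2)$ with $\omega_+\omega_-=4$. Multiplying by the sizes of the differences and by $(\hbar n)^{p/2}\lesssim\mu^{p/2}$ then yields $\mu^{\frac12+\frac2p}\hbar^{-1/p'}M_p$. The case $p=\infty$ follows by taking suprema instead of sums.
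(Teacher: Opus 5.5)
Your route is the paper's: diagonalize $\sfH_A$ by three commuting ladder operators, use $a_jF(\sfH_A)=F(\sfH_A+\lambda_j)a_j$ to turn $\|[a_j,\opgam]\|_{\mathcal L^p}^p$ into the lattice sum $h^3\sum_n|F(\Lambda_n)-F(\Lambda_n-\lambda_j)|^p(\alpha_jn_j)^{p/2}$, estimate these sums regime by regime, and recombine through the linear relations between $(x,\opp)$ and the $a_j$. The paper makes your ``window counting'' rigorous with an $I_+/I_-/I_0$ split. The tails use $|F'|\le\beta e^{-\beta|r-\mu|}$ and explicit geometric sums, and the layer uses the zero-temperature count. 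That part of your sketch is heuristic, but it is not where the real problem lies.

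\textbf{The gap is in the recombination step.} You claim the $\Dhv$-coefficients are of order $\langle b\rangle^{-1/2}\hbar^{-1/2}$, and that this explains the weight $\langle b\rangle$. That holds only for $x_1,x_2$. For $x_3=(a_3+a_3^*)/2$ the coefficient does not depend on $b$. Hence
$\langle b\rangle\|\Dhv\opgam\|_{\mathcal L^p}\ge\frac{\langle b\rangle}{\hbar}\|[x_3,\opgam]\|_{\mathcal L^p}\ge\frac{\langle b\rangle}{2\hbar}\|[a_3,\opgam]\|_{\mathcal L^p}$.
The last inequality follows by averaging $e^{i\theta}e^{i\theta N_3}(T-T^*)e^{-i\theta N_3}$ over $\theta$, where $T=[a_3,\opgam]$ and $N_3$ is the $x_3$-number operator.

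This term really has size $\langle b\rangle\hbar^{-1/p'}$. Take $\mu=1$, $\beta\hbar=1$ and $(2\langle b\rangle+1)\hbar\le\frac12$, so that $\tilde\mu_0:=\mu-(2\langle b\rangle+1)\hbar\ge\frac12$. Write $\lambda_{1,2}=2(\langle b\rangle\mp b)\hbar$, so $\lambda_1\lambda_2=4\hbar^2$.
\begin{itemize}
\item There are at least $\tilde\mu_0^2/(16\lambda_1\lambda_2)=\tilde\mu_0^2/(64\hbar^2)$ pairs $(n_1,n_2)$ with $\lambda_1n_1+\lambda_2n_2\le\tilde\mu_0/2$.
\item For each such pair, the $n_3$ with $0\le\Lambda_n-\mu<2\hbar$ satisfies $2\hbar n_3\ge\tilde\mu_0/2$.
\item At that point $F(\Lambda_n-2\hbar)-F(\Lambda_n)\ge F(\mu-2\hbar)-F(\mu)>\frac13$.
\end{itemize}
Hence $\|[a_3,\opgam]\|_{\mathcal L^p}\ge c\,\hbar^{1/p}$ with $c$ universal, and $\langle b\rangle\|\Dhv\opgam\|_{\mathcal L^p}\ge\frac c2\langle b\rangle\hbar^{-1/p'}$. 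At $\beta\hbar=1$, however, both applicable formulas give $M_p\le 3\langle b\rangle^{\max(1/p,1/p')}$. For $1<p<\infty$ and $b\to\infty$ this contradicts the claimed estimate. So your final step (``counting \dots\ yields $\mu^{\frac12+\frac2p}\hbar^{-1/p'}M_p$'') cannot be completed. The inequality as stated fails in part of the regime $\beta\hbar\langle b\rangle\ge1$.

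\textbf{The paper's proof has the same hole.} Its final assembly uses only $j\in\{1,2\}$. Its own $j=3$ bound, e.g.\ $S_{p,3}^p\lesssim(1+(\beta\hbar)^{p-2})\mu^{p/2+2}\hbar$ for $1\le\beta\hbar\le\langle b\rangle$, gives $\langle b\rangle(1+(\beta\hbar)^{1-2/p})\mu^{\frac12+\frac2p}\hbar^{-1/p'}$ for this term, and $M_p$ does not dominate that.

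What both arguments actually prove for $\beta\hbar\langle b\rangle\ge1$ is a weaker estimate: $\langle b\rangle$ multiplies only the transverse part of $\Dhv\opgam$, and the longitudinal commutator $\hbar^{-1}[x_3,\opgam]$ enters without the weight. For $\beta\hbar\langle b\rangle\le1$ the stated bound is fine. There the $j=3$ contribution is also of the classical size $\langle b\rangle\langle\beta\mu\rangle^{\frac12+\frac2p}\beta^{\frac12-\frac3p}$.
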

    
    \begin{remark}
        The result seems to be sharp when $\beta \hbar \weight{b} \lesssim 1$, since one recovers the estimate from the classical case (Proposition~\ref{prop:classical-mganetic}) and from the non-magnetic case (Theorem~\ref{thm:main}) when $b\leq 1$. This condition can be seen as a generalization of the condition $T\geq \hbar$ from the case without magnetic field, and can be interpreted as the fact that $T$ should be larger than the gap between eigenvalues. In the case with magnetic field, as the proof shows, the Hamiltonian is indeed the sum of three oscillators, one of which has eigenvalues spacing of $\weight{b}\hbar$. In all the cases, we obtain that the quantum norms are bounded by the classical ones in the sense that
        \begin{equation*}
            \bNrm{\Dhx\opgam_{\beta,\mu}}{\L^p} + \weight{b} \bNrm{\Dhv\opgam_{\beta,\mu}}{\L^p} \lesssim \weight{b} \weight{\beta\mu}^{\frac{1}{2}+\frac{2}{p}} \beta^{\frac{1}{2}-\frac{3}{p}} .
        \end{equation*}
    \end{remark}

    \begin{remark}
        In the large magnetic field regime regime $b\hbar\to 1$ considered for instance in~\cite{lieb_asymptotics_1994-1, lieb_asymptotics_1994, perice_gyrokinetic_2024}, if $\mu=1$ and $\beta\geq 1$
        \begin{equation*}
            \bNrm{[\opp,\opgam_{\beta,\mu}]}{\L^p} \lesssim \beta^{\frac{1}{p'}} \qquad \bNrm{[x,\opgam_{\beta,\mu}]}{\L^p} \lesssim \beta^{\frac{1}{p'}} \hbar \, ,
        \end{equation*}
        so the commutators with $x$ are small, but not the commutators with $\opp$.
    \end{remark}

    \begin{remark}
        The hypothesis $\lt(2\weight{b}+1\rt)\hbar \leq \mu$ in Theorem~\ref{thm:main-magnetic} is imposed to simplify the exposition and is equivalent to telling that $\sfH_A-\mu$ has negative eigenvalues. In the other cases, the commutators will be much smaller.
    \end{remark}

    In the zero temperature case, we have the following result, which coincides with taking $\beta\hbar = 1$ in the previous theorem.
    \begin{theorem}\label{thm:zero-temp}
        Let $\opgam_A = \indic_{\sfH_A\leq \mu}$ with $\mu\in\R$. Then for any $p\in [1,\infty]$
		\begin{align*} 
			\Nrm{\Dhv\opgam_{A}}{\L^p} &\lesssim \mu^{\frac{1}{2}+\frac{2}{p}}  \weight{b}^{\max(\frac{1}{p},\frac{1}{p'})-1} \hbar^{-1/p'} ,
            \\
			\Nrm{\Dhx\opgam_{A}}{\L^p} &\lesssim \mu^{\frac{1}{2}+\frac{2}{p}}  \weight{b}^{\max(\frac{1}{p},\frac{1}{p'})} \hbar^{-1/p'} . 
		\end{align*}
        where $\mu\geq \lt(2\weight{b}+1\rt)\hbar$, or else $\opgam_A = 0$.
    \end{theorem}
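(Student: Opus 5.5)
The plan is to diagonalize $\sfH_A$ exactly and reduce the Schatten norms to counting lattice points. Writing $\sfH_A = \opp_\perp^2 + (1+b^2)\n{x_\perp}^2 - 2b L_3 + \opp_3^2 + x_3^2$ with $L_3$ the angular momentum, I would introduce the standard Fock--Darwin ladder operators $a_+,a_-,a_3$ (with $[a_k,a_k^*]=1$), built from the two circular modes of the planar oscillator of frequency $\weight{b}$ and from the third coordinate. In terms of them
\begin{equation*}
	\sfH_A = 2\hbar\,\omega_+ \bigl(a_+^*a_+ + \tfrac12\bigr) + 2\hbar\,\omega_-\bigl(a_-^*a_- + \tfrac12\bigr) + 2\hbar\bigl(a_3^*a_3+\tfrac12\bigr),
	\qquad \omega_\pm = \weight{b}\pm b .
\end{equation*}
Hence the spectrum is $E(n) = \hbar\bigl(\omega_+(2n_++1)+\omega_-(2n_-+1)+2n_3+1\bigr)$ with $n\in\N^3$. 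The ground energy is $(2\weight{b}+1)\hbar$, which explains the hypothesis, and $\omega_+\omega_-=1$ with $\omega_+\approx\weight{b}$ and $\omega_-\approx\weight{b}^{-1}$. Inverting the linear change of variables, $x_\jj$ and $\opp_\jj$ for $\jj=1,2$ are linear combinations of $a_\pm,a_\pm^*$. The coefficients are $O(\sqrt{\hbar}\,\weight{b}^{-1/2})$ for $x_\perp$ and $O(\sqrt{\hbar}\,\weight{b}^{1/2})$ for $\opp_\perp$. Both $x_3$ and $\opp_3$ are $O(\sqrt\hbar)$ combinations of $a_3,a_3^*$.

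Since $\opgam_A=\indic_{\sfH_A\le\mu}$ is diagonal in the Fock basis $\ket{n}$ and $a_k$ shifts $n_k$ by $-1$, I would use
\begin{equation*}
	[a_k,\opgam_A] = a_k\bigl(\indic_{\sfH_A\le\mu}-\indic_{\sfH_A+2\hbar\omega_k\le\mu}\bigr),
\end{equation*}
so that $\n{[a_k,\opgam_A]}^2$ is diagonal with eigenvalue $n_k$ on the shell $S_k=\{n : \mu-2\hbar\omega_k< E(n)\le \mu\}$. Then $\Tr{\n{[a_k,\opgam_A]}^p} = \sum_{n\in S_k} n_k^{p/2}$, and the same holds for $a_k^*$. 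By the triangle inequality in $\L^p$, with the coefficient sizes above and the factor $\hbar^{-1}$ in $\Dhv$ (respectively $\nabla = \opp/(i\hbar)$, where $\opp=-i\hbar\nabla$, in $\Dhx$), the estimate reduces to bounding
\begin{equation*}
	\Sigma_k^p := \hd\sum_{n\in S_k} n_k^{p/2}, \qquad k\in\{+,-,3\}.
\end{equation*}
A matching lower bound is not needed, since the theorem only claims upper bounds. For $p=\infty$ the quantity is simply $\sup_{S_k} n_k^{1/2}\lesssim (\mu/(\hbar\omega_k))^{1/2}$.

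The main step is counting lattice points in the thin shells $S_k$ uniformly in $b$ and $\hbar$. On $S_k$ one has $n_k\lesssim \mu/(\hbar\omega_k)$. The set $S_k$ is a slab of width about $\omega_k/\omega_j$ steps in each direction $n_j$ inside the simplex $E(n)\le\mu$, which has about $\mu/(\hbar\omega_j)$ points along each axis. When all three frequencies are comparable to $1$ ($b\lesssim1$), the count is $\#S_k\approx(\mu/\hbar)^2$. This gives $\Sigma_k^p\lesssim \hbar^3(\mu/\hbar)^{2+p/2}$, which after multiplying by the coefficients $\hbar^{p/2}\hbar^{-p}$ yields $\mu^{\frac p2+2}\hbar^{1-p}$, i.e. $\mu^{\frac12+\frac2p}\hbar^{-1/p'}$. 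For large $b$ the direction $n_+$ is sparse: it has only about $1+\mu/(\hbar\weight{b})$ admissible values, and it may be a single value when $\hbar\weight{b}\sim\mu$. I would therefore count by first summing over $n_+$ and then treating $(n_-,n_3)$ as a two-dimensional slab. This gives two competing contributions, one from many $n_+$ levels and one from a single level. Comparing them against the $p=1$ and $p=\infty$ endpoints should produce exactly the factor $\weight{b}^{\max(1/p,1/p')}$, after multiplying by $\weight{b}^{\pm1/2}$ per power from the coefficients of $x_\perp$ and $\opp_\perp$. This accounts for the extra $\weight{b}^{-1}$ in the $\Dhv$ bound relative to the $\Dhx$ bound.

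I expect this uniform counting to be the main obstacle: the shell thickness relative to lattice spacing differs by factors of $\weight{b}^{\pm1}$ across the three modes, and the small-occupation regime for $n_+$ must be handled separately. I would first prove the two endpoint estimates $p=1$ and $p=\infty$ by this counting. For intermediate $p$ I would bound each $\Sigma_k^p$ directly by $\sup_{S_k} n_k^{p/2-1/2}$ times $\Sigma_k^1$, or by Hölder's inequality in $p$, and check that this reproduces the stated exponents.
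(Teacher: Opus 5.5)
Your strategy is the paper's. You diagonalize $\sfH_A$ with the Fock--Darwin ladder operators and write each $\com{a_k,\opgam_A}$ as $a_k$ times a spectral shell projector. That turns $\Nrm{\com{a_k,\opgam_A}}{\L^p}^p$ into a weighted lattice-point sum. You then bound the weight by its maximum on the shell and count points as one thin-slab factor times two box factors. The paper sums first over the dense mode (your $n_-$) and you over the sparse one, but the product is the same. Two things need fixing, and one step cannot be carried out at all.

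First, $a_k$ sits on the wrong side in your identity. From $a_k\sfH_A=(\sfH_A+2\hbar\omega_k)\,a_k$ one gets
\[
\com{a_k,\opgam_A}=-a_k\,\indic_{\mu<\sfH_A\le\mu+2\hbar\omega_k}=-\indic_{\mu-2\hbar\omega_k<\sfH_A\le\mu}\,a_k .
\]
So $\n{\com{a_k,\opgam_A}}^2$ equals $n_k$ on the shell \emph{above} $\mu$, as in \eqref{eq:Spp_magnetic_0}. Equivalently, the weight on your lower shell $S_k$ is $n_k+1$, not $n_k$. With your formula the $a_+$ term would vanish whenever only $n_+=0$ fits below $\mu$, which is false. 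The repair costs nothing: $n_k+1\lesssim\mu/(\hbar\omega_k)$ on $S_k$ because $\mu\ge(2\weight{b}+1)\hbar$.

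Second, no regime splitting or interpolation in $p$ is needed, and your account of where $\weight{b}^{\max(1/p,1/p')}$ comes from is off. For every $p$ at once,
\[
\hd\sum_{S_k}(n_k+1)^{p/2}\le\hd\,\bigl(\sup_{S_k}(n_k+1)\bigr)^{p/2}\,\#S_k, \qquad \#S_k\lesssim(\omega_k\weight{b}+1)\,\mu^2/(\hbar^2\weight{b}),
\]
with $\omega_3:=1$. The hypothesis $\mu\ge(2\weight{b}+1)\hbar$ absorbs the ``$+1$'' in the count of the sparse $n_+$ levels. With the coefficient $(\hbar\weight{b})^{1/2}$ of $\opp_\perp$:
\begin{itemize}
\item the $a_+$ mode gives $\mu^{\frac12+\frac2p}\weight{b}^{1/p}\hbar^{-1/p'}$;
\item the $a_-$ mode gives $\mu^{\frac12+\frac2p}\weight{b}^{1/p'}\hbar^{-1/p'}$.
\end{itemize}
The maximum is simply the sum of these two modes, since both enter $x_\perp$ and $\opp_\perp$. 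It is not a competition inside the $n_+$ count. This is exactly Proposition~\ref{prop:com_indic}.

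The genuine gap is the longitudinal component. Your coefficient bookkeeping produces the extra $\weight{b}^{-1}$ only for $x_\perp$. For $x_3=O(\sqrt\hbar)(a_3+a_3^*)$ your count gives only $\Nrm{\Dh_{\!\xi_3}\opgam_A}{\L^p}\lesssim\mu^{\frac12+\frac2p}\hbar^{-1/p'}$, with no decay in $b$, and this cannot be improved. Here is why:
\begin{itemize}
\item Since $\sfH_A=\sfH_\perp+\sfH_\parallel$ with the parts acting on different variables, $\com{x_3,\opgam_A}$ is an orthogonal sum, over transverse eigenvalues $e$, of the one-dimensional commutators $\com{x_3,\indic_{\sfH_\parallel\le\mu-e}}$.
\item Each of these has rank two, with both singular values equal to $(\hbar(m_0+1)/2)^{1/2}$, where $m_0$ is the top occupied longitudinal level.
\item If $8\weight{b}\hbar\le\mu$, there are at least $\mu^2/(256\hbar^2)$ transverse levels with $e\le\mu/2$, uniformly in $b$ because $\omega_+\omega_-=1$. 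Each has $\hbar(m_0+1)\ge\mu/8$.
\item Hence $\Nrm{\com{x_3,\opgam_A}}{\L^p}\gtrsim_p\mu^{\frac12+\frac2p}\hbar^{1/p}$.
\end{itemize}
So for $1<p<\infty$ and $1\ll\weight{b}\le\mu/(8\hbar)$, the first inequality of the statement fails for the full vector $\Dhv{\opgam_A}$.

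What your argument actually proves is that bound for the transverse components $\Dh_{\!\xi_1}$ and $\Dh_{\!\xi_2}$, together with the $b$-independent bound for $\Dh_{\!\xi_3}$. The paper proves no more: Corollary~\ref{cor:zero-temp_magn_comm} gives only $\mu^{\frac12+\frac2p}\hbar^{1/p}$ for $\com{x_3,\opgam_A}$, which cannot produce the factor $\weight{b}^{\max(1/p,1/p')-1}$. The $\Dhx$ inequality is unaffected. So are the endpoints $p\in\{1,\infty\}$, where $\max(1/p,1/p')=1$.
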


    \begin{remark}\label{rmk:magnetic}
        If $\mu=1$ and $b$ satisfies $b\hbar \leq 1$, then the above theorem gives in terms of commutators
		\begin{align*}
			\Nrm{\com{x,\opgam_{A}}}{\L^p} &\lesssim \weight{b}^{\max(\frac{1}{p},\frac{1}{p'})-1} \hbar^{1/p},
			&
			\Nrm{\com{\opp,\opgam_{A}}}{\L^p} &\lesssim  \weight{b}^{\max(\frac{1}{p},\frac{1}{p'})} \hbar^{1/p}  .
		\end{align*}
        Similar estimates were recently obtained in \cite[Theorem 1]{benedikter_derivation_2025} for the case of the trace norm for more general external and vector potentials; more precisely, they obtained the following, for $b\hbar\leq 1$
        \begin{align*}
            \Nrm{\com{x,\opgam_{A}}}{\L^1} &\lesssim  \weight{b}\hbar
            &
            \Nrm{\com{\opp-A,\opgam_{A}}}{\L^1} &\lesssim  \weight{b}^{2} \hbar \,.
        \end{align*}
        Hence, in the case of constant magnetic field with a harmonic trapping potential considered in this work, Theorem~\ref{thm:zero-temp} improves on the $b$-dependency, as it gives
		\begin{align*}
			\Nrm{\com{x,\opgam_{A}}}{\L^1} &\lesssim \hbar
			&
			\Nrm{\com{\opp-A,\opgam_{A}}}{\L^1} &\lesssim  \weight{b} \hbar \, .
		\end{align*}
    \end{remark}

\subsection{Motivation and known results}

	The semiclassical regularity discussed above, for which bounds are established in Theorem~\ref{thm:main}, has recently played a central role in several areas of mathematics, ranging from the analysis of many-body quantum systems to semiclassical analysis and Weyl-type laws. In particular, in~\cite{benedikter_mean-field_2014} the mean-field approximation of the fermionic many-body evolution via the Hartree--Fock equation was established for pure states, i.e., for states whose one-particle density matrix $\op$ is a rank-one projection, corresponding to the zero-temperature regime $\beta=\infty$. This result was later extended in~\cite{benedikter_mean-field_2016, chong_many-body_2024} to mixed states associated with finite positive temperature $\beta>0$.
    In both~\cite{benedikter_mean-field_2014, benedikter_mean-field_2016}, the initial data for the Hartree--Fock equation were assumed to satisfy the semiclassical condition~\eqref{eq:semiclassics} with $p=1$, and for all $p\in[1,\infty]$ in~\cite{chong_many-body_2024}. This approach was subsequently generalized to the pseudo-relativistic Hartree--Fock equation in~\cite{benedikter_mean-field_2014-1}, and to the Pauli--Fierz Hamiltonian approximated by the Maxwell--Schr\"odinger equation in~\cite{leopold_derivation_2024}, where the semiclassical structure of the initial state again constituted a crucial assumption.
    
    In \cite{benedikter_derivation_2025}, the authors derive the Hartree--Fock equation with a constant magnetic field. The key ingredient in their proof is the semiclassical structure and its dependence on the strength of the magnetic field (see Remark~\ref{rmk:magnetic}). We note that
    the semiclassical structure also plays an important role in~\cite{perice_gyrokinetic_2024}, in which, motivated by the study of the quantum hall effect, a strong magnetic field is considered. For this reason, we allow the strength of the magnetic field to be large.
    
	More recently, the works~\cite{fresta_effective_2023, fresta_effective_2025} addressed the approximation of the many-body fermionic Schrödinger evolution with relativistic dispersion by the semi-relativistic Hartree–Fock equation for extended gases. In this framework, a local semiclassical structure was introduced. In particular, the proof in~\cite{fresta_effective_2025} crucially relies  on the approximation of pure states by mixed states in the limit $\beta \to \infty$. This motivates our study of different asymptotic regimes in $\beta$ and $\hbar$, with the aim of quantifying the loss of regularity or of semiclassical structure of mixed states as they approach pure states.

	The above-mentioned semiclassical regularity has also played an important role in the semiclassical analysis of the dynamics of mean-field equations of Hartree type. See, for instance~\cite{benedikter_hartree_2016} where the Hartree--Fock equation is approximated by the Vlasov equation, \cite{leopold_derivation_2026} where the Maxwell--Schr\"odinger equation is approximated by the Vlasov--Maxwell system, \cite{chong_semiclassical_2025} where the Bogoliubov--de Gennes equations is approximated by a system of coupled Vlasov equations. Moreover, semiclassical regularity also gives a control on the size of quantum Wasserstein pseudo-metrics~\cite{golse_semiclassical_2021, lafleche_quantum_2023}, which are a useful tool in proving the joint mean-field and semiclassical limits~\cite{golse_mean_2016, golse_schrodinger_2017, lafleche_propagation_2019, cardenas_effective_2025}.

    Another application is the study of the semiclassical limit of the equilibria themselves, and in particular, at zero temperature, the Weyl law and determinantal processes. In~\cite{deleporte_universality_2024, cardenas_commutator_2025, cardenas_quantitative_2025}, semiclassical regularity estimates are used to obtain quantitative rates of convergence of spectral functions in the semiclassical limit.
    
    All the above mentioned topics motivate the study of Schatten norms of commutators of equilibria. The study of the zero temperature case has been carried out in~\cite{fournais_optimal_2020, benedikter_effective_2022, lafleche_optimal_2024, cardenas_commutator_2025}, and, for constant magnetic fields, in~\cite{benedikter_derivation_2025}. We believe that our work will help in understanding the optimal size of the commutators in all these regimes.

\subsection{Structure of the paper}

    The article is organized as follows:
	In Section \ref{sec:noMagnet}, we investigate the semiclassical regularity of the harmonic oscillator without a magnetic field. 
    We begin in Section \ref{sec:noMagnet-classic} by analyzing the classical mechanical problem, to be compared with the quantum analog considered in Section \ref{sec:noMagnet-quantum}.
    
	In Section \ref{sec:withMagnet}, we consider the magnetic harmonic oscillator. 
    Section \ref{sec:withMagnet-classic} derives both classical upper and lower bounds for the phase space gradients in the presence of a uniform magnetic field. 
    In Section \ref{sec:withMagnet-quantum}, we introduce the algebraic framework for the quantum harmonic oscillator with a magnetic field. Finally, Section \ref{sec:withMagnet-upperbound} is devoted to the upper bound for the magnetic commutators.

\bigskip
\section{Size of Commutators: Harmonic Oscillator}\label{sec:noMagnet}

	For $\beta>0$, we define the phase space distribution
	\begin{equation}\label{eq:def_f_and_F}
		f_{\beta,\mu}(z) = F(\n{z}^2) \quad \text{ with }\quad  F(r) = (1+e^{\beta\lt(r-\mu\rt)})^{-1},
	\end{equation}
	where $z=(x,\xi) \in \Rdd$ is the phase space variable.

\subsection{Classical case}\label{sec:noMagnet-classic}

	In this section, we look at the classical situation, that is, the case of the gradient of the classical phase space distribution. This will serve as a guide for the quantum case.
	
	\begin{lem}
		The position density $n_{\beta, \mu} = \intd f_{\beta, \mu} \d \xi$ satisfies
		\begin{equation*}
			n_{\beta, \mu}(x)
			\approx_d \left[ \lt((\mu-\n{x}^2)^\frac{d}{2}+\beta^{-\frac{d}{2}}\rt)\indic_{\{\n{x}^2 \le \mu\}} + \beta^{-\frac{d}{2}} \, e^{-\beta (\n{x}^2-\mu)}\indic_{\{\mu< \n{x}^2\}}\right],
		\end{equation*}
		and the total mass is given by 
		\begin{equation}\label{eq:mass_classical}
			\intdd f_{\beta, \mu} \approx_d \left[\left(\mu^d + \beta^{-d}\right)\indic_{\{\mu \ge 0\}}+ \beta^{-d} \,e^{\beta \mu}\,\indic_{\{\mu < 0\}}\right].
		\end{equation}
		In particular, the total mass is finite and bounded uniformly in $\beta$ for $\beta\ge 1$. 
	\end{lem}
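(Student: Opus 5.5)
The plan is to reduce both estimates to a single one-dimensional model integral. Fix $x$ and set $a := \mu - \n{x}^2$. Then
\begin{equation*}
	n_{\beta,\mu}(x) = \intd \frac{\d\xi}{1+e^{\beta(\n{\xi}^2-a)}} = \n{\SS^{d-1}}\int_0^\infty \frac{r^{d-1}\d r}{1+e^{\beta(r^2-a)}} ,
\end{equation*}
so everything rests on a two-sided bound for
\begin{equation*}
	I_k(a) := \int_0^\infty \frac{r^{k-1}\d r}{1+e^{\beta(r^2-a)}}, \qquad k\geq 1 .
\end{equation*}
Substituting $s = \beta r^2$ gives $I_k(a) = \tfrac12 \beta^{-k/2} J_k(\beta a)$, where $J_k(t) = \int_0^\infty s^{k/2-1}(1+e^{s-t})^{-1}\d s$. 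This is, up to a constant, the complete Fermi--Dirac integral. The claim then becomes
\begin{equation*}
	J_k(t)\approx_k (t^{k/2}+1)\indic_{t\geq0} + e^{t}\indic_{t<0} .
\end{equation*}

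I would prove this bound on $J_k$ in two cases.

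\textbf{Case $t<0$.} On the whole domain $s-t\geq -t>0$, so $e^{s-t}\leq 1+e^{s-t}\leq 2e^{s-t}$. This gives $J_k(t)\approx e^{t}\,\Gamma(k/2)$.

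\textbf{Case $t\geq 0$.} For the lower bound, restrict to $s\leq t$, where $(1+e^{s-t})^{-1}\geq 1/2$; this contributes $\gtrsim t^{k/2}$. Also restrict to $s\in[t,t+1]$, where the integrand is $\gtrsim s^{k/2-1}$; this contributes $\gtrsim 1$ when $t\leq 1$. Together these give $\gtrsim t^{k/2}+1$. For the upper bound, the part $s\leq t$ is at most $\int_0^t s^{k/2-1}\d s \lesssim t^{k/2}$. For the part $s>t$, use $(1+e^{s-t})^{-1}\leq e^{-(s-t)}$ together with $s^{k/2-1}\lesssim (s-t)^{k/2-1}+t^{k/2-1}$ when $k\geq 2$. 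For $k=1$, use $s^{-1/2}\leq (s-t)^{-1/2}$ instead. Either way this part is $\lesssim 1 + t^{k/2-1}$, and $t^{k/2-1}\leq 1+t^{k/2}$.

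Applying this with $k=d$ and $t=\beta(\mu-\n{x}^2)$ gives the density estimate, since $\beta^{-d/2}(\beta a)^{d/2}=a^{d/2}$.

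For the total mass, I would not integrate the density bound in $x$. Instead I would apply the same lemma directly in phase space: $\intdd f_{\beta,\mu} = \n{\SS^{2d-1}}\, I_{2d}(\mu)$. This gives $\beta^{-d}\bigl((\beta\mu)^{d}+1\bigr) = \mu^d+\beta^{-d}$ for $\mu\geq0$, and $\beta^{-d}e^{\beta\mu}$ for $\mu<0$, which is \eqref{eq:mass_classical}.

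Uniform boundedness for $\beta\geq1$ is then immediate from \eqref{eq:mass_classical}:
\begin{itemize}
	\item if $\mu\geq 0$, the mass is $\lesssim \mu^d+1$;
	\item if $\mu<0$, the mass is $\lesssim e^{\beta\mu}\leq 1$.
\end{itemize}

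The main obstacle I expect is making the $t\geq 0$ upper bound uniform near $t=0$. In particular, the case $k=1$ has a singular weight $s^{-1/2}$, and one must check that the tail contribution is $O(1)$ rather than $O(t^{-1/2})$. Splitting at $s=2t$ versus $s=t$ handles this cleanly if needed.
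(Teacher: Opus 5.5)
Your proposal is correct and follows the paper's route. After $s=\beta r^2$, the density and the mass both reduce to complete Fermi--Dirac integrals: order $\frac{d}{2}-1$ evaluated at $\beta(\mu-\n{x}^2)$, and order $d-1$ evaluated at $\beta\mu$ (your polar coordinates in $\Rdd$ give the same integral as the paper's integration of $n_{\beta,\mu}$ in $x$). The only difference is that the paper quotes the two-sided Fermi--Dirac bounds from the literature, whereas your elementary case analysis proves them. That analysis is already complete as written: for $k=1$ the bound $s^{-1/2}\le (s-t)^{-1/2}$ makes the tail $O(1)$ uniformly in $t\ge 0$, so no extra splitting is needed.
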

	
	\begin{proof}
		A change of variables yields
		\begin{align*}
			n_{\beta,\mu}(x)
			&= \lt(\frac{\pi}{\beta}\rt)^{\frac{d}{2}}\frac{1}{\Gamma(\frac{d}{2})}\int^\infty_0 \frac{t^{\frac{d}{2}-1} \d t}{1+e^{t - \beta\lt(\mu-\n{x}^2\rt)}} =: \lt(\frac{\pi}{\beta}\rt)^{\frac{d}{2}}\cF_{\frac{d}{2}-1}\Bigl(\beta\,\bigl(\mu-\n{x}^2\bigr)\Bigr) ,
		\end{align*}
		where $\cF_{\alpha}(\nu)$ is the Fermi--Dirac integral of order $\alpha$. Similarly,
		\begin{equation*}
			\intdd f_{\beta, \mu}\d z = \intd n_{\beta, \mu}(x)\d x = \lt(\frac{\pi}{\beta}\rt)^d \cF_{d-1}(\beta\mu) \, .
		\end{equation*}
		Using the standard pointwise estimates for the Fermi--Dirac integral (see, e.g., \cite[Lemma 18]{herda_charge_2025}), we obtain the desired result. 
	\end{proof}
	
	Observe that
	\begin{equation*}
		F'(r) = \frac{-\beta\,e^{\beta\lt(r-\mu\rt)}}{\lt(1+e^{\beta\lt(r-\mu\rt)}\rt)^2} = -\beta\,F(r)\lt(1-F(r)\rt) .
	\end{equation*}
	By a straightforward computation, the $L^p$ norms are given by
	\begin{equation}\label{eq:classical grad f}
			\Nrm{\nabla_zf_{\beta, \mu}}{L^p}^p 
			= \frac{2^{p}\pi^d \beta^{p/2-d}}{ \Gamma(d)}
			\int_0^\infty t^{d+p/2-1} 
			\frac{e^{p(t-\beta\mu)}}{(1+e^{t-\beta\mu})^{2p}} \d t
	\end{equation}
	 and 
	\begin{equation}\label{eq:classical f(1-f)}
			\Nrm{f_{\beta, \mu}\lt(1-f_{\beta, \mu}\rt)}{L^p}^p
			= \frac{\pi^d}{\Gamma(d)\,\beta^{d}} \int_0^{\;\infty} t^{d-1} \frac{e^{p(t-\beta\mu)}}{(1+e^{t-\beta\mu})^{2p}} \d t\,.
	\end{equation}
	Hence, it suffices to study integrals of the form
	\begin{equation*}
		I_{p,c}(\nu) := \frac{1}{\Gamma(c)}\int_0^{\infty} t^{c-1}
			\frac{e^{p(t-\nu)}}{(1+e^{t-\nu})^{2p}} \d t \,,
	\end{equation*}
	where $p, c\ge 1$ and $\nu = \beta\mu \in \R$. In fact, we have the following lemma.
	
	\begin{lem}\label{lem:I_{p,c}}
		For $p, c \ge 1$, 
		\begin{equation*}
			I_{p, c}(\nu) \approx_{p, c} \weight{\nu}^{c-1} \indic_{\{\nu \ge 0\}} + e^{p\nu}\, \indic_{\{\nu < 0\}} \, .
		\end{equation*}
	\end{lem}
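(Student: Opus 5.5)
The plan is to substitute $s = t-\nu$ and work with the kernel
\begin{equation*}
	g_p(s) := \frac{e^{ps}}{(1+e^{s})^{2p}} = \bigl(4\cosh^2(s/2)\bigr)^{-p} ,
\end{equation*}
which is even in $s$ and satisfies $g_p(s) \approx_p e^{-p\n{s}}$ for all $s\in\R$. This gives
\begin{equation*}
	I_{p,c}(\nu) \approx_{p,c} \int_{-\nu}^{\infty} (s+\nu)^{c-1}\, e^{-p\n{s}}\d s ,
\end{equation*}
and the whole argument reduces to elementary two-sided estimates of this integral, with constants depending only on $p$ and $c$.

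\emph{Case $\nu<0$.} Here $s\ge -\nu = \n{\nu}>0$ on the whole domain, so $e^{-p\n{s}} = e^{-ps}$. Substituting $s = t-\nu$ back gives
\begin{equation*}
	\int_0^\infty t^{c-1} e^{-p t}\d t \; e^{p\nu} = \Gamma(c)\, p^{-c}\, e^{p\nu} ,
\end{equation*}
which is exactly the claimed term $e^{p\nu}$. Both bounds hold up to the constants implicit in $g_p\approx_p e^{-p\n{s}}$.

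\emph{Case $\nu\ge 0$.} For the upper bound, use $(s+\nu)^{c-1}\le 2^{(c-2)_+}\bigl(\n{s}^{c-1}+\nu^{c-1}\bigr)$, valid since $c\ge1$. Integrating against $e^{-p\n{s}}$ over $\R$ gives a bound $\lesssim_{p,c} 1+\nu^{c-1}\approx \weight{\nu}^{c-1}$. For the lower bound, restrict to $s\in[0,1]$, which lies in the domain because $-\nu\le 0$. There $(s+\nu)^{c-1}\ge \max(\nu,s)^{c-1}$ and $e^{-ps}\ge e^{-p}$. If $\nu\ge1$ this yields $\gtrsim \nu^{c-1}\approx\weight{\nu}^{c-1}$. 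If $\nu<1$, we have $\weight{\nu}^{c-1}\approx 1$, and the integral is at least $e^{-p}\int_0^1 s^{c-1}\d s = e^{-p}/c$.

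I expect no real obstacle. The only point needing care is keeping the constants uniform as $\nu$ crosses $0$. The $\nu<0$ formula gives $e^{p\nu}\to1$ there, which matches the $\nu\ge0$ formula $\weight{\nu}^{c-1}\to1$, so the two regimes glue consistently with constants depending only on $(p,c)$.
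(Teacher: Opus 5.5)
Your proof is correct and follows essentially the paper's route: the two-sided kernel bound $4^{-p}e^{-p|s|}\le g_p(s)\le e^{-p|s|}$ (the paper's $\frac14 e^{-|s-\nu|}\le \frac{e^{s-\nu}}{(1+e^{s-\nu})^2}\le e^{-|s-\nu|}$), the shift $s=t-\nu$, and elementary two-sided bounds on $(s+\nu)^{c-1}$. The only differences are cosmetic: you obtain the $\nu\ge 0$ lower bound by restricting to $s\in[0,1]$, whereas the paper restricts to $s\ge 0$ and uses $(a+b)^{c-1}\ge\frac12\bigl(a^{c-1}+b^{c-1}\bigr)$, and the paper treats $\nu\le0$ directly via $e^{t-\nu}<1+e^{t-\nu}\le 2e^{t-\nu}$.
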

	This gives the following behavior for Lebesgue norms of the gradient of $f_{\beta,\mu}$ and the difference between the function and its square.
	\begin{prop}\label{prop:classical_case}
		For all $\beta>0$ and $\mu\in \R$, we have
		\begin{equation*}
			\Nrm{\grad_z f_{\beta, \mu}}{L^p} \approx_{d,p} \beta^{\frac{1}{2}-\frac{d}{p}} \Bigl( \weight{\beta\mu}^{\frac{d-1}{p}+\frac12} \indic_{\{\mu \ge 0\}} + e^{\beta\mu}\, \indic_{\{\mu < 0\}}\Bigr)
		\end{equation*}
		and
		\begin{equation*}
			\Nrm{f_{\beta, \mu} \lt(1-f_{\beta, \mu}\rt)}{L^p} \approx_{d,p} \beta^{-\frac{d}{p}}\Bigl( \weight{\beta\mu}^{\frac{d-1}{p}} \indic_{\{\mu \ge 0\}} + e^{\beta\mu} \,\indic_{\{\mu < 0\}} \Bigr)\,.
		\end{equation*}
	\end{prop}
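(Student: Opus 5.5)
The plan is to reduce both estimates to Lemma~\ref{lem:I_{p,c}}, using the exact formulas \eqref{eq:classical grad f} and \eqref{eq:classical f(1-f)}.

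\textbf{Gradient.} Since $f_{\beta,\mu}(z)=F(\n{z}^2)$, we have
\begin{equation*}
	\nabla_z f_{\beta,\mu} = 2z\,F'(\n{z}^2) = -2\beta\, z\, F(1-F)(\n{z}^2).
\end{equation*}
In polar coordinates on $\Rdd$, the surface measure of $\SS^{2d-1}$ is $2\pi^d/\Gamma(d)$. The substitution $t=\beta\n{z}^2$ then produces \eqref{eq:classical grad f}, because
\begin{equation*}
	F(1-F)(r) = \frac{e^{\beta(r-\mu)}}{(1+e^{\beta(r-\mu)})^2}.
\end{equation*}
Rewriting that formula with the normalization of $I_{p,c}$ gives
\begin{equation*}
	\Nrm{\nabla_z f_{\beta,\mu}}{L^p}^p = \frac{2^p\pi^d\,\Gamma(d+p/2)}{\Gamma(d)}\,\beta^{p/2-d}\, I_{p,\,d+p/2}(\beta\mu).
\end{equation*}
Here $c=d+p/2\ge 1$, so Lemma~\ref{lem:I_{p,c}} applies with $\nu=\beta\mu$:
\begin{equation*}
	I_{p,\,d+p/2}(\beta\mu) \approx_{p,d} \weight{\beta\mu}^{d+p/2-1}\indic_{\{\mu\ge0\}} + e^{p\beta\mu}\indic_{\{\mu<0\}}.
\end{equation*}
Take the $p$-th root. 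Because the two indicators are disjoint, $(a\indic_A+b\indic_{A^c})^{1/p}=a^{1/p}\indic_A+b^{1/p}\indic_{A^c}$. This yields $\beta^{1/2-d/p}$ times $\weight{\beta\mu}^{(d-1)/p+1/2}$ on $\{\mu\ge0\}$ and $e^{\beta\mu}$ on $\{\mu<0\}$, which is the first claim.

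\textbf{The term $f(1-f)$.} The same computation without the factor $2\beta z$ gives \eqref{eq:classical f(1-f)}, that is
\begin{equation*}
	\Nrm{f_{\beta,\mu}(1-f_{\beta,\mu})}{L^p}^p = \pi^d\beta^{-d} I_{p,d}(\beta\mu).
\end{equation*}
Lemma~\ref{lem:I_{p,c}} with $c=d$ (valid since $d\ge1$) gives
\begin{equation*}
	I_{p,d}(\beta\mu) \approx \weight{\beta\mu}^{d-1}\indic_{\{\mu\ge0\}}+e^{p\beta\mu}\indic_{\{\mu<0\}}.
\end{equation*}
Taking $p$-th roots gives the second claim.

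\textbf{Main obstacle.} All the real content sits in Lemma~\ref{lem:I_{p,c}}, which we are allowed to assume. If it had to be proved, the approach would use the bound
\begin{equation*}
	\frac{e^{p s}}{(1+e^{s})^{2p}} \approx_p e^{-p\n{s}}, \qquad s=t-\nu.
\end{equation*}
For $\nu<0$ the integrand is then comparable to $t^{c-1}e^{p(t-\nu)}\cdot e^{-2p(t-\nu)}=t^{c-1}e^{p\nu}e^{-pt}$, so $I_{p,c}(\nu)\approx e^{p\nu}$. For $\nu\ge0$ the mass concentrates in a unit window around $t=\nu$. This gives $\approx\nu^{c-1}$ when $\nu\ge1$, and $\approx 1$ for $\nu\in[0,1]$; together these are $\weight{\nu}^{c-1}$. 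The only delicate point is checking that the tails $t<\nu-\nu/2$ and $t>2\nu$ contribute at most $O(\weight{\nu}^{c-1})$, which follows from the exponential decay.

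In the statement itself the only real care needed is the bookkeeping of constants. In particular, the factor $\Gamma(d+p/2)/\Gamma(d)$ depends only on $d$ and $p$, which is what is needed for the $\approx_{d,p}$ conclusion.
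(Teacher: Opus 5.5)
Your proposal is correct and follows the paper's route. You rewrite the radial formulas for $\Nrm{\nabla_z f_{\beta,\mu}}{L^p}^p$ and $\Nrm{f_{\beta,\mu}(1-f_{\beta,\mu})}{L^p}^p$ as $\beta^{p/2-d}\,I_{p,d+p/2}(\beta\mu)$ and $\beta^{-d}\,I_{p,d}(\beta\mu)$, up to constants depending only on $d,p$, then apply Lemma~\ref{lem:I_{p,c}} and take $p$-th roots, which is exactly how the paper obtains the proposition (your optional window-and-tails sketch of the lemma is also valid).
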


	\begin{proof}[Proof of Lemma~\ref{lem:I_{p,c}}]
		First, consider the case $\nu \le 0$. Since $e^{t-\nu} < 1 + e^{t-\nu} \le 2\, e^{t-\nu}$, we have
		\begin{equation*}
			I \ge \frac{e^{p\nu}}{2^{2p}\Gamma(c)} \int_0^{\infty} t^{c-1} e^{-pt} \d t = 2^{-2p}\, p^{-c}\, e^{p\nu}
			\quad \text{and} \quad
			I \le p^{-c}\, e^{p\nu} \,.
		\end{equation*}
		Thus, $I \approx_{p, c} e^{p\nu}$ for all $\nu \le 0$. For $\nu > 0$, we use the inequality
		\begin{equation*}
			\frac14 \, e^{-\n{s-\nu}} \le \frac{e^{s-\nu}}{(1+e^{s-\nu})^{2}} \le e^{-\n{s-\nu}} \,,
		\end{equation*}
		which yields the bound
		\begin{equation*}
			I \approx_{p} \frac{1}{\Gamma(c)} \int_0^{\infty} s^{c-1} e^{-p\n{s-\nu}} \d s
			= \frac{1}{\Gamma(c)} \int_{-\nu}^{\infty} (t+\nu)^{c-1} e^{-p\n{t}} \d t \, ,
		\end{equation*}
		after the change of variable $t = s - \nu$. Using the inequality $(a+b)^c \le 2^c \lt(a^c + b^c\rt)$ for $a,b,c > 0$, we obtain
		\begin{multline*}
			\frac{1}{\Gamma(c)} \int_{-\nu}^{\infty} (t+\nu)^{c-1} e^{-p\n{t}} \d t
			\le \frac{2}{\Gamma(c)} \int_0^{\infty} (t+\nu)^{c-1} e^{-pt} \d t
			\\
			\le \frac{2^c}{\Gamma(c)} \int_0^{\infty} \bigl(t^{c-1} + \nu^{c-1}\bigr) e^{-pt} \d t = 2^c \biggl( \frac{1}{p^c} + \frac{\nu^{c-1}}{p\Gamma(c)} \biggr).
		\end{multline*}
		Conversely, using $(a+b)^c \ge 2^{-1}(a^c + b^c)$, we have
		\begin{multline*}
			\frac{1}{\Gamma(c)} \int_{-\nu}^{\infty} (t+\nu)^{c-1} e^{-p\n{t}} \d t
			\ge \frac{1}{\Gamma(c)} \int_0^{\infty} (t+\nu)^{c-1} e^{-pt} \d t \\
			\ge \frac{1}{2\Gamma(c)} \int_0^{\infty} \bigl(t^{c-1} + \nu^{c-1}\bigr) e^{-pt} \d t = \frac{1}{2} \biggl( \frac{1}{p^c} + \frac{\nu^{c-1}}{p\Gamma(c)} \biggr).
		\end{multline*}
		Combining these bounds gives the desired result.
	\end{proof}
	
	\begin{remark}
		From Proposition~\ref{prop:classical_case}, one can recover the norms of the gradients with respect to $x$ or $\xi$. Indeed, for instance, the gradient with respect to $x$, by polar decomposition, writing $z=r\,u=r\lt(u_x,u_\xi\rt)$ with $u\in\SS^{2d-1}$, it holds
		\begin{equation*}
			\Nrm{\Dx f}{L^p}^p = \intdd \lvert 2\,x\, F'(\n{z}^2)\rvert^p \d z = \int_0^\infty \int_{\SS^{2d-1}} \n{2\,u_x}^p \lvert F'(r^2)\rvert^{p}\, r^{2d+p-1} \d u \d r \, .
		\end{equation*}
		One can compute the integral on the sphere by using that it is the derivative of the integral on the ball of size $R$ at $R=1$, that is
		\begin{equation*}
			\int_{\SS^{2d-1}} \n{u_x}^p \d u = J'(1) \quad \text{ with } \quad  J(R) = \int_{B_R} \n{x}^p \d z \, .
		\end{equation*}
		The integral of $J(R)$ can be computed using the Fubini theorem to integrate in $\xi$ before integrating in $x$, a polar change of variable and a formula for the Beta function, giving
		\begin{equation*}
			J(R) = \frac{\omega_d}{d} \intd \n{x}^p \lt(R^2-\n{x}^2\rt)_+^{d/2} \d x =  \frac{\omega_d\,\omega_{2d+p}}{\omega_{d+p}} \frac{R^{2d+p}}{2d+p}
		\end{equation*}
		where $\omega_d = \frac{2\,\pi^{d/2}}{\Gamma(d/2)}$ is the measure of $\SS^{d-1}$ when $d$ is an integer. This yields finally
		\begin{equation}\label{eq:partial_gradients}
			\Nrm{\Dv f}{L^p} = \Nrm{\Dx f}{L^p} = C_{d,p} \Nrm{\nabla_z f}{L^p}
		\end{equation}
		with $C_{d,p}^p = \frac{\omega_d\,\omega_{2d+p}}{\omega_{2d}\,\omega_{d+p}}$.
	\end{remark}

\subsection{Quantum case}\label{sec:noMagnet-quantum}

	We now proceed to the quantum case, for which the quantity $\n{z}^2$ is replaced by the harmonic oscillator. More precisely, let $\opz = (x,\opp)$ where $\opp = -i\hbar\nabla$, then the harmonic oscillator is given by
	\begin{equation*}
		\n{\opz}^2 = -\hbar^2\Delta + \n{x}^2 = \n{\opp}^2 + \n{x}^2 .
	\end{equation*}
	The associated creation and annihilation operators are given by
	\begin{equation*}
		a = x + i\,\opp \quad \text{ and } \quad a^* = x - i \, \opp\,.
	\end{equation*}
	They satisfy $\com{a, a^*}= a\cdot a^*-a^*\cdot a= 2d\hbar$ and $\n{a}^2 := a^*\cdot a = \n{\opz}^2 - d\hbar$, that is
	\begin{equation*}
		\n{\opz}^2 = \n{a}^2 + d\hbar\,.
	\end{equation*}
	With this, it is a standard result that the eigenvalues of $\n{\opz}^2$ are of the form $\lambda_n = (2n+d) \hbar$ with multiplicity
	\begin{equation}\label{eq:multiplicity}
		N_n := \bigl|\{\alpha\in\N^d:\alpha_1+\dots+\alpha_d = n\}\bigr| = \binom{n+d-1}{d-1} .
	\end{equation} 
	We consider $F(r) = (1+e^{\beta(r-\mu)})^{-1}$ and the corresponding one-particle density operator
	\begin{equation*}
		\opgam_{\beta, \mu} := F(\n{\opz}^2)\, .
	\end{equation*}
	Its quantum Sobolev norm satisfies
	\begin{equation*}
		\bNrm{\Dhz \opgam_{\beta,\mu}}{\cL^p} \approx_p \bNrm{ \Dhx \opgam_{\beta,\mu} }{\cL^p} + \bNrm{\Dhv \opgam_{\beta,\mu}}{\cL^p} \approx_p  \frac{1}{\hbar} \bNrm{ [a,\opgam_{\beta, \mu}] }{\cL^p}\,,
	\end{equation*}
	which follows from Formula~\eqref{eq:Sobolev_equiv_norm}, the triangle inequality for Schatten norms, and the identities
	\begin{align*}
		\frac{1}{i\hbar}\com{a, A} &= \Dhx A - \Dhv A \, , & \frac{1}{i\hbar}\com{a^\ast, A} &= \Dhx A + \Dhv A \, ,
		\\
		\Dhx A &= \frac{1}{2i\hbar}\com{a-a^\ast, A} , & \Dhv A&=\frac{1}{2i\hbar}\com{a+a^\ast, A} ,
	\end{align*}
	and the fact that $\com{a^*, A} = -\com{a,A}^*$ provided $A$ is self-adjoint.
	
	\begin{lem}\label{lem:harmonic_S^p_p}
		The Schatten norm of the commutator of $a$ and the density operator $\opgam_{\beta,\mu}$ satisfies
		\begin{equation*}
			S_p^p := \frac{1}{\hbar^p}\Nrm{\com{a,\opgam_{\beta, \mu}}}{\L^p}^p = \frac{\hd}{\hbar^p} \Tr{\n{F(\n{\opz}^2) - F(\n{\opz}^2-2\hbar)}^p \lt(\n{\opz}^2-d\hbar\rt)^{p/2}} .
		\end{equation*}
		Expressing $\n{\opz}^2$ in the eigenfunction basis, this yields
		\begin{equation}\label{def:Schatten_commutation_with_eta}
			S_p^p =\, \frac{2\pi}{\hbar^{p-1}}\sum_{n\geq 0} N_n\, h^{d-1} \n{F(\lambda_n) - F(\lambda_{n-1})}^p \lt(\lambda_n-d\hbar\rt)^\frac{p}{2} ,
		\end{equation}
		where we define $\lambda_{-1} = \lt(d-2\rt)\hbar$.
	\end{lem}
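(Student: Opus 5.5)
The plan is to use the intertwining relation between $a$ and the harmonic oscillator. Write $\sfH := \n{\opz}^2$. From the componentwise relation $[\sfH, a_\jj] = -2\hbar\, a_\jj$ we get $\sfH\, a_\jj = a_\jj(\sfH - 2\hbar)$. By functional calculus (first for polynomials, then for the bounded continuous function $F$, or directly on eigenvectors) this gives $F(\sfH)\,a_\jj = a_\jj\,F(\sfH-2\hbar)$. Hence
\begin{equation*}
	\com{a_\jj,\opgam_{\beta,\mu}} = a_\jj\bigl(F(\sfH) - F(\sfH-2\hbar)\bigr) =: a_\jj\, G(\sfH) ,
\end{equation*}
where $G(\sfH)$ is self-adjoint and commutes with $\sfH$.

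Next we compute the absolute value of the vector-valued operator $\com{a,\opgam}$:
\begin{equation*}
	\n{\com{a,\opgam}}^2 = \sum_\jj G(\sfH)\, a_\jj^* a_\jj\, G(\sfH) = G(\sfH)\,\n{a}^2\, G(\sfH) = G(\sfH)^2\,(\sfH - d\hbar) ,
\end{equation*}
using $\n{a}^2 = a^*\cdot a = \sfH - d\hbar$. All factors are functions of $\sfH$, so they commute. Therefore
\begin{equation*}
	\n{\com{a,\opgam}}^p = \n{G(\sfH)}^p (\sfH-d\hbar)^{p/2} .
\end{equation*}
Multiplying by $\hd/\hbar^p$ and taking the trace gives the first formula.

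For the second formula we diagonalize $\sfH$ in the Hermite basis. The eigenvalues are $\lambda_n = (2n+d)\hbar$ with multiplicity $N_n$, and $\lambda_n - 2\hbar = \lambda_{n-1}$. For $n=0$ the convention $\lambda_{-1} = (d-2)\hbar$ is consistent with this, and in any case the factor $(\lambda_0 - d\hbar)^{p/2}$ vanishes. The trace is then a sum over $n$ of $N_n$ times the eigenvalue of the operator. Rewriting $\hd/\hbar^p = 2\pi\, h^{d-1}/\hbar^{p-1}$ via $h = 2\pi\hbar$ yields the stated expression.

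The only delicate point is the justification of the intertwining relation, since $a$ and $\sfH$ are unbounded. The cleanest route is to verify it directly on the Hermite eigenfunctions, where $a_\jj$ lowers the $\jj$-th index, so that $a_\jj$ maps the $\lambda_n$-eigenspace into the $\lambda_{n-1}$-eigenspace. One should also check that $\com{a,\opgam}$ is well defined on this core and extends as claimed. This is routine because $F$ decays exponentially, which makes all the sums converge.
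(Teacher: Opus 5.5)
Your proof is correct and follows the paper's argument. Both use the intertwining relation $F(\sfH)\,a = a\,F(\sfH-2\hbar)$ to write the commutator as $a\,\bigl(F(\sfH)-F(\sfH-2\hbar)\bigr)$, compute its absolute value via $a^*\cdot a = \sfH - d\hbar$, and then diagonalize in the Hermite basis using $\lambda_n - 2\hbar = \lambda_{n-1}$. Your justification of the intertwining directly on Hermite eigenfunctions, and your note that the $n=0$ term vanishes, make explicit points the paper leaves implicit.
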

	
	\begin{proof}
		The creation operator satisfies
		\begin{equation}\label{eq:commuting_a_and_H}
			\com{a, \n{\opz}^2} = 2\hbar\, a\,.
		\end{equation}
		By functional calculus and Equation~\eqref{eq:commuting_a_and_H}, this implies
		\begin{equation*} 
			a\,F(\n{\opz}^2) = F(\n{\opz}^2+2\hbar)\,a \quad \text{ and } \quad F(\n{\opz}^2)\,a = a\,F(\n{\opz}^2-2\hbar)\,,
		\end{equation*}
		where the second identity follows by replacing $F(y)$ with $F(y-2\hbar)$. Using these identities yields
		\begin{equation*}
			\com{a,\opgam_{\beta, \mu}} = \com{a,F(\n{\opz}^2)} = a \lt(F(\n{\opz}^2) - F(\n{\opz}^2-2\hbar)\rt).
		\end{equation*}
		Then, applying the commutation relation of $a$, we get
		\begin{equation*}
			\n{\com{a,\opgam_{\beta, \mu}}}^2 = \lt(\conj{F}(\n{\opz}^2) - \conj{F}(\n{\opz}^2-2\hbar)\rt) \lt(\n{\opz}^2-d\hbar\rt) \lt(F(\n{\opz}^2) - F(\n{\opz}^2-2\hbar)\rt)
		\end{equation*}
		from which we deduce that
		\begin{equation*}
			\n{\com{a,\opgam_{\beta, \mu}}} = \n{F(\n{\opz}^2) - F(\n{\opz}^2-2\hbar)} \sqrt{\n{\opz}^2-d\hbar}\,.
		\end{equation*}
		The result follows by taking the trace of the power $p$ of this identity.
	\end{proof}

	To estimate the right-hand side of Identity~\eqref{def:Schatten_commutation_with_eta}, let us state the following lemma which is a discrete analog of Lemma~\ref{lem:I_{p,c}}. Again, one can think of $\nu$ as representing the value of $\beta\,\mu$, or more precisely $\nu = \beta\lt(\mu-d\hbar\rt) \in\R$. On the other hand, there is a new semiclassical parameter $\eta = \beta\,\hbar > 0$.

	\begin{lem}\label{lem:S_{p,c}}
		Fix $p, c \ge 1$ and $M>0$, and define
		\begin{equation*}
			S_{p,c}(\eta,\nu) := \sum^\infty_{n=1} \frac{e^{p\lt(\eta n-\nu\rt)}}{\lt(1+e^{\eta n-\nu}\rt)^p \lt(1+e^{\eta \lt(n-1\rt)-\nu}\rt)^p} \lt(\eta n\rt)^{c-1} \eta \, .
		\end{equation*}
		Then, if the parameter $\eta> 0$ satisfies $\eta \le M$ and $\nu \in \R$, we have
		\begin{equation*}
			S_{p,c}(\eta,\nu) \approx_{p,c,M} \bigl(1 + \nu^{c-1} + \eta^{c-1}\bigr) \, \indic_{\{\nu \ge 0\}} + e^{p\nu} \, \indic_{\{\nu < 0\}}\,.
		\end{equation*}
		In particular, we recover the classical result when $\eta = \beta\,\hbar$ is sufficiently small.
	\end{lem}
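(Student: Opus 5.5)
The plan is to compare the summand with the continuous kernel from the proof of Lemma~\ref{lem:I_{p,c}}. Set $s=\eta n$. Since $\eta\le M$, the second denominator satisfies
\begin{equation*}
	1+e^{s-\eta-\nu}\approx_M 1+e^{s-\nu},
\end{equation*}
because $e^{-M}(1+e^{s-\nu})\le 1+e^{s-\eta-\nu}\le 1+e^{s-\nu}$. So up to constants depending on $p$ and $M$ the summand equals $g(\eta n)\,(\eta n)^{c-1}\eta$, where
\begin{equation*}
	g(s)=\frac{e^{p(s-\nu)}}{(1+e^{s-\nu})^{2p}}\approx_p e^{-p\n{s-\nu}} .
\end{equation*}
This is the same inequality as in the proof of Lemma~\ref{lem:I_{p,c}}. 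It therefore suffices to estimate the Riemann-type sum
\begin{equation*}
	\Sigma:=\sum_{n\ge1} e^{-p\n{\eta n-\nu}}(\eta n)^{c-1}\eta .
\end{equation*}

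For the upper bound, compare each term with an integral over a neighbouring cell of length $\eta$. On $s\in[\eta n,\eta n+\eta]$ we have $e^{-p\n{\eta n-\nu}}\le e^{pM}e^{-p\n{s-\nu}}$, since the exponent moves by at most $\eta\le M$. We also have $(\eta n)^{c-1}\le s^{c-1}$ because $c\ge1$. This gives $\Sigma\lesssim_{p,M}\int_0^\infty s^{c-1}e^{-p\n{s-\nu}}\d s$, and the computation in Lemma~\ref{lem:I_{p,c}} bounds this by $\weight{\nu}^{c-1}$ for $\nu\ge0$ and by $e^{p\nu}$ for $\nu<0$. For the lower bound, use cells $[\eta(n-1),\eta n]$ to get $\Sigma\gtrsim\int_0^\infty s^{c-1}e^{-p\n{s-\nu}}\d s$, which already yields the terms $1+\nu^{c-1}$, respectively $e^{p\nu}$.

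The discreteness shows up through the extra term $\eta^{c-1}$: the sum starts at $n=1$, so $s\ge\eta$. On the upper side, $(\eta n)^{c-1}\le(\eta+s)^{c-1}$ when comparing with the cell $[\eta(n-1),\eta n]$, which gives an additional $\eta^{c-1}$ contribution. On the lower side, the $n=1$ term alone is $\gtrsim e^{-p\n{\eta-\nu}}\eta^{c}$. The main obstacle is bookkeeping this boundary term uniformly in both regimes of $\nu$. I would handle it by separating $\eta\le1$ and $1\le\eta\le M$; in the second case all quantities $\eta^{c-1}$ and $\eta^c$ are comparable to $1$ with $M$-dependent constants. For $\nu<0$ the correction is absorbed: terms with $s\ge\eta$ give $\lesssim e^{p\nu}$, and the $n=1$ term gives $\gtrsim_M e^{p\nu}\eta^c$. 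When $\eta$ is small, the lower bound must then come from the terms with $s\in[1,2]$, of which there are about $1/\eta$, each of size $\approx e^{p\nu}\eta$, so together they contribute $\approx e^{p\nu}$. For $\nu\ge0$, the terms with $s$ within distance $1$ of $\max(\nu,1)$ contribute $\gtrsim\weight{\nu}^{c-1}$, using that at least one lattice point $\eta n$ lies in any interval of length $M$. Since $\eta\le M$ makes $\eta^{c-1}\lesssim_M 1$, this gives the stated equivalence with $1+\nu^{c-1}+\eta^{c-1}$.
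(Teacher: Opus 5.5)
Your argument is correct, but it takes a different and more elementary route than the paper. The paper computes the sum rather than comparing it with an integral:
\begin{itemize}
\item for $\nu<0$ it reduces the sum to the polylogarithm $\eta^c\operatorname{Li}_{-c+1}(e^{-\eta p})$ and uses the Abel--Plana formula to show that it equals $\Gamma(c)/p^c+O(\eta^c)$;
\item for $\nu>0$ it splits the sum at $n\approx\nu/\eta$ into two tails and an interface term, and applies Abel--Plana again to extract an incomplete Gamma function main term with remainders of size $O(\eta\,\nu^{c-1}+\eta^c)$.
\end{itemize}

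You instead replace the second Fermi factor by the first (cost $e^{pM}$) and the kernel by $e^{-p\n{s-\nu}}$ (cost $4^p$). You then sandwich the Riemann sum between $e^{\mp pM}\int s^{c-1}e^{-p\n{s-\nu}}\d s$, using the cells $[\eta n,\eta n+\eta]$ for the upper bound and $[\eta(n-1),\eta n]$ for the lower bound. Both bounds are valid because the exponent moves by at most $\eta\le M$ on a cell and $s^{c-1}$ is nondecreasing. Lemma~\ref{lem:I_{p,c}} then concludes, and $\eta^{c-1}\lesssim_M 1$ absorbs the extra term.

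The paper's route yields a genuine expansion showing how $S_{p,c}$ approaches the classical integral as $\eta\to0$, which your comparison with fixed constants does not. Yours is shorter, uniform on all of $(0,M]$, and gives the lower bound directly. With Abel--Plana one still has to check that the remainders do not swamp the main term when $\eta$ is not small.

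Finally, your paragraph on the ``discreteness'' and the ``main obstacle'' is superfluous, because the lower sandwich already holds in every regime. As written, its last step would also need a window of width $M$ rather than $2$ around $\max(\nu,1)$: a window of width $2$ need not contain a point $\eta n$ when $\eta>2$.
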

	\begin{proof}
		In the case $\nu<0$, we use $e^{\eta n-\nu} < 1 + e^{\eta n-\nu} \le 2 \, e^{\eta n-\nu}$, which yields
		\begin{equation*}
			S_{p,c}(\eta,\nu) \approx_{p} \eta^c \operatorname{Li}_{-c+1}(e^{-\eta p})\, e^{p\nu},
		\end{equation*}
		where $\operatorname{Li}_{s}(z) = \sum^\infty_{n=1} z^n\, n^{-s}$ is the polylogarithm of order $s$ provided $\n{z}<1$. Hence, it suffices to consider $\eta^c \operatorname{Li}_{-c+1}(e^{-\eta p})$.
	
		The case $c=1$ reduces to studying the geometric series, which yields 
		\begin{equation*}
			\eta\operatorname{Li}_0(e^{-\eta p}) = \frac{1}{p}\frac{\eta p }{e^{\eta p}-1} \approx_{p, M} 1 \, .
		\end{equation*} 
		Hence, it suffices to consider $c>1$. Applying the Abel--Plana formula (see, e.g., \cite[Section 8.3]{olver_asymptotics_1997}), we get
		\begin{equation*}
			\sum^\infty_{n=1} e^{-\eta p n}\, n^{c-1} = \int_0^\infty e^{-\eta p x} \, x^{c-1}\d x+2\int^\infty_0 x^{c-1} \frac{\sin\Bigl(\eta p x - \frac{\pi (c-1)}{2}\Bigr)}{e^{2\pi x}-1}\d x \, .
		\end{equation*}
		This yields 
		\begin{equation*}
			\eta^c\operatorname{Li}_{-c+1}(e^{-\eta p}) = \frac{\Gamma(c)}{p^c}+2\int^\infty_0 t^{c-1} \frac{\cos(p\,t - \frac{\pi c}{2})}{e^{2\pi t/\eta}-1}\d t \, .
		\end{equation*}
		Then it follows that 
		\begin{equation}\label{est:polylogarithm}
			\n{ \eta^c \operatorname{Li}_{-c+1}(e^{-\eta p}) - \frac{\Gamma(c)}{p^c}} \leq 2\int^\infty_0 \frac{t^{c-1}}{e^{2\pi  t/ \eta}-1}\d t
			= \frac{\zeta(c)\Gamma(c)}{2^{c-1}\pi^c} \, \eta^c .
		\end{equation}
	
		In the case $\nu>0$, we split the sum into three parts:
		\begin{equation}\label{eq:3_split_sum}
			S_{p,c} = \Biggl(\sum_{n:\, n\le \floor{\nu/\eta}}+ \sum_{n:\, n> \lceil\nu/\eta\rceil} + \sum_{\lfloor\nu/\eta\rfloor <n \le \lceil\nu/\eta\rceil}\Biggr)(\cdots)=: \Sigma_1+\Sigma_2+\Sigma_3\,.
		\end{equation}
		Since $\Sigma_1$ and $\Sigma_2$ can be treated similarly, we shall focus on $\Sigma_2$. Again, note that 
		\begin{equation*}
			\Sigma_2 \approx_p \eta^c \sum_{n:\, n> n_0} e^{-(\eta n-\nu)p} \, n^{c-1},
		\end{equation*}
		where $n_0 := \lceil\nu/\eta\rceil$. Rewriting the sum yields 
		\begin{equation*}
			\Sigma_2 \approx_p \eta^c e^{-\delta p}\sum_{k=1}^\infty e^{-\eta p k} \lt(n_0+k\rt)^{c-1},
		\end{equation*}
		where $\delta:=\eta \,n_0 - \nu \in [0, \eta)$.

		When $c=1$, we again have a geometric series, which yields 
		\begin{equation*}
			\Sigma_2\approx_p \frac{e^{-\delta p}}{p} \, \frac{\eta p}{e^{\eta p}-1} \approx_{p, M} 1 \, .
		\end{equation*}
		In the case $c>1$, we apply the Abel--Plana formula to get 
		\begin{multline}\label{eq:J2_abel-plana}
			\eta^c \, e^{- \delta p}\sum_{k=1}^\infty e^{-\eta p k} \lt(n_0+k\rt)^{c-1} =  e^{p\nu} \, \frac{\Gamma(c, \eta\, p\lt(n_0+1\rt))}{p^c}
			\\
			+\tfrac{1}{2}\, \eta^c\, e^{-(\delta+\eta)p} \lt(n_0+1\rt)^{c-1}
			\\
			+2 \, e^{-\eta p} \int_0^\infty 
			\frac{\bigl(\eta^2\lt(n_0+1\rt)^2 + t^2\bigr)^{(c-1)/2} 
			\sin\!\lt(p \, t- \lt(c-1\rt) \arctan\frac{t}{\nu+\delta+\eta}  \rt)}
			{e^{2\pi t/\eta} - 1} \d t \, .
		\end{multline}
		Note that we have the estimate 
		\begin{equation*}
			\frac{e^{p\nu}}{p^c}\n{\Gamma(c, p\,\nu)-\Gamma(c, p\lt(\nu+\delta+\eta\rt))} =\, \int^{\delta+\eta}_0 (\nu + t)^{c-1} e^{-p t}\d t
			\lesssim  \eta\, \nu^{c-1}+\eta^c,
		\end{equation*}
		which means 
		\begin{equation*}
			e^{p\nu}\,\frac{\Gamma(c, \eta\, p\lt(n_0+1\rt))}{p^c} = \frac{e^{p\nu}}{p^c} \,\Gamma(c, p\,\nu) + \mathcal{O}\bigl(\eta\,\nu^{c-1}+\eta^c\bigr)\,.
		\end{equation*}
		To estimate the last term of Equation~\eqref{eq:J2_abel-plana}, we split the integral as $J_1+J_2 := \bigl(\int_{t<\eta} + \int_{t>\eta}\bigr)(\cdots)$. For the second integral, we have the bound 
		\begin{multline*}
			J_2 \lesssim e^{-\eta p}\int^\infty_\eta \frac{\big[(\nu+\delta+\eta)^2 + t^2\big]^{(c-1)/2}}
			{e^{2\pi t/\eta} - 1} \d t
			\\
			\lesssim e^{-\eta p}\lt( \eta\,\nu^{c-1} \int^\infty_1 \frac{\d u}
			{e^{2\pi u} - 1} + \eta^c\int^\infty_1 \frac{u^{c-1}}{e^{2\pi u}-1}\d u \rt) \lesssim e^{-\eta p}\, \bigl(\eta\,\nu^{c-1}+\eta^c\bigr) \, .
		\end{multline*}
		Lastly, for the $J_1$ term, we have 
		\begin{align*}
			J_1 &\lesssim e^{-\eta p} \, \nu^{c-1}\int^\eta_0 \frac{\n{ 
			\sin\!\lt(p\, t- \lt(c-1\rt)\arctan\frac{t}{\nu+\delta+\eta}  \rt)}}
			{e^{2\pi t/\eta} - 1} \d t
			\\
			&\lesssim e^{-\eta p}\, \nu^{c-1}\int^\eta_0 \frac{\n{ p\, t - \lt(c-1\rt)\arctan\frac{t}{\nu+\delta+\eta}}}{e^{2\pi t/\eta} - 1} \d t
			\\
			&\lesssim e^{-\eta p}\,\nu^{c-1}\lt(p+\frac{c-1}{\eta}\rt)\int^\eta_0\frac{t}{e^{2\pi t/\eta}-1}\d t\lesssim_{M, c}\eta\,\nu^{c-1}e^{-\eta p}.
		\end{align*}
	
		For the interface term $\Sigma_3$, we have 
		\begin{equation}\label{eq:interface_term}
			\Sigma_3 = \frac{e^{\delta p}}{\lt(1+e^{\delta}\rt)^p \lt(1+e^{\delta-\eta}\rt)^p} \lt(\nu+\delta\rt)^{c-1} \eta \, .
		\end{equation}
		This completes the proof of the lemma.
	\end{proof}

	Let us also state the following lemma for $\eta = \beta\hbar$ large, which will be useful for the case of very small temperatures.
	\begin{lem}\label{lem:S_summation_large_eta}
		Fix $p\ge 1$ and $c\ge 1$. Suppose the parameters $\eta\ge 1$ and $\nu \in \R$ satisfy $\eta\gg \n{\nu}$, then we have
		\begin{equation*}
			S_{p,c}(\eta,\nu)\approx_{p,c} \eta^c\, e^{-(\eta-\nu)p}\, \indic_{\{\nu < 0\}} +  \eta^c \, \indic_{\{\nu \ge 0\}}\,.
		\end{equation*} 
		If the parameters satisfy $1\le \eta\le \n{\nu}$, then we have 
		\begin{equation*}
			S_{p,c}(\eta, \nu) \approx_{p, c} \eta^c \, e^{-(\eta-\nu)p} \, \indic_{\{\nu < 0\}} +  \eta \, \nu^{c-1} \, \indic_{\{\nu \ge 0\}}\,.
		\end{equation*}
	\end{lem}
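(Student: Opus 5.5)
The plan is to exploit that, for $\eta\ge 1$, the summand of $S_{p,c}(\eta,\nu)$ behaves like a two-sided geometric sequence in $n$ with ratio $e^{-p\eta}\le e^{-p}$, so the whole sum is comparable to its largest term. Write $x_n := \eta n-\nu$ and
\begin{equation*}
	g(x) := \frac{e^{px}}{(1+e^{x})^p(1+e^{x-\eta})^p}\,, \qquad \text{so that the summand is } g(x_n)\,(\eta n)^{c-1}\eta \, .
\end{equation*}
The elementary inequalities $e^{y}\le 1+e^{y}\le 2e^{y}$ for $y\ge 0$ and $1\le 1+e^{y}\le 2$ for $y\le 0$ (already used in Lemma~\ref{lem:I_{p,c}} and Lemma~\ref{lem:S_{p,c}}) give, up to factors $2^{\pm 2p}$, three regimes: $g(x)\approx e^{-p(x-\eta)}$ for $x\ge \eta$; $g(x)\approx e^{px}e^{-px}=1$ up to the factor $e^{-p\max(x-\eta,0)}$ for $0\le x\le \eta$, more precisely $g(x)\approx 1$ there; and $g(x)\approx e^{px}$ for $x\le 0$.

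In each regime, consecutive terms then differ by a factor comparable to $e^{\mp p\eta}$, multiplied by the polynomial factor $((n+1)/n)^{c-1}\le 2^{c-1}$. Because $\eta\ge 1$, the polynomial growth is absorbed into the exponential decay only after $n$ exceeds a constant depending on $c$. To handle this cleanly, I will compare each tail directly with the geometric series $\sum_k e^{-p\eta k}(n_*+k)^{c-1}$. This series is bounded by $C_{p,c}\, n_*^{c-1}$, using $(n_*+k)^{c-1}\le 2^{c-1}(n_*^{c-1}+k^{c-1})$ and $\sum_k k^{c-1}e^{-pk}<\infty$. The conclusion is that $S_{p,c}\approx_{p,c}$ the sum of the at most two terms whose $x_n$ lies in $[0,\eta]$, or, if there is no such term, the single term with $x_n$ closest to that window.

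With this reduction, the cases are read off as follows.

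\emph{Case $1\le\eta\le\nu$.} The dominant index is $n_0=\lceil \nu/\eta\rceil$, which satisfies $x_{n_0}=\delta\in[0,\eta)$. Since $\eta n_0\in[\nu,\nu+\eta)\subset[\nu,2\nu]$, the term equals $g(\delta)\,(\eta n_0)^{c-1}\eta\approx \eta\,\nu^{c-1}$. This is the interface term $\Sigma_3$ of \eqref{eq:interface_term}; the lower bound follows from keeping this single term.

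\emph{Case $\nu<0$.} Every $x_n>\eta$ for $n\ge1$, so the $n=1$ term dominates and the sum is comparable to $\eta^c\,g(\eta-\nu)$, which I evaluate explicitly from the definition of $g$.

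\emph{Case $0\le \nu\ll\eta$.} Here $n=1$ again dominates, with $x_1=\eta-\nu\in(0,\eta]$, giving $\eta^c\, g(\eta-\nu)$; I will check that this is comparable to $\eta^c$ in the stated regime.

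The main obstacle is keeping the explicit value of the dominant term honest: the factor $(1+e^{x-\eta})^{-p}$ at $n=1$ produces factors like $(1+e^{-\nu})^{-p}$, and these must be matched carefully with the exponentials claimed in the statement. I would first write the $n=1$ term exactly and only then apply the two-sided bounds. The second, minor, point is the uniformity of the constants when $\eta$ is close to $1$, where the decay ratio is only $e^{-p}$; this is handled by the comparison with $\sum_k e^{-pk}k^{c-1}$ described above.
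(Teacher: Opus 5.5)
Your reduction (three regimes for $g$, dominant term plus geometric tails, polynomial weight absorbed via $\sum_k k^{c-1}e^{-pk}<\infty$) is sound. For $\nu\geq 0$ it is essentially the paper's argument: the interface term $\Sigma_3$ dominates when $\eta\leq\nu$, and the $n=1$ term dominates when $\eta\gg\nu$.

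The gap is the case $\nu<0$, which is exactly the step you defer (``which I evaluate explicitly'', ``matched carefully with the exponentials claimed''). Your own estimate $g(x)\approx e^{-p(x-\eta)}$ for $x\geq\eta$ gives $g(\eta-\nu)\approx e^{p\nu}$. So your method yields $S_{p,c}(\eta,\nu)\approx_{p,c}\eta^c e^{p\nu}$. This exceeds the claimed $\eta^c e^{-(\eta-\nu)p}$ by the factor $e^{p\eta}$, which is unbounded for $\eta\geq 1$. More careful bookkeeping cannot fix this: for $\nu<0$ the $n=1$ term alone satisfies
\begin{equation*}
\frac{e^{p(\eta-\nu)}}{(1+e^{\eta-\nu})^p\,(1+e^{-\nu})^p}\,\eta^c \geq 2^{-2p}\,\eta^c\, e^{p\nu} .
\end{equation*}
Hence the claimed upper bound $C_{p,c}\,\eta^c e^{-p\eta}e^{p\nu}$ fails as soon as $e^{p\eta}>2^{2p}C_{p,c}$, in both regimes of the statement. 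The $\nu<0$ part of the lemma is false as written. You should have carried out the one-line evaluation and flagged the mismatch, rather than promising to reconcile it.

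The paper's proof reaches the stated formula through a slip. It writes $S\approx_p \eta^c\operatorname{Li}_{-c+1}(e^{-\eta p})\,e^{p\nu}$, which amounts to replacing $1+e^{\eta(n-1)-\nu}$ by $e^{\eta n-\nu}$ instead of $e^{\eta(n-1)-\nu}$. That loses exactly $e^{p\eta}$: harmless in Lemma~\ref{lem:S_{p,c}}, where $\eta\leq M$, but not here.

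What is true, and what your approach proves directly, is $S_{p,c}(\eta,\nu)\approx_{p,c}\eta^c e^{p\nu}$ for all $\eta\geq 1$ and $\nu<0$, with no condition relating $\eta$ and $\n{\nu}$. The lower bound is the display above. The upper bound follows from $e^{px}/(1+e^{x})^p\leq 1$ and $1+e^{\eta(n-1)-\nu}\geq e^{\eta(n-1)-\nu}$, which give $S_{p,c}\leq e^{p\nu}\eta^c\sum_{n\geq1}e^{-p(n-1)}n^{c-1}$.

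As a consequence, the factor $e^{\beta(\mu-(d+2)\hbar)}$ in Proposition~\ref{prop:lower_temp_comm-est} and Theorem~\ref{thm:main} (case $\beta\hbar\geq1$, $\mu<d\hbar$) should be $e^{\beta(\mu-d\hbar)}$. This is consistent with $\n{F(\lambda_1)-F(\lambda_0)}\approx F(\lambda_0)$ in that regime. Your treatment of the $\nu\geq 0$ cases is correct as it stands.
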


	\begin{proof}
		Suppose $\nu<0$ and consider $c>1$ (we leave the case $c=1$ to the reader), we have 
		\begin{equation*}
			S \approx_p \eta^c \operatorname{Li}_{-c+1}(e^{-\eta p})\, e^{p\nu}\,,
		\end{equation*}
		where $\operatorname{Li}_{-c+1}(e^{-\eta p})\approx_c e^{-\eta p}$ for $\eta\ge 1$.
		Indeed, since we have that
        \begin{align*}
            \sum^\infty_{n=1}e^{-\eta p n}n^{c-1} =&\, e^{-\eta p n}\lt(1+\sum^\infty_{n=1}e^{-\eta p n} (n+1)^{c-1}\rt)
            \\
            \approx_c&\, e^{-\eta p n}\lt(1+\sum^\infty_{n=1}e^{-\eta p n} n^{c-1}+ \frac{e^{-\eta p}}{1-e^{-\eta p}}\rt) 
        \end{align*}
        which yields 
        \begin{equation*}
            \sum^\infty_{n=1}e^{-\eta p n}n^{c-1} \approx_c \frac{e^{-\eta p}}{1-e^{-\eta p}}\lt(1+\frac{e^{-\eta p}}{1-e^{-\eta p}}\rt) \approx e^{-\eta p}.
        \end{equation*}
        The argument is independent of the size of $\n{\nu}$.
        
        Next, assume $\nu> 0$. In the case $\eta\gg \nu$, i.e., $\lceil \nu/\eta\rceil = 1$, then we have that 
		\begin{equation*}
            S \approx_p  \eta^c + \eta^c \, e^{-p(\eta-\nu)}\sum^\infty_{n=2} e^{-\eta p (n-2)} n^{c-1} \approx_{p, c} \eta^c\,.
		\end{equation*}

		Now, consider the case $\eta\le \nu$. We split the sum as in \eqref{eq:3_split_sum}. In the case of the interface term $\Sigma_3$, we see from Equation~\eqref{eq:interface_term} that 
		\begin{equation*}
			\Sigma_3 \approx_p \lt(\nu+\delta\rt)^{c-1} \eta\,.
		\end{equation*}
		For the term $\Sigma_1$ and $\Sigma_2$, we will handle them as above. It suffices to consider $\Sigma_2$. Notice, we have 
		\begin{equation*}
			\Sigma_2 \approx_p \eta^c\, e^{-\delta p}\sum_{k=1}^\infty e^{-\eta p k}\lt(n_0+k\rt)^{c-1}\approx_p  e^{-\lt(\eta+\delta\rt) p} \lt(\nu + \delta\rt)^{c-1}\eta\,,
		\end{equation*}
		which completes the proof of the lemma. 
	\end{proof}

	Combining all the previous lemmas, we now have all the tools to prove the main result of this section.
	\begin{prop}\label{prop:higher_temp_comm-est}
		Fix $\mu\in \R$ and $p\ge 1$. Assuming $\hbar \in (0, 1]$ and $\beta \hbar\le 1$, 
		then we have 
		\begin{multline}
			\Nrm{ \Dhz \opgam_{\beta, \mu} }{\cL^p}
			\approx_{d, p} \beta^{-\frac{d}{p}+\frac{1}{2}} \, e^{\beta (\mu-d\hbar)} \, \indic_{\{\mu < d\hbar\}} \\
			+\Bigl( \lt(\mu-(d-1)\hbar\rt)^{\frac{d-1}{p}+\frac{1}{2}} \beta^\frac{1}{p'} +1\Bigr) \, \indic_{\{\mu \ge d\hbar\}}\,.
		\end{multline}
	\end{prop}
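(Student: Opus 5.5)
The plan is to combine Lemma~\ref{lem:harmonic_S^p_p} with Lemma~\ref{lem:S_{p,c}}. We already know that $\Nrm{\Dhz\opgam_{\beta,\mu}}{\cL^p}\approx_p S_p$, and Identity~\eqref{def:Schatten_commutation_with_eta} writes $S_p^p$ as a sum over $n\ge 0$.

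The first step is to put this sum into the form $S_{p,c}(\eta,\nu)$ with $\eta=\beta\hbar\le 1$ and $\nu=\beta(\mu-d\hbar)$. We have $\lambda_n-d\hbar=2\hbar n$ and $\beta(\lambda_n-\mu)=2\eta n-\nu$. The difference of Fermi--Dirac factors is
\begin{equation*}
	F(\lambda_n)-F(\lambda_{n-1}) = \frac{e^{2\eta n-\nu}-e^{2\eta(n-1)-\nu}}{(1+e^{2\eta n-\nu})(1+e^{2\eta (n-1)-\nu})},
\end{equation*}
and its numerator equals $(1-e^{-2\eta})\,e^{2\eta n-\nu}$. Since $\eta\le 1$, we have $1-e^{-2\eta}\approx\eta$. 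The $n=0$ term vanishes because of the factor $(\lambda_0-d\hbar)^{p/2}=0$.

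For the multiplicity, $N_n\approx_d \weight{n}^{d-1}$, and on $n\ge1$ this gives $N_n\approx_d n^{d-1}$. Replacing $2\eta$ by $\eta$ only changes constants, so I would write
\begin{equation*}
	S_p^p \approx_{d,p} \hbar^{1-p}\,h^{d-1}\,\eta^{p}\,\hbar^{p/2}\sum_{n\ge1} n^{d-1+p/2}\,(\cdots)
	\approx \hbar^{1-p+d-1+p/2}\,\eta^{p-c}\,S_{p,c}(2\eta,\nu),
\end{equation*}
with $c=d+p/2$, after writing $n^{c-1}=(\eta n)^{c-1}\eta^{1-c}$. Collecting powers of $\hbar$ should give $\hbar^{d-p/2}\eta^{p/2-d}=\beta^{p/2-d}$. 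This is the expected scaling $S_p^p\approx\beta^{p/2-d}S_{p,c}$, consistent with Proposition~\ref{prop:classical_case}. I would verify this bookkeeping carefully.

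Next I apply Lemma~\ref{lem:S_{p,c}} with $M=2$. It gives $S_{p,c}\approx 1+\nu^{c-1}+\eta^{c-1}$ for $\nu\ge0$ and $e^{p\nu}$ for $\nu<0$. Taking the $p$-th root:
\begin{itemize}
	\item For $\mu<d\hbar$ this yields $\beta^{1/2-d/p}e^{\beta(\mu-d\hbar)}$, as claimed.
	\item For $\mu\ge d\hbar$ it yields $\beta^{1/2-d/p}\bigl(1+\nu^{(d-1)/p+1/2}\bigr)$, since $\eta^{c-1}\le1$ is absorbed. This matches $\beta^{1/2-d/p}\bigl((\beta(\mu-d\hbar+\hbar))^{\frac12+\frac{d-1}{p}}+1\bigr)$, because $\nu+\eta\approx\nu+1$ up to constants when $\eta\le1$.
\end{itemize}

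The main obstacle is reconciling this with the form displayed in the proposition, namely $(\mu-(d-1)\hbar)^{\frac{d-1}{p}+\frac12}\beta^{1/p'}+1$. I expect this to be a rewriting via $\beta^{1/2-d/p}\beta^{(d-1)/p+1/2}=\beta^{1/p'}$. The additive $\beta^{1/2-d/p}$ term is then to be compared with $1$, and I would check whether extra assumptions such as $\beta\gtrsim1$ are implicit, or whether the displayed formula should keep the prefactor, as Theorem~\ref{thm:main} does. I would prove the version matching Theorem~\ref{thm:main} and note the equivalence in the relevant regime.

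A secondary technical point is that Lemma~\ref{lem:S_{p,c}} uses the denominator $(1+e^{\eta n-\nu})(1+e^{\eta(n-1)-\nu})$ with the same $\eta$ as the weight $(\eta n)^{c-1}\eta$. After substituting $\eta\to2\eta$, this is exactly matched, up to a harmless constant factor $2^{c-1}$ coming from the weight.
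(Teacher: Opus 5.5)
Your proposal is correct and follows essentially the paper's proof: Lemma~\ref{lem:harmonic_S^p_p}, the multiplicity bounds \eqref{est:weyl_law}, and Lemma~\ref{lem:S_{p,c}} applied with $\eta=2\beta\hbar$, $M=2$, $\nu=\beta(\mu-d\hbar)$. Your observation that the $n=0$ term vanishes, so that $N_n\approx_d n^{d-1}$ can be used on $n\ge1$, is a slightly cleaner way to avoid the extra $(d\hbar)^{d-1}$ term in the paper's upper bound. The discrepancy you flag is genuine: the paper's own computation also gives $\beta^{\frac12-\frac dp}\bigl(1+(\beta(\mu-d\hbar))^{\frac{d-1}{p}+\frac12}\bigr)$, so the ``$+1$'' in the displayed statement must carry the factor $\beta^{\frac12-\frac dp}$ as in Theorem~\ref{thm:main}; an assumption like $\beta\gtrsim1$ would not reconcile the two forms, since at $\mu=d\hbar$ they differ whenever $p\neq 2d$.
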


	\begin{proof}
		Note that from Equation~\eqref{eq:multiplicity} one can deduce, for any $n\in\N$, the following
		\begin{equation*}
			\frac{(n+1)^{d-1}}{\lt(d-1\rt)!} \leq N_n \leq \frac{(n+\frac{d}{2})^{d-1}}{\lt(d-1\rt)!} \,,
		\end{equation*}
		which can be further rewritten in terms of  the eigenvalues $\lambda_n = \lt(2n+d\rt)\hbar$ as
		\begin{equation}\label{est:weyl_law}
			\omega_{2d} \lt(\lambda_n-(d-2)\hbar\rt)^{d-1} \leq 2\pi\, N_n\,h^{d-1} \leq \omega_{2d} \, \lambda_n^{d-1}.
		\end{equation}
		where $\omega_d = \bigl|\SS^{d-1}\bigr|$. Consequently, we obtain from Identity~\eqref{def:Schatten_commutation_with_eta} and \eqref{est:weyl_law} the lower bound
		\begin{align*}
			S_p^p &\ge \frac{\omega_{2d}}{\hbar^{p-1}}\sum_{n\geq 0} \n{F(\lambda_n) - F(\lambda_n-2\hbar)}^p \lt(\lambda_n-d\hbar\rt)^{d+\frac{p}{2}-1}
			\\
			&= \omega_{2d}\frac{\bigl(1-e^{-2\beta\hbar}\bigr)^p}{\hbar^{p-1}}\sum_{n=1}^\infty \frac{e^{p\lt(2\beta\hbar\,n-\beta\lt(\mu-d\hbar\rt)\rt)} \, \lt(2\hbar n\rt)^{d+\frac{p}{2}-1}}{\lt(1 + e^{2\beta\hbar\,n-\beta\lt(\mu-d\hbar\rt)}\rt)^p \lt(1 + e^{2\beta\hbar\lt(n-1\rt)-\beta\lt(\mu-d\hbar\rt)}\rt)^p} \,.
		\end{align*}
        Then, by  Lemma~\ref{lem:S_{p,c}} with $\eta = 2\beta\hbar$, $M=2$, and $\nu = \beta(\mu-d\hbar)$, we have that 
        \begin{align*}
			S^p_p &\gtrsim_{p, d} \biggl(\frac{1-e^{-2\beta\hbar}}{\beta\hbar}\biggr)^p \beta^{-d+\frac{p}{2}} \lt(\lt(\beta\lt(\mu-d\hbar\rt)\rt)^{d+\frac{p}{2}-1} + \lt(\beta\hbar\rt)^{d+\frac{p}{2}-1}+1\rt) \indic_{\{\mu \ge d\hbar\}}
			\\
			&\qquad\ + \biggl(\frac{1-e^{-2\beta\hbar}}{\beta\hbar}\biggr)^p \beta^{-d+\frac{p}{2}} \, e^{p\beta (\mu-d\hbar)} \, \indic_{\{\mu < d\hbar\}} \,.
		\end{align*}
		Similarly, for the upper bound, we have that
		\begin{align*}
			S_p^p &\lesssim_d \frac{\omega_{2d}}{\hbar^{p-1}}\sum_{n\geq 0} \n{F(\lambda_n) - F(\lambda_n-2\hbar)}^p \lt(\lambda_n-d\hbar\rt)^{d+\frac{p}{2}-1}
			\\
			&\qquad +\frac{\omega_{2d}}{\hbar^{p-1}}\sum_{n\geq 0} \n{F(\lambda_n) - F(\lambda_n-2\hbar)}^p \lt(\lambda_n-d\hbar\rt)^{\frac{p}{2}} (d\hbar)^{d-1}
			\\
			&\lesssim_{p, d} \frac{\omega_{2d}}{\hbar^{p-1}}\sum_{n\geq 0} \n{F(\lambda_n) - F(\lambda_n-2\hbar)}^p \lt(\lambda_n-d\hbar\rt)^{d+\frac{p}{2}-1}
			\\
			&\qquad+ \hbar^{d-1} \lt(\frac{1-e^{-2\beta\hbar}}{\beta\hbar}\rt)^p \beta^{-1+\frac{p}{2}} \, e^{p\beta (\mu-d\hbar)} \, \indic_{\{\mu < d\hbar\}}
			\\
			&\qquad +\hbar^{d-1} \lt(\frac{1-e^{-2\beta\hbar}}{\beta\hbar}\rt)^p \beta^{-1+\frac{p}{2}}  \lt(\lt(\beta\lt(\mu-d\hbar\rt)\rt)^{\frac{p}{2}}+(\beta\hbar)^{\frac{p}{2}}+1\rt) \indic_{\{\mu \ge d\hbar\}}\,. 
		\end{align*}
		This completes the proof of the proposition.
	\end{proof}
	
	Let us also state a result for the regime $\beta\hbar \ge 1$, i.e., the temperature tends to zero faster than $\hbar$ goes to zero. 
	
	\begin{prop}\label{prop:lower_temp_comm-est}
		Fix $\mu\in \R$ and $p\ge 1$. Suppose $\hbar \in (0, 1]$ and $\beta>0$ is such that $\beta\hbar> 1$, then we have
		\begin{multline}
			\bNrm{ \Dhz \opgam_{\beta, \mu} }{\cL^p}
			\approx_{d, p} \hbar^{\frac{d}{p}-\frac{1}{2}} \, e^{-\lt(2\beta\hbar-\beta\lt(\mu-d\hbar\rt)\rt)} \, \indic_{\{\mu < d\hbar\}}
			\\
			+ \max(\hbar^{-1+\frac{1}{p}} \lt(\mu-d\hbar\rt)^{\frac{d-1}{p}+\frac{1}{2}}, \hbar^{\frac{d}{p}-\frac12}) \indic_{\{\mu \ge d\hbar\}}\,.
		\end{multline} 
	\end{prop}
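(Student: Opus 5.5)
We follow the proof of Proposition~\ref{prop:higher_temp_comm-est}, replacing Lemma~\ref{lem:S_{p,c}} by Lemma~\ref{lem:S_summation_large_eta}. The starting point is again Lemma~\ref{lem:harmonic_S^p_p}, which gives $\Nrm{\Dhz\opgam_{\beta,\mu}}{\cL^p}\approx_p S_p$, together with the exact formula~\eqref{def:Schatten_commutation_with_eta}. We first insert the two-sided multiplicity bound~\eqref{est:weyl_law}. With $\lambda_n - d\hbar = 2\hbar n$, the $n$-th term becomes, up to constants depending on $d$,
\begin{equation*}
	\hbar^{1-p}\,\n{F(\lambda_n)-F(\lambda_{n-1})}^p (2\hbar n)^{p/2}\bigl((2\hbar n)^{d-1} + \hbar^{d-1}\bigr).
\end{equation*}
The term $\hbar^{d-1}$ comes from the lower bound $\lambda_n-(d-2)\hbar = 2\hbar(n+1)$ and from $\lambda_n^{d-1}\lesssim_d (2\hbar n)^{d-1}+\hbar^{d-1}$. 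Note that the $n=0$ term vanishes. Since $n\ge1$, we have $(n+1)^{d-1}\approx_d n^{d-1}$, so in fact the two-sided bound is simply $(2\hbar n)^{d-1}$ up to constants, and the extra $\hbar^{d-1}$ term can be absorbed. This gives a single sum to analyse.

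Next we use $F(\lambda_n)-F(\lambda_{n-1}) = (1-e^{-2\beta\hbar})\,\frac{e^{\eta n-\nu}}{(1+e^{\eta n-\nu})(1+e^{\eta(n-1)-\nu})}$, with $\eta=2\beta\hbar$ and $\nu=\beta(\mu-d\hbar)$. Since $\eta\ge 2$, we have $1-e^{-2\beta\hbar}\approx 1$. Writing $2\hbar n = \eta n/\beta$ and taking $c = d+\frac p2$, we obtain
\begin{equation*}
	S_p^p \approx_{d,p} \hbar^{1-p}\,\beta^{-(c-1)}\,\eta^{-1}\, S_{p,c}(\eta,\nu) = \tfrac12\,\hbar^{-p}\,\beta^{-c}\,S_{p,c}(\eta,\nu).
\end{equation*}
We now apply Lemma~\ref{lem:S_summation_large_eta}, splitting into cases.

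\emph{Case $\nu<0$.} Here $S_{p,c}\approx\eta^c e^{-(\eta-\nu)p}$, so $S_p^p\approx \hbar^{c-p}e^{-p(2\beta\hbar-\beta(\mu-d\hbar))}$. Taking the $p$-th root gives $\hbar^{\frac dp-\frac12}e^{-(2\beta\hbar-\beta(\mu-d\hbar))}$, which is the first term of the statement.

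\emph{Case $0\le\nu\lesssim\eta$.} Here $S_{p,c}\approx\eta^c$, giving $S_p\approx\hbar^{\frac dp-\frac12}$.

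\emph{Case $\nu\gtrsim\eta$.} Here $S_{p,c}\approx\eta\,\nu^{c-1}$, giving $S_p^p\approx \hbar^{1-p}(\mu-d\hbar)^{d+\frac p2-1}$, that is, $S_p\approx\hbar^{-1+\frac1p}(\mu-d\hbar)^{\frac{d-1}p+\frac12}$.

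In both nonnegative cases the result matches the $\max$ of the two expressions in the statement. Indeed, comparing them amounts to comparing $\mu-d\hbar$ with $\hbar$, which is exactly the comparison of $\nu$ with $\eta$ up to constants.

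The main obstacle is that Lemma~\ref{lem:S_summation_large_eta} is stated only in the regimes $\eta\gg|\nu|$ and $\eta\le|\nu|$, so the intermediate region $|\nu|\approx\eta$ has to be covered, uniformly in constants. For $\nu<0$ the proof of that lemma is already independent of $|\nu|$. For $0\le\nu\le C\eta$ we argue directly: the sum is dominated by its first $O(1)$ terms, and these are comparable to $\eta^c\approx\eta\nu^{c-1}+\eta^c$. The tail beyond $n_0$ decays geometrically with ratio $e^{-\eta p}\le e^{-2p}$. Hence in all nonnegative regimes $S_{p,c}\approx_{p,c}\eta^c+\eta\nu^{c-1}$, which is precisely the form producing the $\max$. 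The boundary case $\nu=0$ (that is, $\mu = d\hbar$) fits into this same formula.
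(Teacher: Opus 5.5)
The branch $\mu\ge d\hbar$ of your argument is fine. Your remark that $2\pi N_n h^{d-1}\approx_d(2\hbar n)^{d-1}$ for $n\ge1$ is correct, and so is your direct treatment of $0\le\nu\lesssim\eta$; both tidy up the paper's sketch, which follows the same route.

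The gap is the case $\nu<0$. There you take $S_{p,c}(\eta,\nu)\approx\eta^c e^{-(\eta-\nu)p}$ from Lemma~\ref{lem:S_summation_large_eta}. You checked that its proof is uniform in $|\nu|$, but not that it is correct, and the asymptotic is false for large $\eta$. Since $\eta-\nu>0$ and $-\nu>0$, the $n=1$ term alone satisfies
\[
\frac{e^{p(\eta-\nu)}}{(1+e^{\eta-\nu})^p\,(1+e^{-\nu})^p}\,\eta^{c}\;\ge\;4^{-p}\,\eta^{c}\,e^{p\nu},
\]
which is larger than the claimed size by a factor $e^{p\eta}$. For the matching upper bound, use $e^{p(\eta n-\nu)}(1+e^{\eta n-\nu})^{-p}\le 1$ and $(1+e^{\eta(n-1)-\nu})^{-p}\le e^{-p(\eta(n-1)-\nu)}$. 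Summing the resulting geometric series gives $S_{p,c}(\eta,\nu)\approx_{p,c}\eta^c e^{p\nu}$ for $\nu<0$ and $\eta\ge1$.

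The slip in that lemma's proof is this: replacing both denominators by exponentials produces $e^{-p(\eta(n-1)-\nu)}=e^{p\eta}\,e^{p\nu}\,e^{-p\eta n}$, and the factor $e^{p\eta}$ is dropped. That is harmless in Lemma~\ref{lem:S_{p,c}}, where $\eta\le M$, but not here.

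Consequently, for $\mu<d\hbar$ and $\beta\hbar>1$ the true size is
\[
\bNrm{\Dhz{\opgam_{\beta,\mu}}}{\cL^p}\approx_{d,p}\hbar^{\frac dp-\frac12}e^{\beta(\mu-d\hbar)} .
\]
This is also clear physically. The commutator $[a,\opgam_{\beta,\mu}]$ couples levels $0$ and $1$ (the $n=0$ term vanishes), and $F(\lambda_0)-F(\lambda_1)\approx e^{\beta(\mu-d\hbar)}$ is governed by the ground-state occupation, not by $F(\lambda_1)\approx e^{\beta(\mu-(d+2)\hbar)}$.

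So the first term of the statement is too small by a factor $e^{2\beta\hbar}$, which is unbounded as $\beta\hbar\to\infty$. Only its lower-bound half holds, and trivially; no argument can deliver the upper bound. The paper's own proof has the same defect, because it invokes the same lemma.

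If you substitute the corrected asymptotic, your argument proves the corrected statement. There $e^{-(2\beta\hbar-\beta(\mu-d\hbar))}$ is replaced by $e^{\beta(\mu-d\hbar)}$, which also matches Proposition~\ref{prop:higher_temp_comm-est} exactly at $\beta\hbar=1$.
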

	
	\begin{proof}
		Following the proof of the previous proposition and applying Lemma~\ref{lem:S_summation_large_eta}, we immediately have the lower bound 
		\begin{align*}
			S^p_p &\ge \frac{\omega_{2d}}{\hbar^{p-1}}\sum_{n\geq 0} \n{F(\lambda_n) - F(\lambda_n-2\hbar)}^p \lt(\lambda_n-d\hbar\rt)^{d+\frac{p}{2}-1}\\
			&= \omega_{2d}\frac{\bigl(1-e^{-2\beta\hbar}\bigr)^p}{\hbar^{p-1}} \sum_{n =1}^\infty \frac{e^{p\lt(2\beta\hbar\,n-\beta\lt(\mu-d\hbar\rt)\rt)} \, \lt(2\hbar n\rt)^{d+\frac{p}{2}-1}}{\lt(1 + e^{2\beta\hbar\,n - \beta\lt(\mu-d\hbar\rt)}\rt)^p \lt(1 + e^{2\beta\hbar\lt(n-1\rt)-\beta\lt(\mu-d\hbar\rt)}\rt)^p} 
			\\
			&\gtrsim_{d, p} \frac{\beta^{-d+\frac{p}{2}}}{\lt(\beta\hbar\rt)^p} \sum_{n =1}^\infty \frac{e^{p(2\beta\hbar\,n-\beta(\mu-d\hbar))}(2\beta\hbar n)^{d+\frac{p}{2}-1}(2\beta\hbar)}{\lt(1+e^{2\beta\hbar\,n-\beta\lt(\mu-d\hbar\rt)}\rt)^p\lt(1+e^{2\beta\hbar\,(n-1)-\beta\lt(\mu-d\hbar\rt)}\rt)^p}
			\\
			&\gtrsim_{d, p} \frac{\beta^{-d+\frac{p}{2}}}{\lt(\beta\hbar\rt)^p} \lt(\lt(\beta\hbar\rt)^{d+\frac{p}{2}} e^{-p\lt(2\beta\hbar-\beta\lt(\mu-d\hbar\rt)\rt)} \, \indic_{\{\mu < d\hbar\}} \rt.
			\\
			&\qquad \qquad \qquad \qquad \lt. + \beta\hbar \lt(\beta\lt(\mu-d\hbar\rt)\rt)^{d+\frac{p}{2}-1}  \indic_{\{\mu \ge d\hbar\}}\rt) .
		\end{align*}
		The upper bound is obtained in the same way. 
	\end{proof}
	
	\begin{proof}[Proof of Theorem~\ref{thm:main}]\label{proof:main}
	   The proof follows from the previous two proposition, namely Proposition~\ref{prop:higher_temp_comm-est} and Proposition~\ref{prop:lower_temp_comm-est}.
	\end{proof}

	\begin{remark}
	    Similar to the classical case \eqref{eq:classical f(1-f)}, we also consider the following measure of defect of being a pure state 
	\begin{equation*}
		K_p^p := \Nrm{\opgam_{\beta,\mu} \,\bigl(1-\opgam_{\beta,\mu}\bigr)}{\cL^p}^p = h^d\sum_{n=0}^\infty N_n \, F(\lambda_n)^p \lt(1-F(\lambda_n)\rt)^p .
	\end{equation*} 
	To apply the above result, we write 
	\begin{multline*}
		K_p^p = h^d F(\lambda_0) \lt(1-F(\lambda_0)\rt)
		\\
		+ 2\pi\,\frac{\beta^{-p}}{\hbar^{p-1}}\sum_{n=1}^\infty N_n \, h^{d-1} \lt(\frac{\beta\hbar}{1-e^{-2\beta\hbar}}\frac{F(\lambda_{n})}{F(\lambda_{n-1})}\rt)^p \n{F(\lambda_n)-F(\lambda_{n-1})}^p .
	\end{multline*}
	In the case $\beta\hbar \le 1$, we see that
	\begin{equation*}
		K_p^p\approx h^d+\frac{\beta^{-p}}{\hbar^{p-1}}\sum_{n=1}^\infty N_n\,h^{d-1}\n{F(\lambda_n)-F(\lambda_{n-1})}^p .
	\end{equation*}
	Hence, we can directly apply Formula~\eqref{est:weyl_law} and Lemma~\ref{lem:S_{p,c}} to obtain the desired result. In the case $\beta\hbar > 1$, we have that 
    \begin{equation*}
		K_p^p\approx h^d+\hbar\sum_{n=1}^\infty N_n\,h^{d-1}\n{F(\lambda_n)-F(\lambda_{n-1})}^p .
	\end{equation*}
    Then we can apply Lemma~\ref{lem:S_summation_large_eta}.
	\end{remark}

\section{Size of Commutators: with Magnetic Field}\label{sec:withMagnet}

\subsection{Classical case}\label{sec:withMagnet-classic}

	Consider the classical magnetic harmonic Hamiltonian
	\begin{equation}\label{eq:Hamiltonian}
		H_A(x,\xi) := \n{\xi - A}^2+ \n{x}^2 ,
	\end{equation}
	where  $A=A(x) = (A_1(x), \ldots, A_d(x))$ is a spatially-dependent vector potential. The magnetic field 
	\begin{equation*}
		B = \curl A
	\end{equation*}
	is not uniform but independent of time. When $A=0$, we recover $H = H_0 = \n{z}^2$.


	Let $f_{\beta, \mu}^A(z) = F(H_A)$ and $f_{\beta, \mu}(z)=F(H)$ as above with $z = (x,\xi)$. Then by a simple change of variable
	\begin{equation}\label{eq:magnetic_classical_mass}
		\intdd f_{\beta, \mu}^A
		 = \intdd f_{\beta, \mu}\, ,
	\end{equation}
	which we already estimated in Equation~\eqref{eq:mass_classical}. Similarly a simple change of variable together with the fact that $\Dv f_{\beta,\mu} = 2\,\xi\, F'(\n{z}^2)$ gives
	\begin{equation*}
		\intdd \n{\nabla_\xi f_{\beta, \mu}^A}^p = 2^p \intdd \n{\xi}^p \n{F'(\n{z}^2)}^p \d z = \intdd \n{\nabla_\xi f_{\beta, \mu}}^p
	\end{equation*}
	and by Equation~\eqref{eq:partial_gradients},
    \begin{equation*}
        \Nrm{\Dv f_{\beta, \mu}}{L^p} = C_{d,p} \Nrm{\nabla_z f_{\beta, \mu}}{L^p} ,
    \end{equation*}
    which we estimated in Proposition~\ref{prop:classical_case}. On the other hand,
	\begin{equation*}
			\intdd \n{\nabla_x f_{\beta,\mu}^A}^p 
			= 2^p \intdd \n{x - \nabla A(x)\xi}^p \n{F'(\n{z}^2)}^p \d z \, .
	\end{equation*}
	This implies that
	\begin{equation}\label{eq:magnetic_Dxf_upper_bound}
		\Nrm{\Dx f_{\beta,\mu}^A}{L^p} \leq \Nrm{\weight{\nabla A}}{L^\infty} \Nrm{\nabla_z f_{\beta, \mu}}{L^p} ,
	\end{equation}
    where $\weight{x} = \sqrt{1+\n{x}^2}$ are the Japanese brackets.

	One can be more precise in the case of a uniform three-dimensional magnetic field, $B = (0, 0, 2b)$ with $b \in \R_+$. One can take  
	\begin{equation}\label{eq:magnetic_potential_uniform}
		A = \tfrac{1}{2}\, B \times x =  b \lt(-x_2, x_1, 0\rt)
	\end{equation}
	to get the following estimates, which show that the previous inequality~\eqref{eq:magnetic_Dxf_upper_bound} is sharp.
	\begin{prop}\label{prop:classical-mganetic}
		Let $A$ be given by the above formula~\eqref{eq:magnetic_potential_uniform} in dimension $d=3$. Then for any $p\in[1,\infty]$ and $\beta> 0$, it holds
		\begin{equation*}
			\Nrm{\Dx f_{\beta,\mu}^A}{L^p} \approx_p \weight{b} \Nrm{\nabla_z f_{\beta,\mu}}{L^p}
		\end{equation*}
		and in the particular case $p=2$,
		\begin{equation*}
			\Nrm{\Dx f_{\beta,\mu}^A}{L^2}^2 = \tfrac{3+2\,b^2}{6}  \Nrm{\nabla_z f_{\beta,\mu}}{L^2}^2 .
		\end{equation*}
	\end{prop}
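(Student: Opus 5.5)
The plan is to reduce everything to an integral over the unit sphere $\SS^5$ by a measure-preserving change of variables, exactly as in the non-magnetic remark leading to~\eqref{eq:partial_gradients}. Write $A(x) = M x$ with $M = b\,J$, where $J$ is the rotation by $\pi/2$ in the $(x_1,x_2)$-plane acting trivially on the third coordinate. Then $J^T = -J$, $\n{J v}^2 = v_1^2+v_2^2$, and $J$ kills the third component.

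\textbf{Step 1: reduction to $\SS^5$.} Differentiate $f^A_{\beta,\mu} = F(\n{\xi-Mx}^2+\n{x}^2)$ in $x$:
\begin{equation*}
\Dx f^A_{\beta,\mu} = 2\bigl(x - M^T(\xi - Mx)\bigr)F'(H_A) .
\end{equation*}
Then perform the shear $\eta = \xi - Mx$, which has unit Jacobian. This gives
\begin{equation*}
\Nrm{\Dx f^A_{\beta,\mu}}{L^p}^p = 2^p\intdd \n{x + b\,J\eta}^p \n{F'(\n{x}^2+\n{\eta}^2)}^p \d x\d\eta .
\end{equation*}
Next pass to polar coordinates $(x,\eta) = r(u_x,u_\eta)$ with $u\in\SS^5$. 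The radial integral is identical to the one in
\begin{equation*}
\Nrm{\nabla_z f_{\beta,\mu}}{L^p}^p = 2^p\intdd \n{z}^p\n{F'(\n{z}^2)}^p\d z .
\end{equation*}
Hence
\begin{equation*}
\frac{\Nrm{\Dx f^A_{\beta,\mu}}{L^p}^p}{\Nrm{\nabla_z f_{\beta,\mu}}{L^p}^p} = \Phi_p(b) := \frac{1}{\n{\SS^5}}\int_{\SS^5}\n{u_x + b\,J u_\eta}^p\d u .
\end{equation*}
This ratio is independent of $\beta$ and $\mu$. For $p=\infty$ one argues the same way with a supremum over the sphere in place of the average.

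\textbf{Step 2: two-sided bound on $\Phi_p(b)$.} The upper bound is immediate: $\n{u_x+bJu_\eta}\le \n{u_x}+b\n{u_\eta}\le 1+b\lesssim\weight{b}$. For the lower bound I combine two sets of positive measure on $\SS^5$.
\begin{itemize}
\item On $\{\n{u_{x_3}}\ge 1/2\}$ the third component of $u_x+bJu_\eta$ equals $u_{x_3}$, because $J$ kills the third coordinate. So the integrand is at least $2^{-p}$ there, giving $\Phi_p\gtrsim_p 1$.
\item On $\{u_{\eta_1}^2+u_{\eta_2}^2\ge 1/4,\ \n{u_x}\le 1/4\}$ one has $\n{u_x+bJu_\eta}\ge b/2-1/4$. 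This gives $\Phi_p\gtrsim_p b^p$ for $b\ge1$.
\end{itemize}
Together these yield $\Phi_p(b)\approx_p\weight{b}^p$. The same sets give the $p=\infty$ case directly.

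\textbf{Step 3: the exact constant for $p=2$.} Expand
\begin{equation*}
\n{u_x+bJu_\eta}^2 = \n{u_x}^2 + b^2(u_{\eta_1}^2+u_{\eta_2}^2) + 2b\,u_x\cdot Ju_\eta .
\end{equation*}
The cross term is odd under $u_x\mapsto -u_x$, so its average vanishes. By symmetry each squared coordinate has average $1/6$, so $\Phi_2(b) = 3/6 + 2b^2/6 = \frac{3+2b^2}{6}$.

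The only genuinely delicate point is the lower bound for small $b$, where the term $bJu_\eta$ could in principle cancel $u_x$. This is handled by the observation that $J$ leaves the third coordinate untouched, which is what the first set in Step 2 exploits. The remaining steps are routine.
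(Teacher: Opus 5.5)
Your proof is correct. The reduction to a spherical average (shear $\eta=\xi-Mx$, then polar coordinates) and the exact $p=2$ computation by parity and coordinate symmetry are the same as in the paper. The only difference is in how you prove the two-sided bound $\Phi_p(b)\approx_p\langle b\rangle^p$ for general $p$.

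The paper diagonalizes the quadratic form $(x,\eta)\mapsto\lvert x+bJ\eta\rvert^2$ on $\R^6$. Its matrix has eigenvalues $1+b^2$ (twice), $1$ (once) and $0$ (three times). By rotation invariance of the surface measure, the spherical integral therefore becomes $\int_{\SS^5}\bigl((1+b^2)(y_1^2+y_2^2)+y_3^2\bigr)^{p/2}$. Splitting with $(a+c)^{p/2}\approx_p a^{p/2}+c^{p/2}$ then gives both bounds at once, with no case distinction in $b$. For $p=\infty$ the same diagonalization even gives the exact ratio $\sup_{\SS^5}\lvert u_x+bJu_\eta\rvert=\langle b\rangle$.

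You replace the linear algebra with the triangle inequality for the upper bound and two explicit positive-measure subsets of $\SS^5$ for the lower bound. This is more elementary, but it needs the split $b\le 1$ versus $b\ge 1$ and gives cruder constants.

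Both arguments rest on the same structural fact: the $x_3$-direction is untouched by the magnetic shear. In the paper's diagonalization this is the eigenvalue-$1$ direction $y_3=x_3$, and it is exactly what rules out cancellation when $b$ is small.

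One wording point: $J$ acts as zero, not as the identity, on the third coordinate. Your phrase ``acting trivially'' should be read that way, and your later use (``$J$ kills the third component'') is the correct one.
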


    \begin{remark}
        For the size of $\Nrm{\nabla_z f_{\beta,\mu}}{L^p}$, recall Proposition \ref{prop:classical_case}.
    \end{remark}
	
	\begin{proof}[Proof of Proposition \ref{prop:classical-mganetic}]
		Since, by definition of $A$, it holds
		\begin{equation*}
			\nabla A = b
			\begin{pmatrix}
				0 & -1\\ 1 & 0
			\end{pmatrix}
			\oplus 0 \, ,
		\end{equation*}
		one deduces that
		\begin{equation*}
			x - \nabla A(x)\xi = x + b\,\xi_{12}^\perp\,,
		\end{equation*}
		with $\xi_{12}^\perp = (-\xi_2,\xi_1,0)$. Therefore, by a polar change of coordinates and since $\nabla_z f_{\beta,\mu} = 2\,z\, F'(\n{z}^2)$, one obtains
		\begin{equation*}
				\intdd \n{\nabla_x f_{\beta, \mu}^A}^p 
				= 2^p\,I_p(b) \int_0^\infty r^{2d-1+p}\n{F'(r^2)}^p \d r = \frac{I_p(b)}{\omega_{2d}} \intd \n{\nabla_z f_{\beta,\mu}}^p
		\end{equation*}
		with $\omega_{2d} = \lvert\SS^{2d-1}\rvert$ and
		\begin{equation*}
			I_p(b) := \int_{\SS^{2d-1}} \n{x+ b\,\xi^\perp_{12}}^p \sigma(\d x \d\xi) \, ,
		\end{equation*}
		where $\sigma$ is the Lebesgue measure on the sphere $\SS^{2d-1}$. When $p=2$, expanding the square and using that integrals of odd functions on the sphere vanish, one obtains directly by symmetry between all the coordinates
		\begin{equation*}
			I_2(b) = \int_{\SS^{2d-1}} \n{x}^2 + b^2\n{\xi_{12}}^2 \sigma(\d x \d\xi) = \tfrac{3+2\,b^2}{6} \,\omega_{2d} \,.
		\end{equation*}
		When $p\neq 2$, by diagonalizing the symmetric matrix associated with the quadratic form, there exist orthogonal coordinates $y=(y_1, \ldots, y_6)\in \R^6$ such that 
		\begin{equation*}
			\lvert x+ b\,\xi^\perp_{12}\rvert^2 = (b^2+1) \, (y_1^2+y_2^2) + y_3^2\,,
		\end{equation*}
		which, since $(a+b)^c \approx_c a^c+b^c$, implies 
		\begin{align*}
			I_p(b) &=\, \int_{\mathbb{S}^5} ((b^2+1)\,(y_1^2+y_2^2)+y_3^2)^{p/2}\sigma(\d y)
			\\
			&\approx_p\,(b^2+1)^{p/2}\int_{\mathbb{S}^5} \lt(y_1^2+y_2^2\rt)^{p/2}\sigma(\d y) + \int_{\SS^5} \n{y_3}^p \sigma(\d y) \, ,
		\end{align*}
		that is, we have
		\begin{equation*}
			I_p(b) \approx_p (b^2+1)^{p/2} = \weight{b}^p\,.
		\end{equation*}
		This completes the proof of the proposition.
	\end{proof}

\subsection{Quantum case}\label{sec:withMagnet-quantum}

	The magnetic harmonic oscillator Hamiltonian in 3D (also called the Fock--Darwin Hamiltonian~\cite{fock_bemerkung_1928, darwin_diamagnetism_1931}, and studied by Landau in the case without confining potential \cite{landau_diamagnetismus_1930, landau_quantum_1991}) is the Hamiltonian
	\begin{equation*}
		\sfH_A = \n{\opvmag}^2 + \n{x}^2 ,
	\end{equation*}
    with unit mass and charge, 
	where $\opvmag = (\opvmag_1, \opvmag_2, \opvmag_3)$ is the velocity or kinematic momentum, defined by
	\begin{equation*}
		\opvmag := \opp - A \quad \text{ with } \quad \opp = -i\hbar\nabla  
	\end{equation*}
	and $\n{\opvmag}^2$ is the magnetic Laplacian.
	

	We restrict ourselves to the case of the uniform magnetic field $B = 2\,b\, e_3$, with the vector potential $A = \tfrac12 \,B\times x$. Here, $b$ is the magnetic field strength.
	In this case, while one still gets $\com{x,\opv} = \com{x,\opp} = i\hbar \,\Id$, we observe that
	\begin{equation*}
		\com{\opvmag_1, \opvmag_2}= 2i\,b\hbar, \qquad \com{\opvmag_1, \opvmag_3}= \com{\opvmag_2, \opvmag_3}= 0\,.
	\end{equation*}
	To find the spectrum, we separate the transverse directions $x_{12}=(x_1, x_2)$ from the longitudinal direction $x_3$, writing $x = (x_{12}, x_3)$. The Hamiltonian then decouples into commuting observables 
	\begin{equation*}
		\sfH_A = \n{\opvmag_{12}}^2+ \n{x_{12}}^2+ \n{\opp_3}^2+ \n{x_3}^2 =: \sfH_{\perp} +\sfH_{\parallel}\,.
	\end{equation*}
	The longitudinal part $\sfH_{\parallel}$ is a one-dimensional harmonic oscillator, while the transverse part $\sfH_{\perp}$ is treated using complex coordinates.
	
	To handle $\sfH_{\perp}$, we further decouple the operator into two commuting observables
	\begin{equation*}
		\sfH_{\perp} = \n{\opp_{12}}^2+ (b^2+1)\n{x_{12}}^2+ 2\,b \lt(x_1\,\opp_2-x_2\,\opp_1\rt) = \sfH_{\perp,\, \textnormal{osc}} + 2\,b\,\sfL_{\parallel}\,,
	\end{equation*}
    that is, a two-dimensional magnetic oscillator $\sfH_{\perp,\, \textnormal{osc}}$ and the third component of the angular momentum $\sfL_{\parallel}$. Then, following the standard approach (see, e.g.~\cite{aftalion_vortex_2005, rougerie_quantum_2022}) we replace $x_{12}$ and $\opp_{12}$ by the operators $w := x_1+i\, x_2$ and $\opp_w := \opp_1-i\,\opp_2$ which are such that $\n{w}^2=\n{x_{12}}^2$ and $\n{\opp_w}^2=\n{\opp_{12}}^2$.

	We define the scaled cyclotron and guiding center operators
	\begin{equation*}
		a_1 = \weight{b} w + i \,\opp_w^* \quad \text{ and } \quad a_2 = \weight{b} w^* + i \,\opp_w \,.
	\end{equation*}
	It can be readily checked that these operators satisfy the commutation relations
	\begin{equation}\label{eq:magn_comm_relations}
		\com{a_1, a_1^*} = \com{a_2, a_2^*} = 4\weight{b}\hbar \quad \text{ and } \quad \com{a_1, a_2} = \com{a_1, a_2^*} = 0 \, .
	\end{equation}
	They also satisfy
	\begin{equation*}
		\n{a_1}^2 + \n{a_2}^2 = 2\weight{b}^2\n{x_{12}}^2 + 2\n{\opp_{12}}^2 - 4\weight{b}\hbar \quad \text{ and } \quad \n{a_2}^2 - \n{a_1}^2 = 4 \weight{b} \sfL_{\parallel} \, .
	\end{equation*}
	Finally, defining $a_3 = x_3 + i\,\opp_3$, the annihilation operator associated to the longitudinal Hamiltonian, the full Hamiltonian can be written as a convex combination of harmonic oscillators
	\begin{equation*}
		\sfH_A = \frac{1}{2 \weight{b}}\lt(\lt(\weight{b}-b\rt)\n{a_1}^2 + \lt(\weight{b}+b\rt) \n{a_2}^2\rt) + \n{a_3}^2 + \lt(2 \weight{b}+1\rt)\hbar \, .
	\end{equation*}
	Note that $[a_1, a^\sharp_3]=[a_2, a^\sharp_3]=0$ where $a^\sharp$ denotes either the raising or lowering operator. Furthermore, in terms of its lowest eigenvalue $\Lambda_0$ and the effective angular frequency $\Omega$, defined by
	\begin{equation*}
		\Lambda_0 := \lt(2 \weight{b}+1\rt)\hbar \quad \text{ and } \quad \Omega := b/\langle b\rangle \in (0,1) \, ,
	\end{equation*}
	the Hamiltonian can be written
	\begin{equation*}
		\sfH_A = \tfrac{1}{2}\lt(\lt(1-\Omega\rt) \n{a_1}^2 + \lt(1+\Omega\rt) \n{a_2}^2\rt) + \n{a_3}^2 + \Lambda_0 \, .
	\end{equation*}
	From the commutation relation, the eigenvalues of $\n{a_j}^2$ are of the form $\alpha_j \, n$ for $n\in\N_0$, with
    \begin{equation}\label{def:alphas}
        \alpha_1 = \alpha_2 = 4\weight{b}\hbar \quad \text{ and } \quad \alpha_3 = \hbar \, .
	\end{equation}
    Therefore, the eigenvalues of $\sfH_A$ are
    \begin{equation}\label{eq:magn_eigen}
        \Lambda_n = \lambda_1 n_1 + \lambda_2 n_2 + \lambda_3 n_3 + \Lambda_0 = \lambda \cdot n +\Lambda_0\,,
    \end{equation}
    where $n = (n_1,n_2,n_3)\in\N_0^3$ and $\lambda = (\lambda_1, \lambda_2, \lambda_3)$ given by
    \begin{equation}\label{eq:lambda_weights}
        \lambda_1 := 2 \lt(\weight{b}-b\rt)\hbar,\quad \lambda_2 := 2 \lt(\weight{b}+b\rt)\hbar,\quad \lambda_3 := \hbar \, .
    \end{equation}
    Moreover, for simplicity, we may choose $b$ so that the eigenvalues have multiplicity one; for instance, taking $b = q\sqrt{2}$ with $q \in \mathbb{Q}_{>0}$ suffices.


\subsection{Upper Bounds for the Size of Commutators}\label{sec:withMagnet-upperbound} 
    As in the case without a magnetic field, we estimate the quantum gradients by first estimating the commutators with the creation and annihilation operators. 
    
    In the direction of the magnetic field, in which the Hamiltonian is just a one-dimensional harmonic oscillator, this allows use to  use the usual relations $x_3 = \frac{a_3+a_3^*}{2}$ and $\opp_3 = \frac{a_3-a_3^*}{2i}$. To express $x_{12}$ and $\opp_{12}$ in terms of the creation and annihilation operators defined above, observe that
    \begin{equation}\label{eq:xp_vs_a}
        \begin{aligned}
    	x_1 &= \frac{a_1+a_1^*+a_2+a_2^*}{4 \weight{b}} \, , \qquad \qquad x_2 = \frac{a_1-a_1^*+a_2^*-a_2}{4 \weight{b}\, i} \, , 
    	\\
    	\opp_1 &= \frac{a_1-a_1^*+a_2-a_2^*}{4\,i} \, , \qquad \qquad \opp_2 = \frac{-a_1-a_1^*+a_2+a_2^*}{4} \, .
        \end{aligned}
    \end{equation}
	Moreover, since $\opv = \opp - A$ with $A = b\lt(-x_2,x_1,0\rt)$, we can also recover $\opv$ from the formulas
	\begin{align*}
		\opv_1 &= \frac{1}{4\, i} \Big(\lt(1-\Omega\rt)\lt(a_1-a_1^*\rt)+\lt(1+\Omega\rt)\lt(a_2-a_2^*\rt)\Big)\,,
		\\
		\opv_2 &= \frac{1}{4}\Big(\lt(1+\Omega\rt)\lt(a_2+a_2^*\rt)-\lt(1-\Omega\rt)\lt(a_1-a_1^*\rt)\Big) \, .
	\end{align*}
	From the commutation relations \eqref{eq:magn_comm_relations} of $\{a_j\}_{j=1,2,3}$, one deduces that for any bounded function $F$ and any $j\in\set{1,2,3}$,
    \begin{equation*}
    	a_j\,F(\sfH_A) = F(\sfH_A+\lambda_j) \, a_j\, .
    \end{equation*}
    By the same argument as for the case without magnetic field in Lemma~\ref{lem:harmonic_S^p_p}, this yields
    \begin{equation}\label{eq:Spp_magnetic}
        \bNrm{[a_j,\opgam_{\beta,A}]}{\L^p}^p = \hd \Tr{\n{F(\sfH_A) - F(\sfH_A-\lambda_j)}^p \n{a_j}^p} .
    \end{equation}

\subsubsection{Equilibrium at zero temperature}

    We start by considering the zero temperature case $\beta=\infty$, that is we look at the operator
    \begin{equation*}
        \opgam_A := \opgam_{\infty,A} = \indic_{\sfH_A\leq \mu} \, .
    \end{equation*}
    This will be useful to handle the positive temperature case. Observe that $\opgam_A=0$ if $\mu<\Lambda_0$; hence, it suffices to look at the case when $\opgam_A\neq 0$, that is, $\mu \geq \Lambda_0$.

	\begin{prop}\label{prop:com_indic}
		For any $p\in [1,\infty)$, we have the bounds
		\begin{align}\label{eq:com_a1_indic}
			\Nrm{\com{a_1,\opgam_A}}{\L^p} &\leq 2 \,C^{1/p} \lt(\tilde{\mu}_0 + \tfrac{2\hbar}{\weight{b}}\rt)^{1/2} \lt(\tilde{\mu}_0 + \hbar\rt)^\frac{1}{p} \mu^\frac{1}{p} \weight{b}^{1/p'}\hbar^{1/p}
			\\\label{eq:com_a2_indic}
			\Nrm{\com{a_2,\opgam_{A}}}{\L^p} &\leq 2 \,C^{1/p} \mu^{\frac{1}{2}+\frac{2}{p}} \weight{b}^{1/p} \hbar^{1/p}
			\\\label{eq:com_a3_indic}
			\Nrm{\com{a_3,\opgam_{A}}}{\L^p} &\leq C^{1/p} \lt(\tilde{\mu}_0 + \hbar\rt)^{\frac{1}{2}+\frac{1}{p}} \mu^\frac{1}{p}\, \hbar^{1/p}
		\end{align}
		with $C = 15\lt(2\pi\rt)^3$ and $\tilde{\mu}_0 = \mu - \Lambda_0 \geq 0$.
	\end{prop}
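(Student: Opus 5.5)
We would prove all three bounds by counting lattice points directly from the trace formula \eqref{eq:Spp_magnetic}, specialised to $F=\indic_{(-\infty,\mu]}$.

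\emph{Step 1 (reduction to a shell).} The operators $\n{a_1}^2,\n{a_2}^2,\n{a_3}^2$ commute with each other and with $\sfH_A$. We therefore work in their joint eigenbasis $\set{\psi_n}_{n\in\N_0^3}$, on which $\sfH_A\psi_n=\Lambda_n\psi_n$ with $\Lambda_n=\lambda\cdot n+\Lambda_0$, and $\n{a_j}^2\psi_n=\alpha_j n_j\psi_n$. In this basis $\n{F(\sfH_A)-F(\sfH_A-\lambda_j)}$ is the indicator of a spectral shell of width $\lambda_j$ adjacent to $\mu$, namely $\set{\mu<\Lambda_n\le\mu+\lambda_j}$; the exact side of $\mu$ is irrelevant for upper bounds. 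Since this indicator is idempotent, \eqref{eq:Spp_magnetic} becomes
\begin{equation*}
	\Nrm{\com{a_j,\opgam_A}}{\L^p}^p = (2\pi\hbar)^3 \sum_{n\in\N_0^3:\ \tilde{\mu}_0<\lambda\cdot n\le\tilde{\mu}_0+\lambda_j} (\alpha_j n_j)^{p/2}.
\end{equation*}

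\emph{Step 2 (two elementary bounds on the shell).} On the shell we have $\lambda_j n_j\le\tilde\mu_0+\lambda_j$, so
\begin{equation*}
	\alpha_j n_j\le \tfrac{\alpha_j}{\lambda_j}\bigl(\tilde\mu_0+\lambda_j\bigr).
\end{equation*}
For the number of shell points, fix the two coordinates $n_k$, $k\neq j$. Since the shell has width exactly $\lambda_j$ in the variable $\lambda_j n_j$, at most one or two values of $n_j$ remain. The number of admissible pairs $(n_k)_{k\ne j}$ is at most
\begin{equation*}
	\prod_{k\neq j}\Bigl(\tfrac{\tilde\mu_0+\lambda_j}{\lambda_k}+1\Bigr).
\end{equation*}
We then use \eqref{eq:lambda_weights}, together with $\lambda_1\lambda_2=4\hbar^2$, $\lambda_1\approx\hbar/\weight{b}$, $\lambda_2\approx\weight{b}\hbar$ and $\alpha_1=\alpha_2=4\weight{b}\hbar$, $\alpha_3=\hbar$. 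Bounding $\tilde\mu_0+\hbar\le\mu$ where convenient gives the three cases.

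\emph{Step 3 (the three cases).}
\begin{itemize}
	\item For $j=1$: $\alpha_1 n_1\lesssim\weight{b}^2(\tilde\mu_0+2\hbar/\weight{b})$. The count is $\lesssim(\tilde\mu_0+\hbar)\mu/(\weight{b}\hbar^2)$. Multiplying by $\hbar^3$ gives $\hbar\,\weight{b}^{p-1}(\tilde\mu_0+2\hbar/\weight{b})^{p/2}(\tilde\mu_0+\hbar)\mu$, which is \eqref{eq:com_a1_indic}.
	\item For $j=2$: $\alpha_2 n_2\lesssim\mu$ and the count is $\lesssim\weight{b}\mu^2/\hbar^2$. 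This gives $\hbar\weight{b}\mu^{p/2+2}$, which is \eqref{eq:com_a2_indic}.
	\item For $j=3$: $\alpha_3n_3\le\tilde\mu_0+\hbar$ and the count is $\lesssim(\tilde\mu_0+\hbar)\mu/\hbar^2$, using $\lambda_1\lambda_2=4\hbar^2$. This gives \eqref{eq:com_a3_indic}.
\end{itemize}
The numerical constant $C=15(2\pi)^3$ comes from $(2\pi)^3$ times the slack in the "$+1$" terms and in the at most two values of $n_j$.

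The main obstacle is the lattice count in the degenerate directions. For instance $\lambda_1$ can be much smaller than $\hbar$ when $b$ is large, while $\tilde\mu_0$ may be as small as $0$. The "$+1$" terms must then be absorbed carefully, and this is exactly why the mixed factors $(\tilde\mu_0+2\hbar/\weight{b})$ and $(\tilde\mu_0+\hbar)$ appear rather than pure powers of $\mu$. The standing assumption $\mu\ge\Lambda_0=(2\weight{b}+1)\hbar$ is used to absorb the constant terms (for example $\lambda_2+\tilde\mu_0\lesssim\mu$) whenever a clean power of $\mu$ is wanted.
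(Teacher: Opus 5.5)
Your proposal is correct and follows the paper's proof. Both reduce \eqref{eq:Spp_magnetic} to the lattice sum over the shell $\tilde{\mu}_0<\lambda\cdot n\le\tilde{\mu}_j:=\tilde{\mu}_0+\lambda_j$, bound the weight by $(\alpha_j\tilde{\mu}_j/\lambda_j)^{p/2}$, count shell points with a box bound, and absorb the $+1$ terms using $\mu\ge\Lambda_0$. The only difference is the order of summation in the count. The paper always sums first over $n_1$ (the smallest step), getting $\floor{\lambda_j/\lambda_1}+1$ points per fibre times a box in $(n_2,n_3)$. You sum first over $n_j$, where the shell of width exactly $\lambda_j$ leaves at most one point per fibre (not two). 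Your resulting count $\prod_{k\neq j}(\tilde{\mu}_j/\lambda_k+1)$ is never larger than the paper's and gives the same three bounds, with room to spare in $C=15(2\pi)^3$.
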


    As a direct consequence of Proposition~\ref{prop:com_indic} and identities~\eqref{eq:xp_vs_a}, we obtain the following result which implies Theorem~\ref{thm:zero-temp}.
	\begin{cor}\label{cor:zero-temp_magn_comm}
		For any $p\in [1,\infty]$, we have the bounds
		\begin{align*} 
			\Nrm{\com{x_{12},\opgam_{A}}}{\L^p}+\Nrm{\com{\opp_{12},\opgam_{A}}}{\L^p} &\lesssim \mu^{\frac{1}{2}+\frac{2}{p}}  \weight{b}^{\max(\frac{1}{p},\frac{1}{p'})-1} \hbar^{1/p},
            \\
			\Nrm{\com{x_3,\opgam_{A}}}{\L^p}+\Nrm{\com{\opp_3,\opgam_{A}}}{\L^p}  &\lesssim \mu^{\frac{1}{2}+\frac{2}{p}} \,\hbar^{1/p}.
		\end{align*}
	\end{cor}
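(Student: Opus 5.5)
The plan is to express each commutator through the ladder operators $a_1,a_2,a_3$, then apply Proposition~\ref{prop:com_indic}, absorbing the $\mu$-, $\tilde\mu_0$- and $\hbar$-dependent prefactors into $\mu^{\frac12+\frac2p}$. The basic tools are the triangle inequality for the scaled Schatten norms and the identity $\com{a_j^*,\opgam_A} = -\com{a_j,\opgam_A}^*$. The latter holds since $\opgam_A$ is self-adjoint, and it gives $\Nrm{\com{a_j^*,\opgam_A}}{\L^p} = \Nrm{\com{a_j,\opgam_A}}{\L^p}$.

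\emph{Step 1 (longitudinal direction).} Using $x_3 = \frac{a_3+a_3^*}{2}$ and $\opp_3 = \frac{a_3-a_3^*}{2i}$, both commutators are bounded by $\Nrm{\com{a_3,\opgam_A}}{\L^p}$. By \eqref{eq:com_a3_indic}, this is at most $C^{1/p}(\tilde\mu_0+\hbar)^{\frac12+\frac1p}\mu^{\frac1p}\hbar^{1/p}$. Since $\tilde\mu_0+\hbar\le\mu$ (because $\Lambda_0\ge\hbar$), this gives $\mu^{\frac12+\frac2p}\hbar^{1/p}$, which is the second line of the statement.

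\emph{Step 2 (position in the transverse plane).} By \eqref{eq:xp_vs_a}, $x_1$ and $x_2$ are combinations of $a_1,a_1^*,a_2,a_2^*$ with coefficients of size $\frac{1}{4\weight{b}}$. Hence
\begin{equation*}
\Nrm{\com{x_{12},\opgam_A}}{\L^p}\lesssim \weight{b}^{-1}\bigl(\Nrm{\com{a_1,\opgam_A}}{\L^p}+\Nrm{\com{a_2,\opgam_A}}{\L^p}\bigr).
\end{equation*}
In \eqref{eq:com_a1_indic}, I bound $\tilde\mu_0+\frac{2\hbar}{\weight{b}}\le\mu$ and $\tilde\mu_0+\hbar\le\mu$, using that $\mu\ge\Lambda_0=(2\weight{b}+1)\hbar$. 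This gives $\mu^{\frac12+\frac2p}\weight{b}^{1/p'}\hbar^{1/p}$. Estimate \eqref{eq:com_a2_indic} gives $\mu^{\frac12+\frac2p}\weight{b}^{1/p}\hbar^{1/p}$. Their sum is at most $2\mu^{\frac12+\frac2p}\weight{b}^{\max(\frac1p,\frac1{p'})}\hbar^{1/p}$. Multiplying by $\weight{b}^{-1}$ yields the claimed bound for $x_{12}$. The case $p=\infty$ follows by letting $p\to\infty$, since the constants $C^{1/p}$ stay bounded.

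\emph{Step 3 (transverse momentum), the expected obstacle.} The same identities give $\opp_1$ and $\opp_2$ with coefficients of order $1$, not $\weight{b}^{-1}$. A naive triangle inequality therefore only yields $\weight{b}^{\max(\frac1p,\frac1{p'})}$, which is one power of $\weight{b}$ too weak for the statement as written. To reach the exponent $\max(\frac1p,\frac1{p'})-1$ I will look for cancellations between the $a_1$ and $a_2$ contributions.

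My first attempt will be to write $\opp_{12} = \opv_{12} + A$, with $A = b(-x_2,x_1,0)$. The $A$ part is controlled by Step 2 but carries a factor $b$. I will then use the $(1\pm\Omega)$-weighted expressions for $\opv_{12}$, where $1-\Omega\approx\weight{b}^{-2}$ damps the $a_1$ terms. If this recombination does not produce the extra $\weight{b}^{-1}$, I will return to the trace formula \eqref{eq:Spp_magnetic} and directly compute $\Nrm{\com{\opp_{12},\opgam_A}}{\L^p}$ in the eigenbasis $\Lambda_n$, counting lattice points $n$ near the Fermi surface $\lambda\cdot n\approx\mu-\Lambda_0$ as in the proof of Proposition~\ref{prop:com_indic}.

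I flag that the classical Proposition~\ref{prop:classical-mganetic} shows the spatial derivative (the $\opp$-commutator) scaling like $\weight{b}$ times the phase-space gradient. This is consistent with Theorem~\ref{thm:zero-temp}, where only the $\Dhv$ (that is, $x$-commutator) bound carries the exponent $\max(\frac1p,\frac1{p'})-1$, and the $\Dhx$ (that is, $\opp$-commutator) bound carries $\max(\frac1p,\frac1{p'})$. So this last step is where I expect the argument could fail for $\opp_{12}$ as stated. Steps 1–2 establish everything else, including the exponents used in Theorem~\ref{thm:zero-temp}.
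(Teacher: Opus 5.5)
The gap you flag in Step 3 is real and cannot be closed, because the $\opp_{12}$ half of the first inequality is false as printed. Your Steps 1 and 2 are correct and are precisely the argument the paper has in mind (it only invokes Proposition~\ref{prop:com_indic} and \eqref{eq:xp_vs_a}). This includes the absorptions $\tilde\mu_0+\frac{2\hbar}{\weight{b}}\le\mu$ and $\tilde\mu_0+\hbar\le\mu$, and the limit $p\to\infty$, which is legitimate because the commutators have finite rank. You should stop looking for cancellations. To see why, use the joint eigenbasis $\{\lvert n\rangle\}_{n\in\N_0^3}$ of $\n{a_1}^2,\n{a_2}^2,\n{a_3}^2$, in which $\opgam_A$ is diagonal. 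The operators $\com{a_1,\opgam_A}$, $\com{a_1^*,\opgam_A}$, $\com{a_2,\opgam_A}$, $\com{a_2^*,\opgam_A}$ send $\lvert n\rangle$ into the spans of $\lvert n-e_1\rangle$, $\lvert n+e_1\rangle$, $\lvert n-e_2\rangle$, $\lvert n+e_2\rangle$ respectively. They are therefore Hilbert--Schmidt orthogonal, and $\opp_1=\frac{1}{4i}(a_1-a_1^*+a_2-a_2^*)$ gives exactly
\begin{equation*}
\Nrm{\com{\opp_1,\opgam_A}}{\L^2}^2=\tfrac18\bigl(\Nrm{\com{a_1,\opgam_A}}{\L^2}^2+\Nrm{\com{a_2,\opgam_A}}{\L^2}^2\bigr).
\end{equation*}
Now count in \eqref{eq:Spp_magnetic_0} with $j=2$. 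Every pair $(n_1,n_3)$ with $\lambda_1 n_1,\lambda_3 n_3\le\tilde\mu_0/4$ produces exactly one $n_2$ in the shell, and that $n_2$ satisfies $\alpha_2 n_2\ge\lambda_2 n_2>\tilde\mu_0/2$. There are more than $\tilde\mu_0^2/(16\lambda_1\lambda_3)\ge\weight{b}\,\tilde\mu_0^2/(32\hbar^2)$ such pairs, so $\Nrm{\com{a_2,\opgam_A}}{\L^2}^2\ge\frac{(2\pi)^3}{64}\weight{b}\,\tilde\mu_0^3\,\hbar$. Take $\mu=1$ and $\hbar=1/(6\weight{b})$, so that $\tilde\mu_0\ge\frac12$. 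Then $\Nrm{\com{\opp_1,\opgam_A}}{\L^2}\gtrsim\weight{b}^{1/2}\hbar^{1/2}$, whereas the corollary asserts $\lesssim\weight{b}^{-1/2}\hbar^{1/2}$, and letting $b\to\infty$ gives a contradiction.

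The same happens for every $p$. Extracting the $n\mapsto n-e_j$ block of an operator is an average, against the phase $e^{i\theta_j}$, of conjugations by the unitaries $e^{i\theta\cdot\mathcal N}$, $\theta\in[0,2\pi]^3$, with $\mathcal N=(\n{a_1}^2/\alpha_1,\n{a_2}^2/\alpha_2,\n{a_3}^2/\alpha_3)$. Hence it is contractive on every $\L^p$, and $\Nrm{\com{a_j,\opgam_A}}{\L^p}\le4\Nrm{\com{\opp_1,\opgam_A}}{\L^p}$ for $j\in\{1,2\}$. The same counting (with $j=1$ if $p\ge2$ and $j=2$ if $p\le2$) then shows that your naive bound $\mu^{\frac12+\frac2p}\weight{b}^{\max(\frac1p,\frac1{p'})}\hbar^{1/p}$ is sharp in $b$ whenever $\tilde\mu_0\gtrsim\mu$. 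So neither of your planned routes can succeed.

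The exponent on the $\opp_{12}$ term is a misprint. What Proposition~\ref{prop:com_indic} and \eqref{eq:xp_vs_a} actually give is your naive Step 3 bound. It is also what Theorem~\ref{thm:zero-temp} uses for $\Dhx{\opgam_A}$, in line with Proposition~\ref{prop:classical-mganetic} and Remark~\ref{rmk:magnetic}. The left-hand side should therefore read $\Nrm{\com{x_{12},\opgam_A}}{\L^p}+\weight{b}^{-1}\Nrm{\com{\opp_{12},\opgam_A}}{\L^p}$. With Step 3 replaced by the triangle-inequality bound, your proof of this corrected statement is complete.

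One caveat on your last sentence. Step 1 gives no factor $\weight{b}^{\max(\frac1p,\frac1{p'})-1}$ for $\com{x_3,\opgam_A}$, and by the same counting none exists, since $\Nrm{\com{x_3,\opgam_A}}{\L^2}^2=\frac12\Nrm{\com{a_3,\opgam_A}}{\L^2}^2\gtrsim\tilde\mu_0^3\hbar$. So for $p>1$ your steps support the $\Dhv{\opgam_A}$ bound of Theorem~\ref{thm:zero-temp} only for its transverse components.
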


    \begin{remark}\label{rmk:zero-temp_magn_mass}
        As a comparison, notice that the same proof gives the following order of magnitude for the \textit{number of particles} $N = \Tr{\opgam_A}$,
        \begin{equation*}
            N\hd \lesssim \hd\prod_{j=1}^3 \lt(\tfrac{\tilde{\mu}_0}{\lambda_j} + 1\rt) \approx \mu  \lt(\tilde{\mu}_0 + \tfrac{\hbar}{\weight{b}}\rt) \lt(\tilde{\mu}_0 + \hbar\rt) \lesssim \mu^3 .
        \end{equation*}
    \end{remark}

    \begin{proof}[Proof of Proposition~\ref{prop:com_indic}]
        At zero temperature, we define $F(r) = \indic_{r\leq\mu}$. Then Formula~\eqref{eq:Spp_magnetic} gives, for $\opgam_{A} = F(\sfH_A)$,
        \begin{equation}\label{eq:Spp_magnetic_0}
            \begin{aligned}
                S_{p, j}^p := \Nrm{\com{a_j,\opgam_{A}}}{\L^p}^p &= \hd \Tr{\indic_{\mu< \sfH_A\leq \mu+\lambda_j} \n{a_j}^p} \\
                &= \hd\sum_{n\in \N_0^3} \indic_{\tilde{\mu}_0 < \lambda\cdot n \leq \tilde{\mu}_j} \n{\alpha_j \, n_j}^\frac{p}{2},
            \end{aligned}
        \end{equation}
        with $\tilde{\mu}_0 := \mu - \Lambda_0$, $\tilde{\mu}_j = \tilde{\mu}_0 + \lambda_j$, and $\lambda$ as given in \eqref{eq:lambda_weights}.  Next, notice that $\lambda\cdot n\leq \tilde{\mu}_j$ implies $n_j\leq \tilde{\mu}_j/\lambda_j$, which yields the bound
        \begin{equation*}
            S_{p, j}^p \leq \hd\lt(\tfrac{\alpha_j\tilde{\mu}_j}{\lambda_j}\rt)^{p/2} \sum_{n\in \N_0^3} \indic_{\tilde{\mu}_0 < \lambda\cdot n\leq \tilde{\mu}_j} \, .
        \end{equation*}
        
        From the definitions~\eqref{eq:lambda_weights}, we have the bounds 
        \begin{equation*}
            \hbar\weight{b}^{-1}\leq\lambda_1 \le 2\hbar\weight{b}^{-1}, \quad 2\hbar\weight{b}\leq\lambda_2 \le 4\hbar\weight{b}, \quad \lambda_3 = \hbar \, .
        \end{equation*}
        Since the step $\lambda_1$ for the variable $n_1$ in the sum is smaller than the other steps, we start by evaluating the sum with respect to $n_1$; that is, we write
        \begin{equation*}
            \sum_{n\in \N_0^3} \indic_{\tilde{\mu}_0 < \lambda\cdot n\leq \tilde{\mu}_j} = \sum_{n_{23}\in \N_0^2} \sum_{n_1\in \N_0} \indic_{\tilde{\mu}_0 - \lambda_{23}\cdot n_{23} < \lambda_1 n_1\leq \tilde{\mu}_j - \lambda_{23}\cdot n_{23}} \, ,
        \end{equation*}
        where $\lambda_{23}\cdot n_{23} := (\lambda_2, \lambda_3)\cdot (n_2, n_3) = \lambda_2 n_2+\lambda_3 n_3$. Using the bound
        \begin{equation*}
            \n{(x,y]\cap\N_0} = \floor{y}-\floor{x} \leq \floor{y-x}+1
        \end{equation*}
        for $0\leq x\leq y$, and
        \begin{equation*}
            \n{(x,y]\cap\N_0} = \floor{y}
        \end{equation*}
        if $x\leq 0 \leq y$, we obtain the following estimates:
        \begin{align*}
            \sum_{n\in \N_0^3} &\indic_{\tilde{\mu}_0 < \lambda\cdot n\leq \tilde{\mu}_j}
            \\
            &\leq \sum_{n_{23}\in \N_0^2} \indic_{\lambda_{23}\cdot n_{23} < \tilde{\mu}_0} \lt(\floor{\tfrac{\lambda_j}{\lambda_1}}+1\rt) + \indic_{\lambda_{23}\cdot n_{23} \in [\tilde{\mu}_0,\tilde{\mu}_j]} \floor{\tfrac{\tilde{\mu}_j - \lambda_{23}\cdot n_{23}}{\lambda_1}}
            \\
            &\leq \lt(\floor{\tfrac{\lambda_j}{\lambda_1}}+1\rt) \sum_{n_{23}\in \N_0^2} \indic_{\lambda_{23}\cdot n_{23} \leq \tilde{\mu}_j}
            \\
            &\leq \lt(\floor{\tfrac{\lambda_j}{\lambda_1}} + 1\rt) \lt(\floor{\tfrac{\tilde{\mu}_j}{\lambda_2}} + 1\rt) \lt(\floor{\tfrac{\tilde{\mu}_j}{\lambda_3}} + 1\rt).
        \end{align*}
        The last inequality follows from the observation that $\lambda_{23}\cdot n_{23} \leq \tilde{\mu}_j$ implies $\max(\lambda_2 n_2,\lambda_3 n_3) \leq \tilde{\mu}_j$. This gives
        \begin{equation*}
            S_{p, j}^p \leq \hd\lt(\tfrac{\alpha_j\tilde{\mu}_j}{\lambda_j}\rt)^{p/2} \lt(\floor{\tfrac{\lambda_j}{\lambda_1}} + 1\rt) \lt(\floor{\tfrac{\tilde{\mu}_j}{\lambda_2}} + 1\rt) \lt(\floor{\tfrac{\tilde{\mu}_j}{\lambda_3}} + 1\rt).
        \end{equation*}
        Replacing $j$ by $1$, $2$ or $3$ yields
        \begin{align*}
            S_{p, 1}^p &\leq \hd\lt(4\weight{b}^2\tilde{\mu}_1\rt)^{p/2} 2 \lt(\floor{\tfrac{\tilde{\mu}_1}{2\weight{b}\hbar}} + 1\rt) \lt(\floor{\tfrac{\tilde{\mu}_1}{\hbar}} + 1\rt),
            \\
            S_{p, 2}^p &\leq \hd\lt(2\,\tilde{\mu}_2\rt)^{p/2} \lt(4\weight{b}^2 + 1\rt) \lt(\floor{\tfrac{\tilde{\mu}_2}{2\weight{b}\hbar}} + 1\rt) \lt(\floor{\tfrac{\tilde{\mu}_2}{\hbar}} + 1\rt),
            \\
            S_{p, 3}^p &\leq \hd\,\tilde{\mu}_3^{p/2} \lt(\weight{b} + 1\rt) \lt(\floor{\tfrac{\tilde{\mu}_3}{2\weight{b}\hbar}} + 1\rt) \lt(\floor{\tfrac{\tilde{\mu}_3}{\hbar}} + 1\rt).
        \end{align*}
        Hence, recalling that $\mu \geq \lt(2\weight{b}+1\rt)\hbar$, we get
        \begin{align*}
            S_{p, 1}^p &\leq 2 \lt(2\pi\rt)^3 2^p \lt(\tilde{\mu}_0 + 2\hbar/\langle b\rangle\rt)^{p/2} \lt(\tilde{\mu}_0 + \hbar\rt) \mu \weight{b}^{p-1}\hbar \, ,
            \\
            S_{p, 2}^p &\leq 15 \lt(2\pi\rt)^3 2^p \mu^{2+p/2} \langle b\rangle \hbar \, ,
            \\
            S_{p, 3}^p &\leq 4 \lt(2\pi\rt)^3 (\tilde{\mu}_0 + \hbar)^{1+p/2} \mu\, \hbar \,.
        \end{align*}
        This completes the proof of the proposition.
    \end{proof}

    \begin{proof}[Proof of Theorem~\ref{thm:zero-temp}]
        The proof follows immediately from Corollary~\ref{cor:zero-temp_magn_comm}.
    \end{proof}

\subsubsection{Equilibrium at positive temperature}

    We now come back to the case of positive temperature, that is $\beta < \infty$, and $F = F_{\beta,\mu}$ the Fermi--Dirac equilibrium.

    \begin{lem}\label{lem:magnetic_Spp_decomposition}
        Let $\beta>0$, $\mu\in\R$, and let $\opgam_{\beta,A}=F(\sfH_A)$, where $F$ is the Fermi--Dirac equilibrium distribution defined in \eqref{eq:def_f_and_F}. 
        For any $p\in[1,\infty)$ and $j\in\{1,2,3\}$, define
        \begin{equation}\label{def:sigma_delta}
            \sigma := p\,\beta\,\lambda = p\,\beta \lt(\lambda_1,\lambda_2,\lambda_3\rt), \qquad 
            \delta_0 := p\,\beta \lt(\mu-\Lambda_0\rt).
        \end{equation}
        Then the following bound holds:
        \begin{equation*}
            S_{p,j}^p := \Nrm{\com{a_j,\opgam_{\beta,A}}}{\L^p}^p \le I_+ + I_- + I_0\,,
        \end{equation*}
        where the terms satisfy
        \begin{align*}
            I_+ &\le \frac{e^{p\beta\lambda_j}-1}{p\beta\lambda_j}\, (\beta\lambda_j)^p \alpha_j^{p/2}\, \hbar 
            \sum_{n\in\N_0^3:\, \sigma\cdot n > \delta_0} e^{-\sigma\cdot n + \delta_0}\, n_j^{p/2}, \\[4pt]
            I_- &\le \frac{1-e^{-p\beta\lambda_j}}{p\beta\lambda_j}\, (\beta\lambda_j)^p \alpha_j^{p/2}\, \hbar 
            \sum_{n\in\N_0^3:\, \sigma\cdot n \le \delta_0} e^{\sigma\cdot n - \delta_0}\, n_j^{p/2}, \\[4pt]
            I_0 &\le \min(1,\beta\lambda_j)^p \,\bigl\|\com{a_j,\indic_{\sfH_A\le\mu}}\bigr\|_{\L^p}^p \, .
        \end{align*}
    \end{lem}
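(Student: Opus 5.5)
Starting from Formula~\eqref{eq:Spp_magnetic}, I write the trace in the joint eigenbasis of $\n{a_1}^2,\n{a_2}^2,\n{a_3}^2$. The eigenvalue of $\sfH_A$ is $\Lambda_n=\Lambda_0+\lambda\cdot n$ and that of $\n{a_j}^2$ is $\alpha_j n_j$, so
\begin{equation*}
S_{p,j}^p = \hd \sum_{n\in\N_0^3} \n{F(\Lambda_n)-F(\Lambda_n-\lambda_j)}^p (\alpha_j n_j)^{p/2}.
\end{equation*}
Only $n_j\ge1$ contributes, so $\Lambda_n-\lambda_j$ is again an eigenvalue. Setting $E=\beta(\Lambda_n-\mu)$, so that $pE=\sigma\cdot n-\delta_0$, the mean value form gives
\begin{equation*}
F(\Lambda_n-\lambda_j)-F(\Lambda_n) = \frac{e^{E}(e^{\beta\lambda_j}-1)e^{-\beta\lambda_j}}{(1+e^{E})(1+e^{E-\beta\lambda_j})}.
\end{equation*}
I split the sum into three index sets. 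The set $I_+$ consists of $n$ with $\Lambda_n-\lambda_j>\mu$, that is both levels above $\mu$. The set $I_-$ consists of $n$ with $\Lambda_n\le\mu$, both levels below $\mu$. The interface set $I_0$ consists of $n$ with $\Lambda_n-\lambda_j\le\mu<\Lambda_n$, which is exactly the support of $\indic_{\mu<\sfH_A\le\mu+\lambda_j}$ appearing in~\eqref{eq:Spp_magnetic_0}.

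For the interface set, each difference of Fermi functions lies in $[0,1]$. It is also bounded by $\int_{\Lambda_n-\lambda_j}^{\Lambda_n}\n{F'}\le\beta\lambda_j\sup\n{F'}/\beta\le\beta\lambda_j/4$. So each term is at most $\min(1,\beta\lambda_j)^p(\alpha_jn_j)^{p/2}$. Summing over the interface set and comparing with~\eqref{eq:Spp_magnetic_0}, which sums $\indic_{\tilde\mu_0<\lambda\cdot n\le\tilde\mu_j}(\alpha_j n_j)^{p/2}$ over exactly this set, gives the bound on $I_0$ directly.

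For the two tails, I drop one of the denominators. On $I_+$ both exponentials exceed $1$, so the difference is at most $e^{-E}(e^{\beta\lambda_j}-1)$. Since $E>0$ on $I_+$ and $\sigma\cdot n>\delta_0$, the region $I_+$ is contained in $\{\sigma\cdot n>\delta_0\}$. Raising to the power $p$ gives $e^{-\sigma\cdot n+\delta_0}(e^{\beta\lambda_j}-1)^p$. On $I_-$ both $E\le0$ and $E-\beta\lambda_j\le0$, so the denominators are at least $1$. The difference is then at most $e^{E}(1-e^{-\beta\lambda_j})$, and its $p$-th power is $e^{\sigma\cdot n-\delta_0}(1-e^{-\beta\lambda_j})^p$.

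What remains is to convert these prefactors into the stated forms. I need
\begin{equation*}
(e^{x}-1)^p\le \frac{e^{px}-1}{px}\,x^p \qquad\text{and}\qquad (1-e^{-x})^p\le \frac{1-e^{-px}}{px}\,x^p,\qquad x=\beta\lambda_j.
\end{equation*}
The first follows from Jensen's inequality: $\bigl(\frac{e^x-1}{x}\bigr)^p=\bigl(\int_0^1e^{xs}\d s\bigr)^p\le\int_0^1e^{pxs}\d s$. The second follows the same way. This step, matching the exact constants, is the only delicate point, and the Jensen argument settles it.

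Finally, the factor $\hd$ must become the factor $\hbar$ in the statement. I expect this to absorb a harmless constant $(2\pi)^3\hbar^2$, or a normalization that the authors fold into the constants. If the exact constant is needed, I would keep $\hd$ and note that the stated forms hold up to $(2\pi)^3\hbar^{2}$.
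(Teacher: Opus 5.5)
Your proof is correct and follows the paper's argument: the same three-way split of the eigenvalue sum (both levels above $\mu$; $\Lambda_n\le\mu$; and the transition layer $\mu<\Lambda_n\le\mu+\lambda_j$, compared with~\eqref{eq:Spp_magnetic_0}), with one cosmetic difference in how the prefactors are obtained. You bound the closed-form Fermi difference by an exponential and then apply Jensen to $\int_0^1 e^{\pm\beta\lambda_j s}\d s$, whereas the paper applies Jensen to $\lambda_j\int_0^1 F'(\Lambda_n-\theta\lambda_j)\d\theta$ first, then uses $\n{F'(r)}\le\beta e^{-\beta\n{r-\mu}}$ and integrates exactly in $\theta$, which gives the same prefactors. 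Keeping the factor $\hd$, as you propose at the end, is the right resolution: it is what the paper's own proof produces and how the bound is used in Lemma~\ref{lem:I_plus}, so the $\hbar$ in the statement should be read as $\hd$ (note that $(2\pi)^3\hbar^2$ is not a constant, so it cannot simply be absorbed).
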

    
	\begin{proof}
		Using the fact that the eigenvalues $\Lambda_n$ of $\sfH_A$ are given by \eqref{eq:magn_eigen}
		and the eigenvalues of $\n{a_j}^2$ are of the form $\alpha_j \,n_j$ for $n_j\in\N_0$ and $\alpha_j$ defined by \eqref{def:alphas}, then it follows from Equation~\eqref{eq:Spp_magnetic} that
		\begin{align*}
			S_{p, j}^p &= \hd \sum_{n\in\N_0^3}\n{F(\Lambda_n) - F(\Lambda_n-\lambda_j)}^p \n{\alpha_j\, n_j}^\frac{p}{2}
			\\
			&\leq \lambda_j^p\,\hd \sum_{n\in\N_0^3} \int_0^1 \n{F'(\Lambda_n-\theta\lambda_j)}^p \n{\alpha_j\, n_j}^\frac{p}{2} \d \theta\,\\
            &\leq \lt(\beta \lambda_j\rt)^p\hd \sum_{n\in\N_0^3} \int_0^1 e^{-p\beta\n{\Lambda_n-\theta\lambda_j-\mu}} \n{\alpha_j\, n_j}^\frac{p}{2} \d \theta \, .
		\end{align*}
		
		To remove the absolute value from the exponential, we look at the sum where $t_n := \Lambda_n-\lambda_j-\mu \geq 0$, which yields
        \begin{align*}
			I_+ &:= \lt(\beta \lambda_j\rt)^p \hd \sum_{n\in\N_0^3:\, t_n\geq 0} \int_0^1 e^{-p\beta\lt(t_n+\lt(1-\theta\rt)\lambda_j\rt)} \n{\alpha_j\, n_j}^\frac{p}{2} \d \theta
			\\
			&= \frac{1-e^{-p\beta\lambda_j}}{p} \lt(\beta \lambda_j\rt)^{p-1} \alpha_j^{\frac{p}{2}}\hd \sum_{n\in\N_0^3:\, t_n\geq 0} e^{-p\beta t_n}\, n_j^\frac{p}{2}\,.
        \end{align*}
		By the definition of $\sigma$ and $\delta_0$ in \eqref{def:sigma_delta}, we have that $p\beta t_n <\sigma\cdot n -\delta_0$, thus we have
		\begin{equation*}
			\sum_{n\in\N_0^3:\, t_n\geq 0} e^{-p\beta t_n}\, n_j^\frac{p}{2} 
            \le\, \sum_{n\in\N_0^3:\, \sigma\cdot n > \delta_0} e^{-\sigma\cdot n +\delta_0+p\beta\lambda_j}\, n_j^\frac{p}{2}\,.
		\end{equation*}
        Hence, it follows 
        \begin{equation*}
            I_+\le \frac{e^{p\beta\lambda_j}-1}{p} \lt(\beta \lambda_j\rt)^{p-1} \alpha_j^{\frac{p}{2}}\hd \sum_{n\in\N_0^3:\, \sigma\cdot n >\delta_0} e^{-\sigma\cdot n +\delta_0}\, n_j^\frac{p}{2}\,.
        \end{equation*}
		Next, we define $I_-$ to be the sum for terms with $\Lambda_n \leq \mu$, i.e. $t_n \leq - \lambda_j$, then it follows that
        \begin{equation}
        \begin{aligned}
            I_- &:= \lt(\beta \lambda_j\rt)^p \hd \sum_{n\in\N_0^3:\,t_n\leq -\lambda_j} \int_0^1 e^{p\beta\lt(t_n+\lt(1-\theta\rt)\lambda_j\rt)} (\alpha_j\, n_j)^\frac{p}{2} \d \theta
			\\
			&= \frac{1-e^{-p\beta\lambda_j}}{p} \lt(\beta \lambda_j\rt)^{p-1} \alpha_j^{\frac{p}{2}} \hd \sum_{n\in\N_0^3:\,\sigma\cdot n\leq \delta_0} e^{\sigma\cdot n-\delta_0}\, n_j^\frac{p}{2}.
        \end{aligned}
        \end{equation}
        
        The remaining part $\mu < \Lambda_n < \mu+\lambda_j$, the transition layer, can be bounded using the zero temperature case. More precisely, we have 
        \begin{multline*}
        	I_0:=\hd \sum_{\substack{n\in\N_0^3\\ \mu < \Lambda_n < \mu+\lambda_j}}\n{F(\Lambda_n) - F(\Lambda_n-\lambda_j)}^p (\alpha_j\, n_j)^\frac{p}{2}
            \\
            \leq \n{F(\mu-\lambda_j)-F(\mu+\lambda_j)}^p \hd \sum_{\substack{n\in\N_0^3\\ \mu < \Lambda_n < \mu+\lambda_j}} (\alpha_j\, n_j)^\frac{p}{2}
            \\
            \leq \min\bigl(1,\lambda_j^p \Nrm{F'}{L^\infty}^p\bigr) \Nrm{\com{a_j,\indic_{\sfH_A\leq \mu}}}{\L^p}^p
        \end{multline*}
        with the last inequality following by Equation~\eqref{eq:Spp_magnetic_0}.
    \end{proof}

	We estimate the sum appearing in the previous lemma in the next lemma.
	\begin{lem}
		Let $\delta\in\R$, $\sigma\in\R_+^3$, and $p\geq 1$. Then there exists $C>0$, independent of $\delta, \sigma$, and $p$, such that 
		\begin{multline*}
			\Sigma:=\sum_{n\in\N_0^3:\, \sigma\cdot n \geq \delta} e^{-\sigma\cdot n +\delta}\, n_j^\frac{p}{2} \le \lt(\frac{p}{e}\rt)^{\frac{p}{2}}\frac{e^\delta}{\sigma_j^{p/2}} \prod^3_{i=1} \frac{\weight{\sigma_i}}{\sigma_i}\indic_{\{\delta< 0\}}
            \\
            + \frac{C}{\sigma_j^{p/2}}
			\lt(\lt( p^\frac{p}{2} + \lt(2\delta\rt)^\frac{p}{2} \rt)
			+ \lt(1+\frac{\delta}{\weight{\sigma_{j+1}}}\rt) \frac{\delta^{\frac{p}{2}+1}}{\weight{\sigma_j}} \rt) \prod_{k=1}^3 \frac{\weight{\sigma_k}}{\sigma_k}\indic_{\{\delta\ge 0\}}\,,
		\end{multline*}
		where $j+1$ can be replaced by $j+2$ and is modulo $3$.
	\end{lem}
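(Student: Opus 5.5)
The bound on $\Sigma$ comes from factorizing the geometric weight $e^{-\sigma\cdot n}$ and treating the polynomial weight $n_j^{p/2}$ by an elementary maximization. There are two cases.

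\emph{Case $\delta<0$.} The constraint $\sigma\cdot n\ge\delta$ is automatic, so the sum runs over all of $\N_0^3$ and factorizes:
\begin{equation*}
\Sigma = e^{\delta}\Bigl(\sum_{n_j\ge 0} e^{-\sigma_j n_j} n_j^{p/2}\Bigr)\prod_{i\neq j}\frac{1}{1-e^{-\sigma_i}} .
\end{equation*}
\begin{itemize}
\item Since $\frac{1}{1-e^{-s}}\le 1+\frac1s\le\frac{\weight{s}+1}{s}$, each geometric factor is bounded, up to a harmless constant, by $\weight{\sigma_i}/\sigma_i$.
\item For the $j$-th factor, split $e^{-\sigma_j n_j}=e^{-\sigma_j n_j/2}e^{-\sigma_j n_j/2}$.
\item Use $\sup_{t\ge0} t^{p/2}e^{-\sigma_j t/2}=(p/(e\sigma_j))^{p/2}$.
\item The remaining factor $\sum e^{-\sigma_j n_j/2}$ is again geometric, comparable to $\weight{\sigma_j}/\sigma_j$.
\end{itemize}
This gives the first term. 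The prefactor $(p/e)^{p/2}$ appears exactly as stated if one either absorbs the factor-$2$ losses into $\weight{\cdot}/\cdot$ or uses $\sup_t t^{p/2}e^{-\sigma_j t}=(p/(2e\sigma_j))^{p/2}$ together with $\sum_{n_j} e^{0}$-type bounds. I will not fuss over these constants.

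\emph{Case $\delta\ge0$.} I would write $n_j^{p/2}$ using $\sigma_j n_j\le (\sigma\cdot n-\delta)+\delta$. This gives
\begin{equation*}
\sigma_j^{p/2}n_j^{p/2}\le 2^{p/2-1}\bigl((\sigma\cdot n-\delta)^{p/2}+\delta^{p/2}\bigr),
\end{equation*}
and I would treat the two resulting contributions separately.
\begin{itemize}
\item \emph{The $(\sigma\cdot n-\delta)^{p/2}$ piece.} Bound $s^{p/2}e^{-s}\le C p^{p/2} e^{-s/2}$ with $s=\sigma\cdot n-\delta\ge 0$. This reduces matters to $\sum_{\sigma\cdot n\ge\delta} e^{-(\sigma\cdot n-\delta)/2}$.
\item \emph{The $\delta^{p/2}$ piece.} This reduces to $\sum_{\sigma\cdot n\ge\delta}e^{-(\sigma\cdot n-\delta)}$.
\end{itemize}

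The key step, and the main obstacle, is estimating the lattice count $N(\delta):=\sum_{n\in\N_0^3:\,\sigma\cdot n\ge\delta}e^{-(\sigma\cdot n-\delta)}$, which is a layer sum near the hyperplane $\sigma\cdot n=\delta$. I would use layer-cake in unit shells: $N(\delta)\le\sum_{k\ge0}e^{-k}\,\#\{n:\ \delta+k\le\sigma\cdot n<\delta+k+1\}$. Each shell count is estimated by summing first over $n_j$ for fixed other indices:
\begin{itemize}
\item the $n_j$-range has length at most $1/\sigma_j+1\lesssim \weight{\sigma_j}/\sigma_j$;
\item for the remaining two indices, $n_i\le(\delta+k+1)/\sigma_i$, giving at most $\prod_{i\ne j}\bigl(\frac{\delta+k+1}{\sigma_i}+1\bigr)$.
\end{itemize}
To land on the stated shape $\bigl(1+\delta/\weight{\sigma_{j+1}}\bigr)\delta/\weight{\sigma_j}$, the ordering of summation must be chosen more cleverly. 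I would sum one non-$j$ index, say $j+2$, over a thin layer, costing $\weight{\sigma_{j+2}}/\sigma_{j+2}$. The indices $j$ and $j+1$ then range over the full simplex, contributing $\bigl(\frac{\delta}{\sigma_j}+1\bigr)\bigl(\frac{\delta}{\sigma_{j+1}}+1\bigr)$. Rewriting $\frac{\delta}{\sigma}+1\le\frac{\weight\sigma}{\sigma}\bigl(1+\frac{\delta}{\weight\sigma}\bigr)$ produces the factor $\prod_k\weight{\sigma_k}/\sigma_k$ times $(1+\delta/\weight{\sigma_j})(1+\delta/\weight{\sigma_{j+1}})$. Multiplying out by $\delta^{p/2}$ gives precisely the terms $\delta^{p/2}$ and $\delta^{p/2+1}(1+\delta/\weight{\sigma_{j+1}})/\weight{\sigma_j}$, up to the $\delta^{p/2+1}/\weight{\sigma_{j+1}}$ cross term. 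That cross term I expect to be dominated after symmetrizing $j+1\leftrightarrow j+2$, or to be absorbed into the other terms.

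The same argument applied to the $p^{p/2}$ piece yields $p^{p/2}$ times the same product. One should keep $n_j^{p/2}$ rather than fully absorbing it, so that the $p^{p/2}$ term carries no $\delta$ factors. If this fails, I would fall back on crude bounds and accept extra $\delta$ powers, since only the final shape matters downstream. The freedom to choose $j+1$ or $j+2$ in the statement reflects the choice of which non-$j$ index is summed over the thin layer.
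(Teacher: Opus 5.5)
The gap is in the case $\delta\ge 0$: your decomposition includes the slice $n_j=0$, where the summand vanishes, and that slice alone makes your bound larger than the right-hand side. Your $\delta<0$ case matches the paper's argument (factorize, split $e^{-\sigma_j n_j}$, use $t^c\le (c/e)^c e^t$, sum geometric series).

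\emph{On the $\delta<0$ constant.} Neither your version nor the paper's gives the first term with exactly the constant $(p/e)^{p/2}$. For $p=2$ and $\sigma_1=\sigma_2=\sigma_3=s\to0$ one has $\Sigma=e^{\delta}e^{-s}(1-e^{-s})^{-4}\sim e^{\delta}s^{-4}$, while that term is $\sim (2/e)\,e^{\delta}s^{-4}$. So your claim that the factor-$2$ losses can be absorbed exactly is wrong, but this only costs a universal constant.

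\emph{Why the $\delta\ge 0$ argument fails.} The inequality $\sigma_j n_j\le(\sigma\cdot n-\delta)+\delta$ is applied on the whole region $\{\sigma\cdot n\ge\delta\}$. This discards the fact that the summand vanishes at $n_j=0$. Afterwards, both of your pieces contain the slice $n_j=0$, and there the weighted lattice sum is genuinely large.

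Take $\sigma_{j+1}=\sigma_{j+2}=1$, $\delta\in\mathbb{N}$ with $\delta\ge 1$, and $\sigma_j\ge\delta^3$.
\begin{itemize}
\item The slice alone gives $\sum_{m\ge\delta}(m+1)e^{-(m-\delta)}\ge\delta+1$, so your $\delta^{p/2}$-piece is $\gtrsim\delta^{p/2}(\delta+1)\,\sigma_j^{-p/2}$.
\item The right-hand side of the lemma is $\lesssim\sigma_j^{-p/2}\bigl(p^{p/2}+(2\delta)^{p/2}+\delta^{p/2-1}\bigr)$.
\item At fixed $p$, the ratio grows like $\delta$.
\end{itemize}
Hence the cross term $\delta^{p/2+1}/\langle\sigma_{j+1}\rangle$ cannot be absorbed. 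Symmetrizing $j+1\leftrightarrow j+2$ does not help, since both $\sigma_{j+1},\sigma_{j+2}$ may be $\lesssim 1$. No sharper count of $N(\delta)$ helps either, because the loss occurs before any counting. The same slice is also what gives your $p^{p/2}$-piece its unwanted $\delta$-factors, the issue you flagged but did not resolve. In this regime the true $\Sigma$ is $\lesssim_p e^{\delta-\sigma_j}\prod_{i\neq j}\langle\sigma_i\rangle/\sigma_i$, which is exponentially small.

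\emph{The missing idea.} Organize the sum by $n_j$ and split at $n_j=\delta/\sigma_j$, as the paper does.
\begin{itemize}
\item For $n_j\ge\delta/\sigma_j$ the constraint is void, so the two other sums are full geometric series. The split $\sigma_jn_j=(\sigma_jn_j-\delta)+\delta$ together with $t^c\le(c/e)^ce^t$ is then used only in the single variable $n_j$. This gives $\sigma_j^{-p/2}\bigl(p^{p/2}+(2\delta)^{p/2}\bigr)(1-e^{-\sigma_j})^{-1}\prod_{i\ne j}(1-e^{-\sigma_i})^{-1}$, with no $\delta$-growth.
\item For $1\le n_j<\delta/\sigma_j$ there are at most $\delta/\sigma_j$ terms, each with $n_j^{p/2}\le(\delta/\sigma_j)^{p/2}$. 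Sum $n_{j+2}$ as a geometric tail, then split the $n_{j+1}$-sum according to whether $\sigma_{j+1}n_{j+1}<\delta-\sigma_jn_j$. This gives the factor $\prod_{i\neq j}(1-e^{-\sigma_i})^{-1}\bigl(1+\delta(1-e^{-\sigma_{j+1}})/\sigma_{j+1}\bigr)$, hence the term $(1+\delta/\langle\sigma_{j+1}\rangle)\,\delta^{p/2+1}\sigma_j^{-p/2-1}$.
\end{itemize}
The count $\delta/\sigma_j$ of $n_j$-values is exactly what produces $1/\langle\sigma_j\rangle$ once $\langle\sigma_j\rangle/\sigma_j$ is pulled out. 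It also disappears by itself when $\sigma_j>\delta$.
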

	
	\begin{proof}
        Suppose $\delta<0$. Then we have 
        \begin{equation*}
            \Sigma=e^\delta\operatorname{Li}_{-\frac{p}{2}+1}(e^{-\sigma_j}) \prod_{\substack{i=1\\ i\ne j}}^3 (1-e^{-\sigma_i})^{-1}.
        \end{equation*}
        We have that 
        \begin{align*}
            \Sigma \le&\, \frac{2^{p/2}e^\delta}{\sigma_j^{p/2}} \lt(\sum^\infty_{n=1} \lt(\frac{\sigma_j n}{2}\rt)^{\frac{p}{2}} e^{-\sigma_j n}\rt)\prod_{\substack{i=1\\ i\ne j}}^3 (1-e^{-\sigma_i})^{-1} \\
            \le&\, \lt(\frac{p}{e}\rt)^{\frac{p}{2}}\frac{e^\delta}{\sigma_j^{p/2}} \lt(\sum^\infty_{n=1} e^{-\frac{\sigma_j n}{2}}\rt)\prod_{\substack{i=1\\ i\ne j}}^3 (1-e^{-\sigma_i})^{-1}\le \lt(\frac{p}{e}\rt)^{\frac{p}{2}}\frac{e^\delta}{\sigma_j^{p/2}} \prod^3_{i=1} \frac{\weight{\sigma_i}}{\sigma_i}\,,
        \end{align*}
        where the second inequality follows from the fact that for any $t,c\geq 0$, $t^c \leq \lt(c/e\rt)^c e^t$.

		Assume $\delta>0$ and take $j=1$. Then
		\begin{equation*}
			\Sigma 
            = e^{\delta}\sum_{n_1\geq 0} n_1^\frac{p}{2}\,e^{-\sigma_1 n_1} \sum_{n_2\geq 0} e^{-\sigma_2 n_2} \sum_{n_3 \geq \frac{(\delta-\sigma_1 n_1-\sigma_2 n_2)_+}{\sigma_3}} e^{-\sigma_3 n_3} ,
		\end{equation*}
        where $(x)_+:= \max(x, 0)$. Using the fact that for $r\in\R_+$ and $x\in(0,1)$, we have
		\begin{equation*}
			\sum_{k\geq r} x^k = \frac{x^{\ceil{r}}}{1-x} \leq \frac{x^r}{1-x} \, ,
		\end{equation*}
		then it follows that
		\begin{equation*}
			\Sigma \leq \frac{e^\delta}{1-e^{-\sigma_3}} \sum_{n_1\geq 0} n_1^\frac{p}{2}\,e^{-\sigma_1 n_1} \sum_{n_2\geq 0} e^{-\sigma_2 n_2 - (\delta-\sigma_1 n_1-\sigma_2 n_2)_+} .
		\end{equation*}
        
		Notice that 
        \begin{align*}
        \sigma_2 n_2 + (\delta - \sigma_1 n_1 - \sigma_2 n_2)_+=
            \begin{cases}
                \sigma_2 n_2 & \text{ if } \sigma_2 n_2 \geq \delta -\sigma_1 n_1
                \\
                \delta-\sigma_1 n_1 & \text{ else}
            \end{cases}.
        \end{align*}
        Hence, we have that
		\begin{align*}
			\sum_{n_2\geq 0} e^{-\sigma_2 n_2 - (\delta-\sigma_1 n_1-\sigma_2 n_2)_+} &= \sum_{n_2\geq \frac{(\delta-\sigma_1 n_1)_+}{\sigma_2}} e^{-\sigma_2 n_2} + \sum_{0\leq n_2 < \frac{(\delta-\sigma_1 n_1)_+}{\sigma_2}} e^{ - \delta+\sigma_1 n_1}
			\\
			&\leq \frac{e^{- (\delta-\sigma_1 n_1)_+}}{1-e^{-\sigma_2}} +  \frac{(\delta-\sigma_1 n_1)_+}{\sigma_2} \, e^{- \delta+\sigma_1 n_1} .
		\end{align*}
		So, it follows that
		\begin{equation*}
			\Sigma \leq \frac{J_1 + J_2}{\lt(1-e^{-\sigma_3}\rt)\lt(1-e^{-\sigma_2}\rt)}
		\end{equation*}
		where
		\begin{align*}
			J_1 &:= e^\delta \sum_{n_1\geq \delta/\sigma_1}  n_1^\frac{p}{2} e^{-\sigma_1 n_1}
			\\
			J_2 &:= \sum_{0\leq n_1 < \delta/\sigma_1}  n_1^\frac{p}{2} \lt(1 + \frac{\delta-\sigma_1 n_1}{\sigma_2}\lt(1-e^{-\sigma_2}\rt)\rt) .
		\end{align*}
		The second sum is bounded by
		\begin{equation*}
			J_2 \leq \lt(1+\frac{2\,\delta}{\weight{\sigma_2}}\rt) \lt(\frac{\delta}{\sigma_1}\rt)^{\frac{p}{2}+1} ,
		\end{equation*}
		which, again, follows from using the fact that $t^c \leq \lt(c/e\rt)^c e^t$. The first term is bounded by
		\begin{align*}
			J_1 &= \frac{2^{p/2}\,e^\delta}{\sigma_1^{p/2}}\sum_{n_1\geq \delta/\sigma_1} \lt(\tfrac{\sigma_1}{2}\,n_1\rt)^\frac{p}{2} e^{-\sigma_1 n_1}
			\\
			&\leq \frac{2^{p}\,e^\delta}{\sigma_1^{p/2}} \sum_{n_1\geq \delta/\sigma_1} \lt(\tfrac{\sigma_1 n_1 - \delta}{2}\rt)^\frac{p}{2} e^{-\sigma_1 n_1} + \lt(\tfrac{\delta}{2}\rt)^\frac{p}{2} e^{-\sigma_1 n_1}
			\\
			&\leq \frac{2^p e^\delta}{\sigma_1^{p/2}}\sum_{n_1\geq \delta/\sigma_1} \lt(\lt(\tfrac{p}{2e}\rt)^\frac{p}{2} e^{-\delta/2}\, e^{-\sigma_1 n_1/2} + \lt(\tfrac{\delta}{2}\rt)^\frac{p}{2} e^{-\sigma_1 n_1} \rt)
			\\
			&\leq \frac{e^\delta}{\sigma_1^{p/2}}\lt( p^\frac{p}{2}\,  \frac{e^{- \delta}}{1- e^{-\sigma_1/2}} + \lt(2\delta\rt)^\frac{p}{2} \frac{e^{- \delta}}{1- e^{-\sigma_1}} \rt) \leq \frac{2 p^\frac{p}{2} + \lt(2\delta\rt)^\frac{p}{2}}{\sigma_1^{p/2}\lt(1- e^{-\sigma_1}\rt)}\,,
		\end{align*}
		where we used the fact that $1 + e^{-\sigma_1/2} \leq 2$ to get the last inequality. Therefore, finally we have
		\begin{equation*}
			\Sigma \leq \frac{1}{\lt(1-e^{-\sigma_3}\rt)\lt(1-e^{-\sigma_2}\rt)}
			\lt(\frac{\lt( 2 p^\frac{p}{2} + \lt(2\delta\rt)^\frac{p}{2} \rt)}{\sigma_1^{p/2}\lt(1- e^{-\sigma_1}\rt)} 
			+  \lt(1+\frac{2\,\delta}{\weight{\sigma_2}}\rt) \lt(\frac{\delta}{\sigma_1}\rt)^{\frac{p}{2}+1} \rt)
		\end{equation*}
		and the result follows using the fact that $\frac{7}{8} \frac{t}{\weight{t}} \leq 1-e^{-t} \leq \frac{t}{\weight{t}}$.
	\end{proof}

    \begin{lem}\label{lem:I_plus}
        Let $p\in[1,\infty)$ and $\cC_p := p^2\, 2^{5p/2}$. We define $\eta=\beta\hbar$ and $\nu_j = \beta\lt(\mu-\Lambda_0+\lambda_j\rt)$ for $j\in\set{1,2,3}$. Then following the inequalities hold.

        \noindent If $j=1$,
        \begin{align*}
            I_+ &\lesssim \cC_p \min(\weight{\eta},\weight{b}) \bigl(\weight{\eta}^2 + b\,\eta\bigr)\,e^{-p\n{\nu_1}} \beta^{\frac{p}{2}-3} \hbar^p && \text{ if } \nu_1 \leq 0
            \\
            I_+ &\lesssim \cC_p \min(\weight{\eta},\weight{b}) \lt(\weight{\eta}^2 + \weight{\nu_1}^2 + \weight{b} \nu_1\,\eta\rt) \weight{\nu_1}^\frac{p}{2} \beta^{\frac{p}{2}-3} \hbar^p && \text{ if } \nu_1 \geq 0 \, .
        \end{align*}
        If $j=2$,
        \begin{align*}
			I_+ &\lesssim \cC_p \weight{\eta} \lt(\weight{b}+\eta\rt)  e^{-p\n{\nu_2}} \weight{b}^{p-1} \beta^{p-3} \hbar^p && \text{ if } \nu_2 \leq 0
            \\
            I_+ &\lesssim \cC_p \weight{\eta} \min(\weight{b},\tfrac{1}{\eta}) \bigl(\weight{\eta}^2 +\weight{\nu_2}^2 + \weight{b}\eta \bigr) \weight{\nu_2}^\frac{p}{2} \weight{b}^{p-1} \beta^{\frac{p}{2}-3} \hbar^p && \text{ if } \nu_2 \geq 0 \, .
		\end{align*}
        If $j=3$,
		\begin{align*}
			I_+ &\lesssim \cC_p \, \bigl(\weight{\eta}^2+\weight{b}\eta\bigr) \, e^{-p\n{\nu_3}} \, \beta^{\frac{p}{2}-3} \, \hbar^p && \text{ if } \nu_3 \leq 0
            \\
            I_+ &\lesssim \cC_p \lt(\tfrac{1}{\weight{b}}+\tfrac{1}{\weight{\eta}}\rt) \lt(\weight{\nu_3}^2 + \weight{b}\eta\lt(\weight{\nu_3}+\eta\rt) \rt) \weight{\nu_3}^\frac{p}{2} \beta^{\frac{p}{2}-3} \hbar^p  && \text{ if } \nu_3 \geq 0 \, .
		\end{align*}
    \end{lem}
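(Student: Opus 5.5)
The plan is to start from the bound on $I_+$ in Lemma~\ref{lem:magnetic_Spp_decomposition} and insert the estimate on $\Sigma$ from the preceding lemma, with $\sigma = p\beta\lambda$ and $\delta = \delta_0 = p\beta(\mu-\Lambda_0)$. What remains is bookkeeping in the parameters $\eta=\beta\hbar$, $b$ and $\nu_j$.

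\textbf{Dictionary.} By \eqref{eq:lambda_weights} and the bounds $\hbar\weight{b}^{-1}\le\lambda_1\le 2\hbar\weight{b}^{-1}$, $2\hbar\weight{b}\le \lambda_2\le 4\hbar\weight{b}$, $\lambda_3=\hbar$, we have, up to factors of $p$ and absolute constants,
\[
\sigma_1\approx \eta/\weight{b},\qquad \sigma_2\approx \eta\weight{b},\qquad \sigma_3 = p\eta,\qquad \alpha_j^{p/2}\approx (\weight{b}\hbar)^{p/2}\ (j=1,2),\quad \alpha_3=\hbar .
\]
The prefactor in $I_+$ is $\frac{e^{p\beta\lambda_j}-1}{p\beta\lambda_j}(\beta\lambda_j)^p$. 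I will write this as $(\beta\lambda_j)^p$ times a factor that is $\lesssim 1$ when $\beta\lambda_j\lesssim 1$ and grows like $e^{p\beta\lambda_j}/(\beta\lambda_j)$ otherwise. The exponential is what turns $e^{\delta_0}$ into $e^{p\nu_j}$, since $\nu_j=\beta(\mu-\Lambda_0+\lambda_j)$ and hence $p\nu_j=\delta_0+p\beta\lambda_j$. This explains why the final bounds are expressed in $\nu_j$ rather than in $\delta_0$.

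\textbf{Evaluating the product.} Next I evaluate $\prod_k \weight{\sigma_k}/\sigma_k$ with the dictionary:
\[
\frac{\weight{\sigma_1}}{\sigma_1}\approx \frac{\weight{b}\weight{\eta}}{\eta}\ \text{(up to a minimum with } \weight{b}\text{-dependence)},\qquad \frac{\weight{\sigma_2}}{\sigma_2}\approx 1+\frac{1}{\eta\weight{b}},\qquad \frac{\weight{\sigma_3}}{\sigma_3}\approx \frac{\weight{\eta}}{\eta}.
\]
Combined with the factor $\hd=(2\pi)^3\hbar^3$ and the powers $\sigma_j^{-p/2}$, $\alpha_j^{p/2}$, $(\beta\lambda_j)^p$, the powers of $\hbar$ and $\beta$ should collapse to $\beta^{p/2-3}\hbar^p$ for $j=1,3$ and to $\weight{b}^{p-1}\beta^{p-3}\hbar^p$-type factors for $j=2$. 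The terms $\weight{\eta}^2$, $b\eta$, $\min(\weight{\eta},\weight{b})$ arise from expanding products like $(\eta+\weight{b}\weight{\eta})\weight{\eta}$.

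\textbf{Case split.} In the case $\nu_j\le 0$ I use the first line of the $\Sigma$ bound, with the $e^{\delta}$ factor becoming $e^{-p|\nu_j|}$ (the bound is also harmless in the sub-case $\delta_0<0\le\nu_j$, absorbed into the $\nu_j\ge0$ statement since $e^{\delta_0}\le 1$). In the case $\nu_j\ge0$ I use the second line. Here $p^{p/2}+(2\delta)^{p/2}\lesssim 2^{p}p^{p/2}\weight{\nu_j}^{p/2}$, and the term $(1+\delta/\weight{\sigma_{j+1}})\delta^{p/2+1}/\weight{\sigma_j}$ produces the quadratic terms $\weight{\nu_j}^2$ and $\weight{b}\nu_j\eta$. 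The choice between $j+1$ and $j+2$ is made so as to divide by the larger of $\sigma_2\approx\eta\weight{b}$ and $\sigma_3\approx\eta$, which is what gives the stated $\weight{b}$ powers. All $p$-dependent constants are absorbed into $\cC_p=p^2 2^{5p/2}$.

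\textbf{Main obstacle.} I expect the hardest part to be the uniform tracking of the $\weight{\eta}$ versus $\weight{b}$ minima across the regimes $\eta\lessgtr\weight{b}^{-1}$, $\eta\lessgtr 1$, $\eta\lessgtr\weight{b}$. This is especially delicate for $j=2$, where $\sigma_2$ may be large even when $\eta$ is small and the growth $e^{p\beta\lambda_2}$ must be cancelled against $e^{\delta_0}$. For that step I would handle the regimes separately, using $\weight{t}/t\approx\max(1,1/t)$, and then recombine the pieces with $\max$/$\min$ identities.
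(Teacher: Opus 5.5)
Your starting inequality is too weak. You insert the $\Sigma$-lemma with $\delta=\delta_0$ into the bound for $I_+$ \emph{as stated} in Lemma~\ref{lem:magnetic_Spp_decomposition}, which has prefactor $\frac{e^{p\beta\lambda_j}-1}{p\beta\lambda_j}$ and summation range $\sigma\cdot n>\delta_0$. That inequality is true, but it loses an exponential factor when $\beta\lambda_j\gg1$, so no later bookkeeping can recover the polynomial bounds of the lemma.

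\textbf{Case $\delta_0\ge0$.} The second line of the $\Sigma$-bound contains no factor $e^{\delta}$. The growth $e^{p\beta\lambda_j}$ that you plan to cancel against $e^{\delta_0}$ is therefore simply left over. This is not only a weakness of the $\Sigma$-lemma. Take $j=3$, bounded $b$, and $\mu-\Lambda_0=\hbar/2$, so that $\delta_0=p\eta/2$ and $\nu_3=3\eta/2$. The single term $n=(0,0,1)$ already contributes $\frac{e^{p\eta}-1}{p\eta}\,e^{-p\eta/2}\,\eta^{\frac p2+3}\beta^{\frac p2-3}\hbar^p$ up to a constant, which is of order $e^{p\eta/2}\eta^{\frac p2+2}\beta^{\frac p2-3}\hbar^p$. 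The claimed bound is polynomial in $\eta$.

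\textbf{Sub-case $\delta_0<0\le\nu_j$.} The same objection applies. What enters is not $e^{\delta_0}\le1$ but its product with the prefactor, namely $\frac{1-e^{-p\beta\lambda_j}}{p\beta\lambda_j}e^{p\nu_j}$ with $0\le p\nu_j<p\beta\lambda_j$. This is not bounded by any power of $\weight{\nu_j}$.

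These regimes ($\beta\hbar\weight{b}\ge1$ for $j=2$, $\beta\hbar\ge1$ for $j=3$) are exactly the ones needed for the low-temperature part of Theorem~\ref{thm:main-magnetic}.

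\textbf{The fix.} Use the exact expression obtained in the proof of Lemma~\ref{lem:magnetic_Spp_decomposition}, before it is weakened. Since $p\beta t_n=\sigma\cdot n-p\nu_j$ exactly,
\begin{equation*}
I_+=\frac{1-e^{-p\beta\lambda_j}}{p\beta\lambda_j}\,(\beta\lambda_j)^p\,\alpha_j^{p/2}\,\hd\sum_{n\in\N_0^3:\,\sigma\cdot n\ge p\nu_j}e^{-\sigma\cdot n+p\nu_j}\,n_j^{p/2}.
\end{equation*}
The paper applies the $\Sigma$-lemma to this with $\delta=p\nu_j$; that is the meaning of ``$\nu=\delta/p$'' in its proof. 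Three things then fall into place:
\begin{itemize}
\item by $1-e^{-t}\le t/\weight{t}$, the prefactor is at most $1/\weight{p\beta\lambda_j}$;
\item the split $\nu_j\le0$ versus $\nu_j\ge0$ coincides with the split $\delta<0$ versus $\delta\ge0$ of the $\Sigma$-lemma;
\item there is no exponential left to cancel.
\end{itemize}
What remains is the bookkeeping you sketch, where $\lambda_1\lambda_2\lambda_3=4\hbar^3$ turns $\hd\prod_k\sigma_k^{-1}$ into a multiple of $\beta^{-3}$. There you should use $\weight{\sigma_1}/\sigma_1\approx1+\weight{b}/\eta$ (up to factors of $p$) rather than $\weight{b}\weight{\eta}/\eta$. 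The latter is too large by a factor $\min(\eta,\weight{b})$ when $\eta\ge1$, and would prevent you from reaching the stated $\min(\weight{\eta},\weight{b})$ and $\frac{1}{\weight{b}}+\frac{1}{\weight{\eta}}$ factors.
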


	\begin{proof}
		Let $\nu = \nu_j$. It satisfies
        \begin{equation*}
            \nu = \delta/p = \beta\lt(\mu + \lambda_j - \Lambda_0\rt) = \beta\lt(\mu + \lambda_j - \lt(2\weight{b}+1\rt)\hbar\rt) .
        \end{equation*}
        Since $\sigma = p\, \beta \, \lambda$, by the two previous lemmas, using again $1-e^{-t} \leq \frac{t}{\weight{t}}$, one obtains since $d=3$ and $p\geq 1$, $S_p^p \leq I_+ + I_- + I_0$ with
		\begin{align*}
			I_+ &\leq \frac{3\lt(\beta \lambda_j\rt)^p \alpha_j^{\frac{p}{2}}\hd}{\sigma_j^{p/2}\weight{p\beta\lambda_j}} \, p^\frac{p}{2}\lt(\lt(1 + \lt(2\nu\rt)_+^\frac{p}{2} \rt) e^{-p\nu_-} + \lt(1+\frac{p\,\nu_+}{\weight{\sigma_{j+1}}}\rt) \frac{p\,\nu_+^{\frac{p}{2}+1}}{\weight{\sigma_j}} \rt) \prod_{k=1}^3 \frac{\weight{\sigma_k}}{\sigma_k}
            \\
            &\leq C_p \, \frac{\lt(\beta \lambda_j\rt)^\frac{p}{2} \alpha_j^{\frac{p}{2}} \hbar^3}{\weight{\beta\lambda_j}} \lt(e^{-p\nu_-} + \nu_+^\frac{p}{2} + \lt(1+\frac{\nu_+}{\weight{\beta \, \lambda_{j+1}}}\rt) \frac{\nu_+^{\frac{p}{2}+1}}{\weight{\beta \, \lambda_j}} \rt) \prod_{k=1}^3 \frac{\weight{\beta \, \lambda_k}}{\beta \, \lambda_k} \, .
		\end{align*}
        with $C_p = 3\lt(2\pi\rt)^3 p^2\, 2^{p/2}$. Recall $\lambda_1 = 2\lt(\weight{b}-b\rt)\hbar$, $\lambda_2 = 2\lt(\weight{b}+b\rt)\hbar$, $\lambda_3 = \hbar$. In particular, one observes that $\lambda_1\lambda_2\lambda_3 = 4 \,\hbar^3$, Moreover, in terms of $\eta = \beta\,\hbar$, $\weight{\beta\lambda_3} = \weight{\eta}$ and, since $\frac{\hbar}{\weight{b}} \leq \lambda_1 \leq \frac{2\hbar}{\weight{b}}$ and $2\weight{b}\hbar \leq \lambda_2 \leq 4\weight{b}\hbar$
        \begin{align*}
            \max(1,\weight{\eta}/\weight{b}) &\leq \weight{\beta\lambda_1} \leq 2\lt(1+\eta/\langle b\rangle\rt)
            \\
            \max(1,2\weight{b}\eta) &\leq \weight{\beta\lambda_2} \leq 4\lt(1+\weight{b}\eta\rt) .
        \end{align*}

        Therefore, if $j=1$ and $\nu \leq 0$
		\begin{align*}
			I_+ &\lesssim \cC_p \, \beta^{\frac{p}{2}-3} \hbar^p \min(\weight{\eta},\weight{b}) \weight{\eta/\langle b\rangle} \weight{\weight{b}\eta}e^{-p\n{\nu}}
            \\
            &\lesssim \cC_p \min(\weight{\eta},\weight{b}) \bigl(\weight{\eta}^2 + b\,\eta\bigr)\,e^{-p\n{\nu}} \beta^{\frac{p}{2}-3} \hbar^p ,
		\end{align*}
        with $\cC_p = p^2\, 4^p$, while if $j=1$ and $\nu > 0$,
    	\begin{align*}
			I_+ \lesssim \cC_p \, \beta^{\frac{p}{2}-3} \hbar^p \min(\weight{\eta},\weight{b}) \lt(\weight{\eta/\langle b\rangle} \weight{\weight{b}\eta} + \lt(\weight{\weight{b}\eta}\weight{\nu} + \weight{\nu}^2\rt) \rt) \weight{\nu}^\frac{p}{2}
            \\
            \lesssim \cC_p \min(\weight{\eta},\weight{b}) \lt(\weight{\eta}^2 + \weight{\nu}^2 + \weight{b} \nu\,\eta\rt) \weight{\nu}^\frac{p}{2} \beta^{\frac{p}{2}-3} \hbar^p .
		\end{align*}
        If $j=2$ and $\nu \leq 0$
		\begin{equation*}
			I_+ \lesssim \cC_p \weight{\eta} \lt(\weight{b}+\eta\rt)  e^{-p\n{\nu}} \beta^{p-3} \weight{b}^{p-1}\hbar^p  .
		\end{equation*}
        with $\cC_p = p^2\, 2^{5p/2}$, while if $j=2$ and $\nu > 0$, then we replace $j+1$ by $1$ and it gives
		\begin{align*}
			I_+ &\lesssim \cC_p \, \frac{\beta^{\frac{p}{2}-3} \weight{b}^p\hbar^p}{\weight{\beta\lambda_2}} \lt(\weight{\eta/\langle b\rangle} \weight{\weight{b}\eta} + \weight{\eta/\langle b\rangle}\weight{\nu}+\weight{\nu}^2 \rt) \weight{\nu}^\frac{p}{2} \weight{\eta}
            \\
            &\lesssim \cC_p \weight{\eta} \min(\weight{b},1/\eta) \lt(\weight{\eta}^2 +\weight{\nu}^2 + \weight{b}\eta \rt) \weight{\nu}^\frac{p}{2} \weight{b}^{p-1} \beta^{\frac{p}{2}-3} \, \hbar^p .
		\end{align*}
        If $j=3$ and $\nu \leq 0$
		\begin{equation*}
			I_+ \lesssim \cC_p  \lt(\weight{\eta}^2+\weight{b}\eta\rt) e^{-p\n{\nu}} \, \beta^{\frac{p}{2}-3} \, \hbar^p ,
		\end{equation*}
        with $\cC_p = p^2\, 2^{p/2}$, while if $j=3$ and $\nu > 0$, then we replace $j+1$ by $2$ to get
		\begin{align*}
			I_+ &\lesssim \cC_p \, \beta^{\frac{p}{2}-3} \hbar^p \frac{1}{\weight{\eta}} \lt(\weight{\eta}\weight{\weight{b}\eta} + \weight{\weight{b}\eta}\nu+\nu^2  \rt)  \weight{\nu}^\frac{p}{2} \weight{\eta/\langle b\rangle}
            \\
            &\lesssim \cC_p \lt(\tfrac{1}{\weight{b}}+\tfrac{1}{\weight{\eta}}\rt) \lt(\weight{\weight{b}\eta} \lt(\weight{\nu}+\eta\rt) + \weight{\nu}^2 \rt) \weight{\nu}^\frac{p}{2} \beta^{\frac{p}{2}-3} \hbar^p .
		\end{align*}
        This finishes the proof of the lemma.
    \end{proof}
    
	\begin{proof}[Proof of Theorem \ref{thm:main-magnetic}]
        If $\tilde{\mu}_0 = \mu - \lt(2\weight{b}+1\rt)\hbar \geq 0$, then $\tilde{\mu}_0 \leq \mu$ and $\weight{b}\hbar \leq \mu$ and so for any $j\in\set{1,2,3}$, $0\leq \nu_j = \beta\lt(\tilde{\mu}_0+\lambda_j\rt)\leq 2\beta\mu$. Therefore, it follows from Lemma~\ref{lem:I_plus} that if $j=1$,
        \begin{align*}
            I_+ &\lesssim \cC_p \min(\weight{\beta\hbar},\weight{b}) \lt(\weight{\beta\hbar}^2 + \weight{\beta\mu}^2 + \weight{b} \beta\mu\,\beta\hbar\rt) \weight{\beta\mu}^\frac{p}{2} \beta^{\frac{p}{2}-3}\, \hbar^p
            \\
            &\lesssim \cC_p \min(\weight{\beta\hbar},\weight{b})  \weight{\beta\mu}^{\frac{p}{2}+2}\beta^{\frac{p}{2}-3}\, \hbar^p
        \end{align*}
        if $j=2$,
        \begin{align*}
			I_+ &\lesssim \cC_p \weight{\beta\hbar} \min(\weight{b},\tfrac{1}{\beta\hbar}) \lt(\weight{\beta\hbar}^2 +\weight{\beta\mu}^2 + \weight{b}\beta\hbar \rt) \weight{\beta\mu}^\frac{p}{2} \weight{b}^{p-1} \beta^{\frac{p}{2}-3} \hbar^p
            \\
            &\lesssim \cC_p \weight{\beta\hbar} \min(\weight{b},\tfrac{1}{\beta\hbar}) \weight{b}^{p-1} \weight{\beta\mu}^{\frac{p}{2}+2} \beta^{\frac{p}{2}-3}\, \hbar^p
		\end{align*}
        and if $j=3$,
		\begin{align*}
			I_+ &\lesssim \cC_p \lt(\tfrac{1}{\weight{b}}+\tfrac{1}{\weight{\beta\hbar}}\rt) \lt(\weight{\beta\mu}^2 + \weight{b}\beta\hbar\lt(\weight{\beta\mu}+\beta\hbar\rt) \rt) \weight{\beta\mu}^\frac{p}{2} \beta^{\frac{p}{2}-3} \hbar^p
            \\
            &\lesssim \cC_p \lt(\tfrac{1}{\weight{b}}+\tfrac{1}{\weight{\beta\hbar}}\rt) \weight{\beta\mu}^{\frac{p}{2}+2} \beta^{\frac{p}{2}-3} \, \hbar^p
		\end{align*}
        with $\cC_p = p^2\,8^p$. From the zero-temperature case (Proposition~\ref{prop:com_indic}), we have the bounds
		\begin{align*}
			\min(1,\beta\hbar/\weight{b})^p \Nrm{\com{a_1,\opgam_A}}{\L^p}^p &\lesssim 2^p \, \mu^{\frac{p}{2}+2} \min(\langle b\rangle^{p-1}\hbar,\langle b\rangle^{-1}\beta^p\hbar^{p+1}) \,  
			\\
			\min(1,\beta\hbar\weight{b})^p \Nrm{\com{a_2,\opgam_{A}}}{\L^p}^p &\lesssim 2^p \, \mu^{\frac{p}{2}+2} \min(\weight{b}\hbar,\langle b\rangle^{p+1}\beta^p\hbar^{p+1}) 
			\\
			\min(1,\beta\hbar)^p \Nrm{\com{a_3,\opgam_{A}}}{\L^p}^p &\lesssim \mu^{\frac{p}{2}+2} \min(\hbar,\beta^p\hbar^{p+1}) \, .
		\end{align*}
        Hence, from Lemma~\ref{lem:magnetic_Spp_decomposition}, it follows that if $j=1$
        \begin{equation*}
            S_p^p \lesssim \cC_p \min(\weight{b},\weight{\beta\hbar}) \weight{\beta\mu}^{\frac{p}{2}+2} \beta^{\frac{p}{2}-3} \,\hbar^p
        \end{equation*}
        if $j=2$
        \begin{align*}
            S_p^p &\lesssim \cC_p \weight{b}^p \weight{\beta\mu}^{\frac{p}{2}+2} \beta^{\frac{p}{2}-3} \,\hbar^p && \text{ if } \beta\hbar\weight{b} \leq 1
            \\
            S_p^p &\lesssim \cC_p \lt(1+(\beta\hbar\weight{b})^{p-2}\weight{\beta\hbar}\rt) \mu^{\frac{p}{2}+2} \weight{b} \hbar && \text{ if } \beta\hbar\weight{b} \geq 1
        \end{align*}
        and if $j=3$
        \begin{align*}
            S_p^p &\lesssim \cC_p \weight{\beta\mu}^{\frac{p}{2}+2} \beta^{\frac{p}{2}-3} \, \hbar^p && \text{ if } \beta\hbar \leq 1
            \\
            S_p^p &\lesssim \cC_p \lt(1 + \lt(\beta\hbar\rt)^{p-2}\rt) \mu^{\frac{p}{2}+2} \hbar && \text{ if } 1 \leq \beta\hbar \leq \weight{b}
           \\
            S_p^p &\lesssim \cC_p \lt(1+\tfrac{\lt(\beta\hbar\rt)^{p-1}}{\weight{b}}\rt) \mu^{\frac{p}{2}+2} \hbar && \text{ if } \weight{b} \leq  \beta\hbar \, .
        \end{align*}
        It follows that for any $j\in\set{1,2}$,
        \begin{align*}
            S_p^p &\lesssim \cC_p \weight{b}^p \weight{\beta\mu}^{\frac{p}{2}+2} \beta^{\frac{p}{2}-3} \,\hbar^p && \text{ if } \beta\hbar\weight{b} \leq 1
            \\
            S_p^p &\lesssim \cC_p \lt(1 + (\beta\hbar\weight{b})^{p-2}\rt) \mu^{\frac{p}{2}+2} \weight{b} \hbar && \text{ if } \weight{b}^{-1} \leq \beta\hbar \leq 1
            \\
            S_p^p &\lesssim \cC_p \lt((\beta\hbar)^p + \weight{b} + (\beta\hbar\weight{b})^{p-1}\rt) \mu^{\frac{p}{2}+2}  \hbar && \text{ if } 1 \leq \beta\hbar \leq \weight{b}
            \\
            S_p^p &\lesssim \cC_p \lt(\beta\hbar\rt)^{p-1} \lt(\weight{b} + \weight{b}^{p-1}\rt) \mu^{\frac{p}{2}+2}  \hbar && \text{ if } \weight{b} \leq \beta\hbar \, .
		\end{align*}
        Therefore, by the identities \eqref{eq:xp_vs_a} for the commutators with $x$ and $\opp$, and observing that $\cC_p^{1/p} \lesssim 1$ uniformly in $p\in[1,\infty)$, this gives
        \begin{align*}
            \Nrm{\com{\opp,\opgam}}{\L^p} &+ \weight{b}\Nrm{\com{x,\opgam}}{\L^p}
            \\
            &\lesssim \weight{b} \weight{\beta\mu}^{\frac{1}{2}+\frac{2}{p}} \beta^{\frac{1}{2}-\frac{3}{p}} \,\hbar && \text{ if } \beta\hbar\weight{b} \leq 1
            \\
             &\lesssim \lt(1 + (\beta\hbar\weight{b})^{1-\frac{2}{p}}\rt) \mu^{\frac{1}{2}+\frac{2}{p}} \weight{b}^\frac{1}{p} \hbar^\frac{1}{p} && \text{ if } \weight{b}^{-1} \leq \beta\hbar \leq 1
            \\
             &\lesssim \lt(\beta\hbar + \weight{b}^\frac{1}{p} + (\beta\hbar\weight{b})^\frac{1}{p'}\rt) \mu^{\frac{1}{2}+\frac{2}{p}} \, \hbar^\frac{1}{p} && \text{ if } 1 \leq \beta\hbar \leq \weight{b}
            \\
             &\lesssim \lt(\beta\hbar\rt)^\frac{1}{p'} \weight{b}^{\max(\frac{1}{p},\frac{1}{p'})}\mu^{\frac{1}{2}+\frac{2}{p}} \hbar^\frac{1}{p} && \text{ if } \weight{b} \leq \beta\hbar \, .
		\end{align*}
        The result then follows from the definition of the quantum gradients.
	\end{proof}

\bigskip	
\subsection*{Acknowledgments.} This work was partially supported by the National Key R\&D Program of China (Project No. 2024YFA1015500, J. Chong). J.L. and C.S. acknowledge the support of the Swiss National Science Foundation through the NCCR SwissMAP, as well as the support of the Swiss State Secretariat for Education, Research and Innovation through the ERC Starting Grant project P.530.1016 (AEQUA).


\bibliographystyle{abbrv} 
\bibliography{Vlasov}

\end{document}